\newtheorem{theorem}{Theorem}
\newtheorem{lemma}[theorem]{Lemma}
\newtheorem{corollary}[theorem]{Corollary}
\newtheorem*{claim*}{Claim}
\newtheorem*{remark*}{Remark}
\newtheorem*{definition*}{Definition}
\NewDocumentCommand{\lplabel}{o m}{%
  \makebox[0pt][r]{#2\hspace*{4em}}%
  \IfNoValueF{#1}
    {\def\@currentlabel{#2}\ltx@label{#1}}
}
\renewcommand\section{%
  \@startsection{section}{1}
                {\z@}%
                {-3.5ex \@plus -1ex \@minus -.2ex}%
                {2.3ex \@plus.2ex}%
                {\large\bfseries}
}
\renewcommand\subsection{%
  \@startsection{subsection}{2}
                {\z@}%
                {-3.25ex\@plus -1ex \@minus -.2ex}%
                {1sp}
                {\normalsize\bfseries}
}
\renewcommand\subsubsection{%
  \@startsection{subsubsection}{3}
                {\z@}%
                {-3.25ex\@plus -1ex \@minus -.2ex}%
                {1sp}
                {\normalfont\normalsize}
}
\title{{\LARGE\bf  Arboricity games: the core and the nucleolus}\thanks{The Version of Record of this article is published in \href{https://www.springer.com/journal/10107}{Mathematical Programming}, and is available online at \href{https://doi.org/10.1007/s10107-021-01752-w}{https://doi.org/10.1007/s10107-021-01752-w}.}}
\author{Han Xiao\thanks{Corresponding author.}~~and Qizhi Fang}
\affil{School of Mathematical Sciences\\Ocean University of China\\Qingdao, China\\\{hxiao, qfang\}@ouc.edu.cn}
\date{}
\begin{document}


\clearpage\maketitle
\thispagestyle{empty} 

\openup 1.2\jot

\begin{abstract}
The arboricity of a graph is the minimum number of forests required to cover all its edges.
In this paper, we examine arboricity from a game-theoretic perspective and investigate cost-sharing in the minimum forest cover problem.
We introduce the arboricity game as a cooperative cost game defined on a graph.
The players are edges, and the cost of each coalition is the arboricity of the subgraph induced by the coalition.
We study properties of the core and propose an efficient algorithm for computing the nucleolus when the core is not empty.
In order to compute the nucleolus in the core, we introduce the prime partition which is built on the densest subgraph lattice.
The prime partition decomposes the edge set of a graph into a partially ordered set defined from minimal densest minors and their invariant precedence relation.
Moreover, edges from the same partition always have the same value in a core allocation.
Consequently, when the core is not empty, the prime partition significantly reduces the number of variables and constraints required in the linear programs of Maschler's scheme and allows us to compute the nucleolus in polynomial time.
Besides, the prime partition provides a graph decomposition analogous to the celebrated core decomposition and the density-friendly decomposition, which may be of independent interest.

\hfill

\noindent\textbf{Keywords:} core, nucleolus, arboricity, density, graph decomposition.

\noindent\textbf{Mathematics Subject Classification:} 05C57, 91A12, 91A43, 91A46.



\end{abstract}

\newpage

\section{Introduction}
The arboricity of a graph is the minimum number of forests required to cover all edges of the graph.
Hence arboricity concerns forest cover, a special case of matroid covering. 
Besides, arboricity is a measure of graph density.
A graph with large arboricity always contains a dense subgraph.
By employing the nontrivial interplay between forest cover and graph density under the polyhedral framework, we examine arboricity from a game-theoretic perspective and introduce the so-called arboricity game.
Briefly, the arboricity game is a cooperative cost game defined on a graph,
where the players are edges and the cost of each coalition is the arboricity of the subgraph induced by the coalition.

A central question in cooperative game theory is to distribute the total cost to its participants.
Many solution concepts have been proposed for cost-sharing.
One solution concept is the core, which requires that no coalition benefits by breaking away from the grand coalition.
Another solution concept is the nucleolus, which is the unique solution that lexicographically maximizes the vector of non-decreasingly ordered excess.
Following the definition, Kopelowitz \cite{Kope67} and Maschler et al. \cite{MPS79} proposed a standard procedure to compute the nucleolus by solving a sequence of linear programs.
However, the size of these linear programs may be exponentially large due to the number of constraints corresponding to all possible coalitions.
Hence it is in general unclear how to apply this procedure.
The first polynomial algorithm for computing the nucleolus was proposed by Megiddo \cite{Megi78} for cooperative cost games defined on directed trees.
Later on, a number of polynomial algorithms were developed for,
e.g., bankruptcy games \cite{AM85}, matching games \cite{BKP12, CLZ12, KP03, KPT20, SR94}, standard tree games \cite{GMOZ96}, airport profit games \cite{BITZ06}, flow games \cite{DFS09}, voting games \cite{EP09}, spanning connectivity games \cite{ALPS09}, shortest path games \cite{BB19}, and network strength games \cite{BB20}.
On the negative side, NP-hardness results for computing the nucleolus were shown for, e.g., minimum spanning tree games \cite{FKK98}, threshold games \cite{EGGW07}, $b$-matching games \cite{KTZ21}, flow games and linear production games \cite{DFS09, FZCD02}.

The main contribution of this paper is twofold.
One contribution is concerned with the arboricity game, where cost-sharing in the minimum forest cover problem is considered.
We study properties of the core and propose an efficient algorithm for computing the nucleolus when the core is nonempty.
Our results are in the same spirit as \cite{BB20, KPT20}, but justifications are different.
The other contribution goes to the prime partition, which is a graph decomposition analogous to the celebrated core decomposition \cite{Seid83} and the density-friendly decomposition \cite{Tatt19, TG15}.
The prime partition is inspired by the principle partition of matroids \cite{CGHL92} and by the graph decompositions developed in \cite{ALPS09, BB20}.
For arboricity games, the prime partition dramatically reduces the size of linear programs involved in Maschler's scheme and enables us to compute the nucleolus in polynomial time.

The rest of this paper is organized as follows.
Section \ref{sec:Preliminaries} introduces relevant concepts.
Section \ref{sec:PolyhedralCombinatorics} reviews some polyhedral results on arboricity.
Section \ref{sec:Core} studies properties of the core.
Section \ref{sec:PrimePartition} is devoted to the prime partition, a graph decomposition of independent interest.
Section \ref{sec:Nucleolus} develops an efficient algorithm for computing the nucleolus.
Section \ref{sec:Conclusion} concludes this paper.

\section{Preliminaries}
\label{sec:Preliminaries}

A \emph{cooperative game} $\Gamma=(N,\gamma)$ consists of a player set $N$ and a characteristic function $\gamma:2^N\rightarrow \mathbb{R}$ with convention $\gamma (\emptyset)=0$.
The player set $N$ is called the \emph{grand coalition}.
Any subset $S$ of $N$ is called a \emph{coalition}.
Given a vector $\boldsymbol{x}\in \mathbb{R}^N$,
we use $x(S)$ to denote $\sum_{i\in S}x_i$ for any $S\subseteq N$.
A vector $\boldsymbol{x}\in \mathbb{R}^N_{\geq 0}$ is called an \emph{allocation} of $\Gamma$ if $x(N)=\gamma (N)$.
The \emph{excess} of a coalition $S$ at an allocation $\boldsymbol{x}$ is defined as $e(S,\boldsymbol{x})=\gamma (S)-x(S)$. 
The \emph{core} of $\Gamma$, denoted by $\mathcal{C}(\Gamma)$, is the set of allocations where all excesses are nonnegative, i.e.,
\begin{equation*}
\mathcal{C}(\Gamma)=\big\{\boldsymbol{x}\in\mathbb{R}^N_{\geq 0} : x(N)=\gamma (N);\, x(S)\leq \gamma (S), \, \forall S\subseteq N \big\}.
\end{equation*}
The \emph{excess vector} $\theta(\boldsymbol{x})$ of an allocation $\boldsymbol{x}$ is the $2^{\lvert N\rvert}-2$ dimensional vector whose components are
the non-trivial excesses $e(S,\boldsymbol{x})$ for $S\in 2^N\backslash \{\emptyset,N\}$ arranged in a non-decreasing order.
The \emph{nucleolus} \cite{Schm69} is the unique allocation $\boldsymbol{x}$ that lexicographically maximizes the excess vector $\theta(\boldsymbol{x})$.
When the core is nonempty, the nucleolus always exists and lies in the core.
Moreover, the nucleolus can always be computed with a standard procedure of Maschler et al. \cite{Kope67,MPS79} by recursively solving a sequence of linear programs.
\begin{alignat}{3}
\max\quad & \epsilon &{}& \label{eq:Nucleolus_LP1_0}\\
\lplabel[lp1]{$(LP_1)$}\mbox{s.t.}\quad
 &x(N) = \gamma (N), &\quad &  \label{eq:Nucleolus_LP1_1}\\
 &x(S)+\epsilon \leq \gamma (S), &\quad &\forall~ S\in 2^N\backslash \{\emptyset,N\}, \label{eq:Nucleolus_LP1_2}\\
 &x_i \geq 0, &\quad &\forall~ i\in N. \label{eq:Nucleolus_LP1_3}
\end{alignat}

To compute the nucleolus with Maschler's scheme, first solve linear program $LP_1$ to maximize the minimum excess among all non-trivial coalitions.
For any constant $\epsilon$, let $P_1(\epsilon)$ denote the set of vectors $\boldsymbol{x}\in \mathbb{R}^N$ such that $(\boldsymbol{x},\epsilon)$ satisfies \eqref{eq:Nucleolus_LP1_1}-\eqref{eq:Nucleolus_LP1_3}, i.e., $P_1(\epsilon)$ is the set of allocations whose minimum excess is no less than $\epsilon$.
It follows that $\mathcal{C}(\Gamma)=P_1(0)$.
Let $\epsilon_1$ be the optimal value of $LP_1$.
Then $P_1(\epsilon_1)$ is the set of optimal solutions of $LP_1$, which is also called the \emph{least core} of $\Gamma$.
Thus $\mathcal{C}(\Gamma)\not=\emptyset$ if and only if $\epsilon_1\geq 0$.
For any polyhedron $P\subseteq \mathbb{R}^N$, let $\text{Fix}(P)$ denote the set of coalitions \emph{fixed} by $P$, i.e.,
\begin{equation*}
\text{Fix}(P)=\big\{S\subseteq N : x(S)=y(S), ~\forall~\boldsymbol{x},\boldsymbol{y}\in P \big\}.
\end{equation*}
After solving linear program $LP_r$, let $\epsilon_r$ be the optimal value and $P_r(\epsilon_r)$ be the set of optimal solutions.
Then solve linear program $LP_{r+1}$ to maximize the minimum excess on coalitions that are not fixed by $P_r(\epsilon_r)$.
\begin{alignat}{3}
\max\quad & \epsilon &{}&\\
\lplabel[lp2]{$(LP_{r+1})$}\mbox{s.t.}\quad
 &x(S)+\epsilon \leq \gamma (S), &\quad& \forall~ S\not\in \text{Fix}\big(P_{r}(\epsilon_{r})\big), \label{eq:Nucleolus_LPr_1}\\
 &\boldsymbol{x}\in P_{r}(\epsilon_{r}) \label{eq:Nucleolus_LPr_2}.
\end{alignat}
Clearly, $\epsilon_{r+1}\geq \epsilon_{r}$ and $P_{r+1}(\epsilon_{r+1})\subseteq P_{r}(\epsilon_{r})$.
Moreover, the dimension of $P_{r+1} (\epsilon_{r+1})$ decreases before it collapses to zero.
Hence it takes up to $\lvert N\rvert$ rounds before $P_{r+1} (\epsilon_{r+1})$ becomes a singleton which is exactly the nucleolus.
However, the linear programs involved in Maschler's scheme are usually of exponential size.
Even if linear programs $LP_{1},\ldots,LP_{r}$ have been successfully solved, it may be intractable in polynomial time to determine all coalitions not fixed by $P_{r}(\epsilon_{r})$.
Hence it is in general unclear how to apply Maschler's scheme.
For arboricity games,
we show that the number of variables and constraints required in the successive linear programs of Maschler's scheme can be dramatically reduced,
and the nucleolus can always be determined efficiently on the second round of Maschler's scheme.

We assume that the readers have a moderate familiarity with graph theory.
But assumptions, notions and notations used in this paper should be clarified before proceeding.
Throughout this paper, we assume that all graphs are loopless but parallel edges are allowed.
We also assume that loops are always removed during edge contraction.
An \emph{image} is a vertex obtained from edge contraction.
A \emph{minor} is a graph obtained from repeated vertex deletion, edge deletion and edge contraction.
Let $G=(V,E)$ be a graph.
We use $c(G)$ to denote the number of components of $G$.
We use $n(G)$ and $m(G)$ to denote the number of vertices and edges in $G$ respectively.
We write $n$ for $n(G)$ and $m$ for $m(G)$ when no ambiguity occurs.
Let $U\subseteq V$ be a set of vertices.
We write $G-U$ for the graph obtained from $G$ by deleting all vertices in $U$.
Let $F\subseteq E$ be a set of edges.
We write $G\slash F$ for the graph obtained from $G$ by contracting all edges in $F$.
Let $H$ be a subgraph of $G$.
We write $G-H$ for $G-V(H)$ and write $G\slash H$ for $G\slash E(H)$.
If $X$ and $Y$ are two sets of vertices,
we use $(X,Y)$ and $m(X,Y)$ to denote the set and the number of crossing edges between $X$ and $Y$ respectively.
If $X$ and $Y$ are two subgraphs, we write $(X,Y)$ for $\big(V(X),V(Y)\big)$.

\section{Polyhedral results on arboricity}
\label{sec:PolyhedralCombinatorics}
This section reviews some polyhedral results on arboricity.
For more details about polyhedral combinatorics, we refer to \cite{Schr86, Schr03}.

Let $G=(V,E)$ be a graph.
A \emph{forest cover} of $G$ is a set of forests that covers all edges of $G$.
The \emph{arboricity} of $G$, denoted by $a(G)$, is the minimum size of forest covers of $G$.
The arboricity measures how dense a graph is.
The \emph{density} of $G$, denoted by $g(G)$, is the value of $\frac{m(G)}{n(G)-c(G)}$.
Hence $g(G)=\frac{m(G)}{n(G)-1}$ if $G$ is connected.
By convention, the density of a single vertex is zero.
Nash-Williams \cite{NW64} showed that the arboricity of a graph is lower bounded by the maximum density of subgraphs.

\begin{theorem}[Nash-Williams \cite{NW64}]
\label{thm:NW}
The edges of a graph $G$ can be covered by $k$ forests if and only if $\max_{H\subseteq G} g(H) \leq k$.
\end{theorem}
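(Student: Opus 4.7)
The plan is to prove both directions separately, with the forward direction being essentially a counting exercise and the reverse direction reducing to Edmonds' matroid partition theorem applied to copies of the graphic matroid of $G$.

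First I would dispatch the necessity direction. Suppose $\{F_1,\ldots,F_k\}$ is a forest cover of $G$, and let $H$ be any subgraph. For each $i$, the restriction $F_i \cap E(H)$ is acyclic in $H$, hence forms a subforest of $H$ and therefore has at most $n(H)-c(H)$ edges (since a spanning forest of $H$ has exactly $n(H)-c(H)$ edges and any acyclic edge set in $H$ extends to one). Since the forests cover $E(H)$, summing gives $m(H)\leq k(n(H)-c(H))$, i.e.\ $g(H)\leq k$, establishing $\max_{H\subseteq G}g(H)\leq k$.

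For the sufficiency direction, I would invoke Edmonds' matroid partition theorem applied to $k$ copies of the cycle matroid $M(G)$. This theorem says that $E(G)$ can be partitioned into $k$ independent sets of $M(G)$ — equivalently, $k$ forests — if and only if $|A|\leq k\cdot r_{M(G)}(A)$ holds for every $A\subseteq E$, where $r_{M(G)}$ denotes the graphic rank. For any such $A$, let $H$ be the subgraph with vertex set $V(A)$ and edge set $A$; then $r_{M(G)}(A)=n(H)-c(H)$, and the Edmonds inequality $|A|\leq k\cdot r_{M(G)}(A)$ becomes $m(H)\leq k(n(H)-c(H))$, i.e.\ $g(H)\leq k$. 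Conversely every subgraph of $G$ is (up to isolated vertices that affect neither $n(H)-c(H)$ nor $m(H)$) of this form, so the hypothesis $\max_{H\subseteq G}g(H)\leq k$ precisely supplies Edmonds' condition, yielding the desired partition into $k$ forests (and partition is stronger than cover).

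The genuine content sits entirely in Edmonds' matroid partition theorem, which is being cited as a black box from the polyhedral background referenced at the start of the section. Thus the only real obstacle is the bookkeeping step of verifying $r_{M(G)}(A)=n(H)-c(H)$ when $H=(V(A),A)$, and observing that restricting attention to $V(A)$ rather than all of $V(G)$ does not change the count (since isolated vertices of $(V(G),A)$ contribute equally to $|V|$ and $c(V,A)$). With that identification the min-max equivalence of Edmonds' theorem matches the density condition term-for-term, and the theorem follows.
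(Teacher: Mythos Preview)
The paper does not actually prove this theorem; it is stated with attribution to Nash-Williams and used as a black box. Your proposal is a correct and standard proof: the necessity direction is the obvious counting argument, and the sufficiency direction is exactly the specialization of Edmonds' matroid partition theorem to $k$ copies of the cycle matroid, with the rank identification $r_{M(G)}(A)=n(H)-c(H)$ handled correctly (including the remark that isolated vertices contribute equally to vertex count and component count, so passing between $(V(G),A)$ and $(V(A),A)$ is harmless). There is nothing to compare against in the paper itself, but your argument is complete modulo the cited Edmonds theorem.
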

The value of $\max_{H\subseteq G} g(H)$ is called the \emph{fractional arboricity} of $G$ and denoted by $a_f(G)$.
Theorem \ref{thm:NW} implies that $a(G)=\lceil a_f(G)\rceil$.
Notice that the fractional arboricity is necessarily achieved at connected subgraphs.
It follows that the fractional arboricity of $G$ can be computed by
\begin{equation}\label{eq:SimpleFractionalArboricity}
\max_{H\subseteq G} \frac{m(H)}{n(H)-1}.
\end{equation}

Let $\mathcal{F}$ denote the set of forests in $G$.
Clearly, $\mathcal{F}$ makes a graphic matroid with ground set $E$,
and every forest cover of $G$ is essentially an independent set cover of the graphic matroid.
Additionally, the definition for density and (fractional) arboricity in the forest cover problem respects the conventional definition in the matroid covering problem \cite{Edmo65}.
Hence the forest cover problem is a special case of the matroid covering problem.
Notice that the fractional arboricity of a matroid is always equal to the fractional cover number of independent sets \cite{SU97, Schr03}.
It follows that the value of \eqref{eq:SimpleFractionalArboricity} is equal to the optimal value of linear program \eqref{eq:ForestCover0}-\eqref{eq:ForestCover2}
\begin{alignat}{4}
\min\quad & {\displaystyle\sum_{F\in \mathcal{F}} z_F}  &{}&\label{eq:ForestCover0}\\
\mbox{s.t.}\quad
&\sum_{F:e\in F} z_F \geq 1, &\qquad &\forall ~ e &\in E, \label{eq:ForestCover1}\\
&z_F \geq 0, &\qquad &\forall ~ F &\in \mathcal{F}. \label{eq:ForestCover2}
\end{alignat}
and the optimal value of its dual \eqref{eq:ForestCoverDual0}-\eqref{eq:ForestCoverDual2}.
\begin{alignat}{4}
\max\quad & {\displaystyle\sum_{e\in E} x_e}  &{}&\label{eq:ForestCoverDual0}\\
\mbox{s.t.}\quad
&\sum_{e:e\in F} x_e \leq 1, &\qquad& \forall ~ F &\in \mathcal{F}, \label{eq:ForestCoverDual1}\\
&x_e \geq 0, &\qquad& \forall ~ e &\in E. \label{eq:ForestCoverDual2}
\end{alignat}
Notice that \eqref{eq:SimpleFractionalArboricity} can be reformulated as
\begin{equation}\label{eq:WeightedFractionalArboricity}
\max_{H\subseteq G} \boldsymbol{1} \cdot \frac{\boldsymbol{\lambda}^{H}}{n(H)-1},
\end{equation}
where $\boldsymbol{1}\in \mathbb{Z}^E$ is an all-one vector and $\boldsymbol{\lambda}^H \in \mathbb{Z}^E$ is the incidence vector of $E(H)$.
Moreover, $\frac{\boldsymbol{\lambda}^{H}}{n(H)-1}$ satisfies \eqref{eq:ForestCoverDual1} and \eqref{eq:ForestCoverDual2} for any $H\subseteq G$.
Consequently, optimal solutions of \eqref{eq:ForestCoverDual0}-\eqref{eq:ForestCoverDual2} are among the vectors $\frac{\boldsymbol{\lambda}^H}{n(H)-1}$ in \eqref{eq:WeightedFractionalArboricity},
which leads to the following corollary that will be used in Section \ref{sec:Core}.
In the remainder of this paper, a \emph{densest subgraph} always refers to a \emph{connected} subgraph with the maximum density,
and a \emph{densest minor} always refers to a \emph{connected} minor the density of which is equal to the fractional arboricity of the graph.

\begin{lemma}
\label{thm:DensestSubgraph_IncidenceVector}
The set of optimal solutions of \eqref{eq:ForestCoverDual0}-\eqref{eq:ForestCoverDual2} is the convex hull of the vectors $\frac{\boldsymbol{\lambda}^H}{n(H)-1}$
for every densest subgraph $H$ of $G$.
\end{lemma}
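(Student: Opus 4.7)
Plan. I would prove the two set inclusions separately.

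The inclusion $\supseteq$ is direct. For a densest subgraph $H$ (connected, by the stated convention), every forest $F\subseteq G$ satisfies $|F\cap E(H)|\le n(H)-1$ because $F\cap E(H)$ is a forest in $H$, so $\frac{\boldsymbol{\lambda}^H}{n(H)-1}$ obeys constraint~\eqref{eq:ForestCoverDual1}; nonnegativity is clear; and the objective value $m(H)/(n(H)-1)=g(H)=a_f(G)$ coincides with the LP optimum established in the preceding duality discussion. Hence the convex hull of these vectors lies in the optimal face $P^*$.

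For $\subseteq$, the key tool is the auxiliary inequality $x(E(H))\le g(H)$ for every feasible $\boldsymbol{x}$ and every connected $H\subseteq G$. I would prove it by applying Theorem~\ref{thm:NW} to $H$ to obtain a fractional spanning-tree cover of $H$ with total weight $g(H)$ and per-edge coverage at least $1$, then averaging the forest constraints $x(T)\le 1$ against these weights. In particular, at an optimal $\boldsymbol{x}^*$, equality $x^*(E(H))=g(H)$ is equivalent to $\boldsymbol{x}^*$ vanishing off $E(H)$. Since $P^*$ is a (bounded) face of the LP polytope it is itself a polytope, so by Minkowski--Weyl $P^*=\mathrm{conv}(\mathrm{ext}\,P^*)$, and it suffices to show each extreme $\boldsymbol{x}^*\in P^*$ has the form $\frac{\boldsymbol{\lambda}^H}{n(H)-1}$ for some densest $H$. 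Setting $H^*=G[\mathrm{supp}(\boldsymbol{x}^*)]$, the auxiliary inequality forces $g(H^*)=a_f(G)$, and a perturbation argument (scaling mass between components along directions that preserve every tight forest constraint) shows extremality would be violated if $H^*$ were disconnected. The tight spanning-tree constraints $x^*(T)=1$ for all spanning trees $T$ of $H^*$, together with the rank of the spanning-tree indicator vectors in the graphic matroid of $H^*$, then pin $\boldsymbol{x}^*$ uniformly to $1/(n(H^*)-1)$ on $E(H^*)$.

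The main obstacle I anticipate is the uniformity step when $H^*$ has bridges or multiple 2-edge-connected blocks; the spanning-tree indicators of $H^*$ then fail to span $\mathbb{R}^{E(H^*)}$, so uniformity is not forced directly by a linear-algebra count. Overcoming this requires a further decomposition of $H^*$ into its 2-edge-connected densest pieces (each a densest subgraph in its own right) and an inductive argument showing extremity concentrates $\boldsymbol{x}^*$ on a single such piece; the general optimal solution is then realized as a convex combination obtained by mixing over the densest pieces of $G$.
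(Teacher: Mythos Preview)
Your inclusion $\supseteq$ is fine and mirrors the paper's reasoning. The genuine gap is the auxiliary inequality you build $\subseteq$ on: the claim that $x(E(H))\le g(H)$ for every feasible $\boldsymbol{x}$ and every connected $H\subseteq G$ is false. Take $G$ to be a triangle with one pendant edge; then $g(G)=4/3$, yet the feasible (indeed optimal) vector $\frac{\boldsymbol{\lambda}^{\triangle}}{2}$ has $x(E(G))=3/2>4/3$. The error is in your appeal to Theorem~\ref{thm:NW}: a minimum fractional spanning-tree cover of $H$ has total weight $a_f(H)=\max_{H'\subseteq H}g(H')$, not $g(H)$, and these coincide only when $H$ itself is densest among its subgraphs. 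The correct bound $x(E(H))\le a_f(H)$, applied to $H^*=G[\mathrm{supp}(\boldsymbol{x}^*)]$, yields only $a_f(H^*)=a_f(G)$, i.e.\ that $H^*$ \emph{contains} a densest subgraph---not that $H^*$ is one.

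This undercuts the remainder of the plan. Without $g(H^*)=a_f(G)$ you cannot conclude that the spanning-tree constraints of $H^*$ are tight, so the linear-algebra argument for uniformity does not get off the ground; and the fallback decomposition into $2$-edge-connected densest pieces presupposes precisely what needs to be shown. A repair would require an independent argument that an extreme optimal $\boldsymbol{x}^*$ is supported on a single densest subgraph, which is essentially the heart of the lemma. For comparison, the paper does not give a self-contained proof of this direction either: it derives the lemma as a corollary of the reformulation~\eqref{eq:WeightedFractionalArboricity} together with the matroid-covering duality cited from \cite{SU97,Schr03}, effectively treating the identification of the optimal vertices as known.
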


Lemma \ref{thm:DensestSubgraph_IncidenceVector} suggests that the set of optimal solutions of \eqref{eq:ForestCoverDual0}-\eqref{eq:ForestCoverDual2} is a convex polytope where every extreme point corresponds to a densest subgraph.
By polyhedral theory \cite{Schr86}, all faces of a convex polytope form a partially ordered set under inclusion.
It turns out that all densest subgraphs also form a partially ordered set under inclusion,
which suggests that faces of the optimal solution polytope of \eqref{eq:ForestCoverDual0}-\eqref{eq:ForestCoverDual2} may be related to densest subgraphs.
It is this observation that leads to the graph decomposition in Section \ref{sec:PrimePartition}.

\section{The core and its properties}
\label{sec:Core}
Throughout this paper, we always assume that the underlying graph of arboricity games is connected.
Let $\Gamma_G=(N,\gamma)$ denote the \emph{arboricity game} defined on a graph $G=(V,E)$, where $N=E$ and $\gamma (S)=a(G[S])$ for $S\subseteq N$.
We start with an alternative characterization for the core.

\begin{lemma}
\label{thm:Core}
Let $\Gamma_G=(N,\gamma)$ be an arboricity game and $\mathcal{T}$ be the set of spanning trees in $G$.
Then
\begin{equation}
\label{eq:Core}
  \mathcal{C}(\Gamma_G)=\big\{\boldsymbol{x}\in \mathbb{R}^E_{\geq 0} : x(E)=\gamma (E);\, x(T)\leq 1,\, \forall \:T\in \mathcal{T} \big\}.
\end{equation}
\end{lemma}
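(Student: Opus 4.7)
My plan is to establish \eqref{eq:Core} by proving the two inclusions between $\mathcal{C}(\Gamma_G)$ and the polyhedron on the right-hand side. The inclusion $\subseteq$ is immediate from the definition of the core: since $G$ is connected, every spanning tree $T\in\mathcal{T}$ is itself a forest, so $\gamma(T)=a(G[T])=1$, and hence the core inequality $x(T)\leq\gamma(T)$ specialises to $x(T)\leq 1$; nonnegativity and the efficiency constraint $x(E)=\gamma(E)$ carry over unchanged.

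For the reverse inclusion $\supseteq$, I would fix a vector $\boldsymbol{x}$ satisfying the three right-hand conditions and an arbitrary coalition $S\subseteq E$, and then verify that $x(S)\leq\gamma(S)$. Set $k=\gamma(S)=a(G[S])$ and pick an optimal forest cover $F_1,\dots,F_k$ of $G[S]$. By discarding each edge from all but one of the forests that contain it, I may further assume $F_1,\dots,F_k$ partition $S$. Each $F_i$ is a forest in the ambient connected graph $G$, so it extends to a spanning tree $T_i\in\mathcal{T}$ by the standard completion available in any connected graph. Using $\boldsymbol{x}\geq\boldsymbol{0}$ to drop the edges in $T_i\setminus F_i$, I get $x(F_i)\leq x(T_i)\leq 1$, and summing over $i$ yields
\begin{equation*}
x(S)=\sum_{i=1}^{k} x(F_i)\leq k=\gamma(S),
\end{equation*}
as required.

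The only nonroutine step is the reverse inclusion, and its success hinges on two standard graphic-matroid facts: a minimum forest cover of $G[S]$ can be refined into an edge-disjoint one, and every forest of a connected graph extends to a spanning tree. Neither poses a serious obstacle; the real content of the lemma is the combinatorial upshot, namely that the exponentially many core inequalities $x(S)\leq a(G[S])$ collapse to the single family $x(T)\leq 1$ ranging over spanning trees, which ties in naturally with the polyhedral perspective developed in Section \ref{sec:PolyhedralCombinatorics}.
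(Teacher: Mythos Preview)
Your proof is correct and follows essentially the same route as the paper: both directions are argued exactly as you describe, with the reverse inclusion obtained by covering $G[S]$ with $k=\gamma(S)$ forests, extending each to a spanning tree of $G$, and summing the inequalities $x(F_i)\leq x(T_i)\leq 1$. Your explicit refinement to an edge-disjoint cover is a detail the paper leaves implicit when it writes $x(S)=\sum_i x(F_i)$, but otherwise the arguments coincide.
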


\begin{proof}
Denote by $\mathcal{C}'(\Gamma_G)$ the right hand of \eqref{eq:Core}.
We first show that $ \mathcal{C}(\Gamma_G)\subseteq \mathcal{C}'(\Gamma_G)$.
Let $\boldsymbol{x}\in \mathcal{C}(\Gamma_G)$.
For any $T\in \mathcal{T}$, we have $x(T)\leq \gamma(T)=1$.
It follows that $\boldsymbol{x}\in \mathcal{C}'(\Gamma_G)$.
Now we show that $\mathcal{C}'(\Gamma_G)\subseteq \mathcal{C}(\Gamma_G)$.
Let $\boldsymbol{x}\in \mathcal{C}'(\Gamma_G)$ and $S\in 2^N\backslash \{\emptyset\}$.
Assume that $\gamma (S)=k$ and $G[S]$ can be covered by $k$ forests $F_1,\ldots,F_k$.
Let $T_i$ be a spanning tree containing $F_i$.
It follows that
\begin{equation*}
x(S)=\sum_{i=1}^{k} x(F_i)\leq \sum_{i=1}^{k} x(T_i)\leq k=\gamma (S),
\end{equation*}
which implies that $\boldsymbol{x}\in \mathcal{C}(\Gamma_G)$.
\end{proof}

A necessary and sufficient condition for the core nonemptiness follows immediately.
\begin{theorem}
\label{thm:CoreNonempty}
Let $\Gamma_G=(N,\gamma)$ be an arboricity game.
Then $\mathcal{C}(\Gamma_G)\not=\emptyset$ if and only if $a_f(G)=a(G)$.
\end{theorem}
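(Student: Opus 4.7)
The plan is to combine the spanning-tree characterization of the core from Lemma~\ref{thm:Core} with the LP duality set up in Section~\ref{sec:PolyhedralCombinatorics}. Observe that the polytope
\[
Q = \bigl\{\boldsymbol{x}\in\mathbb{R}^E_{\geq 0} : x(T)\leq 1,\,\forall\,T\in\mathcal{T}\bigr\}
\]
appearing in Lemma~\ref{thm:Core} is essentially the feasible region of the dual LP \eqref{eq:ForestCoverDual0}--\eqref{eq:ForestCoverDual2}: because every forest $F$ extends to a spanning tree $T\supseteq F$ and $\boldsymbol{x}\geq \boldsymbol{0}$, the inequality $x(F)\leq 1$ is implied by $x(T)\leq 1$, and conversely spanning trees themselves are forests. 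So $Q$ coincides with the feasible region of \eqref{eq:ForestCoverDual0}--\eqref{eq:ForestCoverDual2}, and the LP/fractional-arboricity identification in Section~\ref{sec:PolyhedralCombinatorics} yields
\[
\max_{\boldsymbol{x}\in Q}\, x(E) \;=\; a_f(G).
\]

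With this in hand, $\mathcal{C}(\Gamma_G)$ is obtained from $Q$ by imposing the single additional equality $x(E)=\gamma(E)=a(G)$. Therefore $\mathcal{C}(\Gamma_G)\neq\emptyset$ if and only if the maximum of $x(E)$ over $Q$ is at least $a(G)$, i.e.\ $a_f(G)\geq a(G)$. Since $a(G)=\lceil a_f(G)\rceil$ by Theorem~\ref{thm:NW}, we always have $a(G)\geq a_f(G)$, so $a_f(G)\geq a(G)$ is equivalent to $a_f(G)=a(G)$, which is the desired characterization.

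There is no real obstacle here once Lemma~\ref{thm:Core} is available; the only point needing a line of justification is the equivalence between the spanning-tree constraints and the forest constraints (used to identify $Q$ with the dual LP feasible region), which follows from nonnegativity and the fact that every forest is contained in a spanning tree since $G$ is connected. The cleanest write-up is to split into two implications: for sufficiency, exhibit $\boldsymbol{x}\in Q$ attaining $x(E)=a_f(G)=a(G)$ and note it lies in $\mathcal{C}(\Gamma_G)$ by Lemma~\ref{thm:Core}; for necessity, any $\boldsymbol{x}\in\mathcal{C}(\Gamma_G)$ lies in $Q$, so $a(G)=x(E)\leq a_f(G)\leq a(G)$, forcing equality.
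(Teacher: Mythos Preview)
Your proposal is correct and follows essentially the same argument as the paper: both directions hinge on identifying the spanning-tree polytope $Q$ in Lemma~\ref{thm:Core} with the feasible region of the dual LP \eqref{eq:ForestCoverDual0}--\eqref{eq:ForestCoverDual2} (via the forest/spanning-tree equivalence under nonnegativity and connectedness of $G$), then using that the dual optimum equals $a_f(G)$ together with $a_f(G)\leq a(G)$. The paper's write-up is exactly your final paragraph's two implications, so there is no substantive difference in approach.
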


\begin{proof}
Let $\boldsymbol{x}$ be an optimal solution of \eqref{eq:ForestCoverDual0}-\eqref{eq:ForestCoverDual2}.
It follows that $x(E) = a_f(G)\leq a(G)=\gamma (E)$.
Since $\mathcal{T}\subseteq \mathcal{F}$, Lemma \ref{thm:Core} implies that $\boldsymbol{x}\in \mathcal{C}(\Gamma_G)$ if $a_f(G)=a(G)$.

Let $\boldsymbol{x}\in \mathcal{C}(\Gamma_G)$.
Since every forest is a subgraph of a spanning tree, Lemma \ref{thm:Core} implies that $\boldsymbol{x}$ is a feasible solution of \eqref{eq:ForestCoverDual0}-\eqref{eq:ForestCoverDual2}.
It follows that $x(E)\leq a_f(G)\leq a(G)=\gamma (E)$.
Hence $x(E)=\gamma (E)$ implies $a_f(G)=a(G)$.
\end{proof}

Since the arboricity game is a special case of the covering game,
Theorem \ref{thm:CoreNonempty} respects the universal characterization for the core nonemptiness of covering games \cite{DIN99}.
The corollary below also follows from the results for covering games.

\begin{corollary}
\label{thm:CoreCorollary}
Let $\Gamma_G=(N,\gamma)$ be an arboricity game.
Then the nonemptiness of $\mathcal{C}(\Gamma_G)$ can be determined in polynomial time.
Moreover, we can decide in polynomial time if a vector belongs to $\mathcal{C} (\Gamma_G)$, and if not, find a separating hyperplane.
\end{corollary}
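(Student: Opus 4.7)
The plan is to reduce both assertions to standard polynomial-time subroutines, using the characterizations already established in Lemma \ref{thm:Core} and Theorem \ref{thm:CoreNonempty}. The paper has already flagged that the result is implicit in the theory of covering games \cite{DIN99}, but a direct proof via these two characterizations is short and concrete.

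For nonemptiness, Theorem \ref{thm:CoreNonempty} reduces the question to deciding whether $a_f(G)=a(G)$. The fractional arboricity $a_f(G)=\max_{H\subseteq G} m(H)/(n(H)-1)$ is a densest subgraph problem, solvable in polynomial time via parametric max-flow (equivalently via matroid partitioning, since $a_f(G)$ coincides with the optimum of the linear program \eqref{eq:ForestCover0}-\eqref{eq:ForestCover2}). By Nash-Williams' theorem $a(G)=\lceil a_f(G)\rceil$ is available at no extra cost, so the comparison is immediate.

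For membership and separation, I would exploit the compact description of $\mathcal{C}(\Gamma_G)$ in Lemma \ref{thm:Core}. Given $\boldsymbol{x}\in \mathbb{R}^E$, the conditions $\boldsymbol{x}\geq \boldsymbol{0}$ and $x(E)=\gamma(E)=a(G)$ are checked in $O(m)$ time, and any violation among them directly yields a separating hyperplane. The remaining family of constraints, $x(T)\leq 1$ for $T\in \mathcal{T}$, is potentially exponential, but admits a polynomial-time separation oracle: compute a maximum-weight spanning tree $T^{\ast}$ with respect to weights $x_e$ using Kruskal's algorithm. If $x(T^{\ast})\leq 1$, then every spanning tree satisfies the constraint and $\boldsymbol{x}\in \mathcal{C}(\Gamma_G)$; otherwise $x(T^{\ast})>1$ and the inequality $x(T^{\ast})\leq 1$ separates $\boldsymbol{x}$ from $\mathcal{C}(\Gamma_G)$.

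No real obstacle arises: the argument is essentially a dictionary between two classical polynomial-time algorithms (densest subgraph and maximum-weight spanning tree) and the two polyhedral descriptions already in hand. The only point that merits explicit mention is that the hyperplane returned in the separation step is a genuine core inequality; this is immediate since $\gamma(T^{\ast})=a(G[T^{\ast}])=1$ for any spanning tree $T^{\ast}$, so $x(T^{\ast})\leq 1$ appears verbatim in the definition of $\mathcal{C}(\Gamma_G)$ via Lemma \ref{thm:Core}.
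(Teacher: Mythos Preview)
Your proposal is correct. The paper does not actually supply a proof of this corollary; it merely remarks that the statement follows from the general theory of covering games in \cite{DIN99}. Your approach is therefore not a reproduction of the paper's argument but a direct, self-contained one built from the paper's own Lemma~\ref{thm:Core} and Theorem~\ref{thm:CoreNonempty}. This buys explicitness: you name the concrete polynomial-time subroutines (computing $a_f(G)$ as a densest-subgraph/matroid-partitioning problem, and a maximum-weight spanning tree for the separation oracle over the constraints $x(T)\le 1$), whereas the paper defers entirely to an external reference. Both routes are valid; yours is more informative within the paper's internal framework and, as you note, requires nothing beyond what has already been established in Section~\ref{sec:Core}.
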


Theorem \ref{thm:CoreNonempty} implies that, when the core is nonempty, a vector belongs to the core if and only if it is an optimal solution of \eqref{eq:ForestCoverDual0}-\eqref{eq:ForestCoverDual2}.
Lemma \ref{thm:DensestSubgraph_IncidenceVector} suggests that the nonempty core can be characterized by the incidence vector of densest subgraphs.
Thus we have the following corollary.

\begin{corollary}
\label{thm:CoreCoverHull}
Let $\Gamma_G=(N,\gamma)$ be an arboricity game with a nonempty core.
Then $\mathcal{C}(\Gamma_G)$ is the convex hull of the vectors
$\frac{\boldsymbol{\lambda}^H}{n(H)-1}$
for every densest subgraph $H$ of $G$.
\end{corollary}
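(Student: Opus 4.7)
The plan is to identify the core $\mathcal{C}(\Gamma_G)$ with the set of optimal solutions of the dual linear program \eqref{eq:ForestCoverDual0}-\eqref{eq:ForestCoverDual2}, and then invoke Lemma \ref{thm:DensestSubgraph_IncidenceVector} to conclude. Since Theorem \ref{thm:CoreNonempty} gives $a_f(G)=a(G)=\gamma(E)$ under the standing hypothesis, both the core and the optimal face of the LP live at the same ``total value'' $\gamma(E)$, which is what makes the identification possible.

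First I would show $\mathcal{C}(\Gamma_G)$ is contained in the optimal set of \eqref{eq:ForestCoverDual0}-\eqref{eq:ForestCoverDual2}. Take $\boldsymbol{x}\in\mathcal{C}(\Gamma_G)$. By Lemma \ref{thm:Core}, $x(T)\leq 1$ for every spanning tree $T$, and since every forest $F$ is contained in some spanning tree and $\boldsymbol{x}\geq\boldsymbol{0}$, this yields $x(F)\leq 1$ for every $F\in\mathcal{F}$. Thus $\boldsymbol{x}$ is feasible for \eqref{eq:ForestCoverDual0}-\eqref{eq:ForestCoverDual2}, and the objective value $x(E)=\gamma(E)=a(G)=a_f(G)$ matches the LP optimum, so $\boldsymbol{x}$ is optimal.

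For the reverse inclusion, let $\boldsymbol{x}$ be any optimum of \eqref{eq:ForestCoverDual0}-\eqref{eq:ForestCoverDual2}. Then $\boldsymbol{x}\geq\boldsymbol{0}$, $x(E)=a_f(G)=a(G)=\gamma(E)$, and in particular $x(T)\leq 1$ for every spanning tree $T\in\mathcal{T}\subseteq\mathcal{F}$. Lemma \ref{thm:Core} then delivers $\boldsymbol{x}\in\mathcal{C}(\Gamma_G)$. Combining the two directions, $\mathcal{C}(\Gamma_G)$ coincides with the optimal solution set of \eqref{eq:ForestCoverDual0}-\eqref{eq:ForestCoverDual2}, and Lemma \ref{thm:DensestSubgraph_IncidenceVector} immediately identifies that set with the convex hull of the vectors $\frac{\boldsymbol{\lambda}^H}{n(H)-1}$ taken over densest subgraphs $H$ of $G$.

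There is no real obstacle here; the argument is essentially bookkeeping that reconciles the two descriptions of the core (via spanning trees in Lemma \ref{thm:Core}) and of the LP optimum (via forests in \eqref{eq:ForestCoverDual1}). The only subtlety worth being explicit about is that the forest-to-spanning-tree passage uses the nonnegativity of $\boldsymbol{x}$, which is built into both the core and the LP feasible region; everything else is a direct appeal to the three earlier results.
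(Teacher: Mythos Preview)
Your proposal is correct and follows essentially the same approach as the paper: the paper states the corollary as an immediate consequence of identifying $\mathcal{C}(\Gamma_G)$ with the optimal solution set of \eqref{eq:ForestCoverDual0}--\eqref{eq:ForestCoverDual2} (which is implicit in the proof of Theorem~\ref{thm:CoreNonempty}) and then invoking Lemma~\ref{thm:DensestSubgraph_IncidenceVector}. You have simply spelled out the two inclusions explicitly, including the forest-to-spanning-tree step via nonnegativity, which matches the paper's reasoning.
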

Corollary \ref{thm:CoreCoverHull} implies that every core allocation is a convex combination of vectors, each of which is associated with a densest subgraph.
For edges not in any densest subgraph, we have the following corollary.
\begin{corollary}
\label{thm:NonPrimeSet}
Let $\Gamma_G=(N,\gamma)$ be an arboricity game with a nonempty core.
For any $\boldsymbol{x}\in \mathcal{C}(\Gamma_G)$,
$x_e = 0$ if edge $e$ does not belong to any densest subgraph of $G$.
\end{corollary}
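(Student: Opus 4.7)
The plan is to derive this as an immediate consequence of Corollary \ref{thm:CoreCoverHull}. Since the core is assumed nonempty, that corollary tells us every $\boldsymbol{x}\in\mathcal{C}(\Gamma_G)$ admits a representation
\[
\boldsymbol{x} \;=\; \sum_{H} \mu_H \cdot \frac{\boldsymbol{\lambda}^{H}}{n(H)-1},
\]
where the sum ranges over the densest subgraphs $H$ of $G$, and the coefficients $\mu_H\geq 0$ satisfy $\sum_H \mu_H = 1$.

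I would then read off the $e$-coordinate of both sides. By definition, $\boldsymbol{\lambda}^H$ is the incidence vector of $E(H)$, so the $e$-th component $\lambda^H_e$ equals $1$ if $e\in E(H)$ and $0$ otherwise. If $e$ fails to lie in any densest subgraph of $G$, then $\lambda^H_e = 0$ for every densest subgraph $H$ appearing in the convex combination, hence
\[
x_e \;=\; \sum_H \mu_H \cdot \frac{\lambda^H_e}{n(H)-1} \;=\; 0.
\]

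There is essentially no obstacle: the entire content of the corollary is carried by the extreme-point description provided by Corollary \ref{thm:CoreCoverHull}, which in turn rests on Lemma \ref{thm:DensestSubgraph_IncidenceVector} and Theorem \ref{thm:CoreNonempty}. The only thing to be mindful of is that the denominators $n(H)-1$ are positive (each densest subgraph is connected and, by the standing assumption on $G$, nontrivial, so $n(H)\geq 2$), so the convex combination is well defined and the argument goes through verbatim.
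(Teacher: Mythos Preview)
Your proof is correct and follows exactly the approach the paper intends: the corollary is stated immediately after Corollary~\ref{thm:CoreCoverHull} with the remark that every core allocation is a convex combination of the vectors $\frac{\boldsymbol{\lambda}^H}{n(H)-1}$, and no further argument is given beyond that observation. Your write-up simply makes explicit the one-line computation the paper leaves implicit.
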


It is well known that the nucleolus lies in the core when the core is nonempty.
To compute the nucleolus in the core, we need a better understanding of the core polytope.
Corollary \ref{thm:CoreCoverHull} states that every extreme point of the core polytope is associated with a densest subgraph, which suggests that faces of the core polytope may also be associated with densest subgraphs.
Inspired by the face lattice of convex polytopes \cite{Schr86}, we introduce a graph decomposition built on densest subgraphs, which is crucial for computing the nucleolus in the core of arboricity games.

\section{The prime partition}
\label{sec:PrimePartition}
This section is self-contained and devoted to the prime partition, a graph decomposition analogous to the core decomposition \cite{Seid83} and the density-friendly decomposition \cite{Tatt19, TG15}.
The prime partition is inspired by the face lattice of convex polytopes and built on the densest subgraph lattice where the edge set intersection of any two densest subgraphs is either empty or inducing a densest subgraph again.
By utilizing the uncrossing technique \cite{LRS11} to a chain of subgraphs with the maximum density, we introduce the prime partition.
The \emph{prime partition} decomposes the edge set of a graph into a \emph{non-prime set} and a number of \emph{prime sets}.
The \emph{non-prime set} is the set of edges that are not in any densest subgraph.
The \emph{prime sets} are the incremental edge sets of a chain of subgraphs with the maximum density.
In general, there is more than one chain of subgraphs with the maximum density that defines the prime sets.
A partial order can be defined on the prime sets according to the invariant inclusion relation in any chain of subgraphs defining the prime sets.
There are other graph decompositions \cite{ALPS09, BB20} inspired by the face lattice of convex polytopes.
But they are defined on different discrete structures.
The remainder of this section is organized as follows.
In Subsection \ref{sec:MinimalDensestSubgraph}, we investigate properties of minimal densest subgraphs which are basic ingredients of the prime partition.
In Subsection \ref{sec:PrimeSets}, we define prime sets by levels and introduce the non-prime set as a byproduct.
In Subsection \ref{sec:DensestSubgraphDecomposition}, we show that every densest subgraph admits a unique decomposition with prime sets.
In Subsection \ref{sec:AncestorRelation}, we introduce the ancestor relation of prime sets and define a partial order from the ancestor relation.
Throughout this section, we always assume that the graph $G=(V,E)$ is connected.

\subsection{Minimal densest subgraphs}
\label{sec:MinimalDensestSubgraph}

The following properties of minimal densest subgraphs are useful in defining the prime partition.

\begin{lemma}[Cut-vertex-free property]
\label{thm:MinimalDensestSubgraphCutVertex}
Let $H$ be a minimal densest subgraph of $G$.
Then $H$ has no cut vertex.
\end{lemma}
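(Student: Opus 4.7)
The plan is to argue by contradiction: suppose $H$ is a densest subgraph with a cut vertex $v$, and produce a proper subgraph of $H$ that is still connected and still attains the maximum density, contradicting minimality.

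First I would use the cut vertex $v$ to split $H$ into two subgraphs $H_1$ and $H_2$, each obtained by taking one connected component of $H-v$ together with the vertex $v$ and all edges of $H$ joining $v$ to that component. Both $H_1$ and $H_2$ are connected proper subgraphs of $H$, they share exactly the vertex $v$, and they partition the edges of $H$. This yields the two clean accounting identities $m(H) = m(H_1) + m(H_2)$ and $n(H)-1 = (n(H_1)-1) + (n(H_2)-1)$.

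The heart of the argument is then the mediant inequality. Writing the density as
\begin{equation*}
g(H) \;=\; \frac{m(H_1)+m(H_2)}{(n(H_1)-1)+(n(H_2)-1)} \;\leq\; \max\bigl\{g(H_1),\, g(H_2)\bigr\},
\end{equation*}
I can compare with the fact that $H$ is a densest subgraph, so $g(H_i) \leq a_f(G) = g(H)$ for $i=1,2$. Combined, these force equality $g(H_i) = g(H)$ for at least one $i$. Since $H_i$ is connected and achieves the maximum density, $H_i$ is itself a densest subgraph of $G$, and being a proper subgraph of $H$ it contradicts the minimality of $H$.

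The only subtle point, and the one I would double-check, is the definition of $H_1$ and $H_2$ at the cut vertex $v$: we must ensure that $v$ is included in both pieces (so that each piece is connected and the vertex-count identity holds with the $-1$), and that every edge of $H$ is placed in exactly one piece. Since $H$ has no loops and $v$ is a cut vertex of $H$, every edge of $H$ lies strictly inside one of the components of $H-v$ or is incident to $v$ and to a vertex in exactly one such component, so the partition of $E(H)$ is well defined. The remaining steps are just the arithmetic of the mediant bound and invoking minimality.
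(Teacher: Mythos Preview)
Your proposal is correct and follows essentially the same approach as the paper: both argue by contradiction, split $H$ at the cut vertex $v$ into connected pieces $H_1,H_2$ sharing only $v$, use the identities $m(H)=m(H_1)+m(H_2)$ and $n(H)-1=(n(H_1)-1)+(n(H_2)-1)$, and apply the mediant arithmetic. The only cosmetic difference is that the paper invokes minimality first to get $g(H_i)<g(H)$ strictly and then derives $g(H)<g(H)$, whereas you apply the mediant bound first and then invoke minimality; these are the same argument reordered.
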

\begin{proof}
Assume to the contrary that $v$ is a cut vertex in $H$.
Let $H_1$ and $H_2$ be two subgraphs of $H$ such that $H_1\cup H_2 =H$ and $H_1\cap H_2=\{v\}$.
Since $H$ is a minimal densest subgraph, we have
\begin{equation}
	g(H_i)=\frac{m(H_i)}{n(H_i)-1} < g(H),
\end{equation}
for $i=1,2$.
It follows that
\begin{equation}
g(H)=\frac{m(H)}{n(H)-1}=\frac{m(H_1)+m(H_2)}{[n(H_1)-1]+[n(H_2)-1]}<g(H),
\end{equation}
which is a contradiction.
Hence $H$ has no cut vertex.
\end{proof}

\begin{lemma}[Noncrossing property]
\label{thm:MDS_Noncrossing}
Let $H$ be a minimal densest subgraph of $G$.
For any densest subgraph $K$ of $G$, either $E(H) \subseteq E(K)$ or $E(H) \cap E(K)=\emptyset$.
\end{lemma}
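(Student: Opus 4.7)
The plan is to prove the statement by contradiction using an uncrossing argument on $H$ and $K$, combined with the minimality of $H$. Assume that $E(H)\cap E(K)\neq\emptyset$ and aim to conclude $E(H)\subseteq E(K)$. Fix a shared edge $e\in E(H)\cap E(K)$; then its endpoints lie in $V(H)\cap V(K)$, so $V(H)\cap V(K)\neq\emptyset$. Since $H$ and $K$ are connected and share a vertex, the union $H\cup K$ (with vertex set $V(H)\cup V(K)$ and edge set $E(H)\cup E(K)$) is connected.

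Next I would apply the standard inclusion-exclusion identities $m(H)+m(K)=m(H\cup K)+m(H\cap K)$ and $n(H)+n(K)=n(H\cup K)+n(H\cap K)$, together with the upper bound on density furnished by $a_f(G)=\alpha:=g(H)=g(K)$. Specifically, because $H\cup K$ is connected we have $m(H\cup K)\leq \alpha(n(H\cup K)-1)$, and because every component of $H\cap K$ has density at most $\alpha$ we have $m(H\cap K)\leq \alpha(n(H\cap K)-c(H\cap K))$ (isolated vertices in $H\cap K$ are harmless for this inequality since they contribute equally to $n$ and $c$). Adding these and comparing with $m(H)+m(K)=\alpha(n(H)+n(K)-2)$ yields $c(H\cap K)\leq 1$. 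Since $e\in E(H\cap K)$ forces $c(H\cap K)\geq 1$, we obtain $c(H\cap K)=1$, so $H\cap K$ is connected, and the chain of inequalities must be tight; in particular $m(H\cap K)=\alpha(n(H\cap K)-1)$, so $H\cap K$ is itself a densest subgraph of $G$.

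Finally, since $H\cap K$ is a densest subgraph contained in $H$ (viewed as a subgraph, with $V(H\cap K)\subseteq V(H)$ and $E(H\cap K)\subseteq E(H)$), the minimality of $H$ forces $H\cap K=H$. Equivalently, $E(H)\cap E(K)=E(H)$, i.e., $E(H)\subseteq E(K)$, completing the proof.

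The main obstacle, and the step I would be most careful about, is the density accounting in $H\cap K$: one has to be sure that the submodular-style inequality $m(H\cap K)\leq \alpha(n(H\cap K)-c(H\cap K))$ is legitimate even when $H\cap K$ contains isolated vertices (from $V(H)\cap V(K)$ that are not endpoints of any common edge), and that equality in the combined inequality really forces the \emph{connected} subgraph $H\cap K$ to have density exactly $\alpha$. Once this bookkeeping is done cleanly, the minimality of $H$ delivers the conclusion immediately.
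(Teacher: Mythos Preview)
Your argument is correct, and it takes a genuinely different route from the paper's proof.

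The paper argues by contradiction: assuming $E(H)\cap E(K)\neq\emptyset$ and $E(H)\not\subseteq E(K)$, it sets $X=H\cap K$ and computes directly. Minimality of $H$ is used to show that the ``outer part'' of $H$ satisfies
\[
\frac{m(H-X)+m(H-X,X)}{n(H-X)} > g(H),
\]
and combining this with the identity $g(K)=\frac{m(K-X)+m(K-X,X)+m(X)}{n(K-X)+n(X)-1}$ yields $g(H\cup K)>g(K)$, contradicting the maximality of $K$.

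Your approach is the classical uncrossing/supermodularity argument: inclusion--exclusion on edges and vertices together with the bounds $m(H\cup K)\le\alpha\bigl(n(H\cup K)-1\bigr)$ and $m(H\cap K)\le\alpha\bigl(n(H\cap K)-c(H\cap K)\bigr)$ forces $c(H\cap K)=1$ and equality throughout, so $H\cap K$ is itself a (connected) densest subgraph; minimality of $H$ then gives $H\cap K=H$. One point you use implicitly is that densest subgraphs are induced (otherwise adding a missing internal edge would raise the density), which guarantees that $E(H)\cap E(K)$ really is the edge set of $G[V(H)\cap V(K)]$ and makes the component-wise density bound legitimate. Your concern about isolated vertices in $H\cap K$ is unfounded in the end: once $c(H\cap K)\le 1$ and an edge is present, there simply are none.

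What each approach buys: the paper's computation is self-contained and avoids any discussion of components of the intersection, at the price of a slightly ad hoc inequality. Your argument is cleaner and, as a bonus, directly exhibits the lattice structure of densest subgraphs (their intersection, when nonempty, is again densest), which is exactly the structural fact the paper exploits later when building the prime partition.
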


\begin{proof}
When $E(H) \cap E(K)\not=\emptyset$, assume to the contrary that $E(H)\not\subseteq E(K)$.
Let $X=H\cap K$.
Then $X$ is a proper subgraph of $H$ with $E(X)\not=\emptyset$.
On one hand, we have
\begin{equation}\label{eq:Noncrossing1}
\frac{m(H-X)+m(H-X,X)}{n(H-X)}>g(H).
\end{equation}
Indeed, since otherwise 
\begin{equation}
  \begin{split}
   g(X)&=\frac{m(X)}{n(X)-1}=\frac{m(H)-m(H-X)-m(H-X,X)}{n(H)-n(H-X)-1}\\
   & \geq \frac{m(H)-n(H-K)\cdot g(H)}{n(H)-n(H-K)-1}=\frac{m(H)-n(H-K)\cdot \frac{m(H)}{n(H)-1}}{n(H)-n(H-K)-1}=g(H),
  \end{split}
\end{equation}
which contradicts the minimality of $H$.
On the other hand, we have
\begin{equation}\label{eq:Noncrossing2}
\frac{m(K-X)+m(K-X,X)+m(X)}{n(K-X)+n(X)-1}=g(K).
\end{equation}
Since $g(H)=g(K)$, \eqref{eq:Noncrossing1} and \eqref{eq:Noncrossing2} imply that
\begin{equation}
g(H\cup K)\geq \frac{[m(H-X)+m(H-X,X)]+[m(K-X)+m(K-X,X)+m(X)]}{n(H-X)+[n(K-X)+n(X)-1]}> g(K),
\end{equation}
which contradicts the maximum density of $K$. 
Hence either $E(H) \subseteq E(K)$ or $E(H) \cap E(K)=\emptyset$.
\end{proof}

Lemma \ref{thm:MDS_Noncrossing} implies that any two minimal densest subgraphs share no common edge, which is the key property for defining prime sets.
We also notice that any two minimal densest subgraphs share at most one common vertex. 
This observation can be generalized to a ``cycle''-free property for minimal densest subgraphs.

\begin{lemma}[``Cycle''-free property]
\label{thm:MDS_CommonVertex}
Let $H_1,\ldots,H_r$ be minimal densest subgraphs of $G$.
Then $\lvert \{v: v\in V(H_i)\cap V(H_j),~i\not=j \}\rvert < r$.
\end{lemma}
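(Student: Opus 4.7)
The plan is to argue by contradiction: assume $|V^\ast|\ge r$, where $V^\ast=\{v: v\in V(H_i)\cap V(H_j),\ i\ne j\}$, and construct a subgraph of $G$ whose density exceeds the fractional arboricity $a_f(G)$, contradicting the maximality of the densest-subgraph density $g:=g(H_i)=a_f(G)$. The natural candidate is the union $H:=H_1\cup\cdots\cup H_r$ (restricted to any single connected component if $H$ is disconnected), since Lemma \ref{thm:MDS_Noncrossing} makes the $H_i$ pairwise edge-disjoint, so every shared vertex in $V^\ast$ contributes to ``saving'' vertices when we merge but not to losing edges, which is exactly the direction that boosts density.

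First, I would record the counting identities. For each $v\in V^\ast$, let $k_v\ge 2$ denote the number of indices $i$ with $v\in V(H_i)$. By Lemma \ref{thm:MDS_Noncrossing} and a simple inclusion–exclusion on vertex multiplicities, for any connected component $C$ of $H$, writing $I(C)=\{i:H_i\subseteq C\}$ and $V^\ast_C=V^\ast\cap V(C)$, I would derive
\begin{equation*}
m(C)=\sum_{i\in I(C)} m(H_i)=g\sum_{i\in I(C)}\bigl(n(H_i)-1\bigr),\qquad n(C)=\sum_{i\in I(C)}n(H_i)-\sum_{v\in V^\ast_C}(k_v-1).
\end{equation*}
Combining these, a direct calculation gives
\begin{equation*}
g(C)>g\ \iff\ \sum_{i\in I(C)}\bigl(n(H_i)-1\bigr)>n(C)-1\ \iff\ \sum_{v\in V^\ast_C}(k_v-1)\ge |I(C)|.
\end{equation*}

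The key step is then to observe that if any component $C$ satisfies $|V^\ast_C|\ge |I(C)|$, the displayed inequality holds (since $k_v-1\ge 1$ for every $v\in V^\ast$), yielding $g(C)>g$ and contradicting $g=a_f(G)$ via Theorem \ref{thm:NW} and the connectedness of $C$. Hence $|V^\ast_C|\le |I(C)|-1$ for every component, and summing over the (at most $r$) components of $H$ gives $|V^\ast|=\sum_C |V^\ast_C|\le r-c(H)\le r-1<r$, as required.

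The main obstacle I anticipate is not the density estimate itself but keeping the bookkeeping correct when $H$ is disconnected: one must be careful that shared vertices really lie inside a single component (which they do, by construction), and that the local bound $|V^\ast_C|\le |I(C)|-1$ sums cleanly. A secondary delicate point is the validity of $m(C)=g\sum_{i\in I(C)}(n(H_i)-1)$, which uses edge-disjointness from Lemma \ref{thm:MDS_Noncrossing} and the fact that every minimal densest subgraph has density exactly $a_f(G)$; both are available from the earlier material.
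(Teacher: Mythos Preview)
Your proposal is correct and follows essentially the same approach as the paper: assume $|V^\ast|\ge r$, form the union $H=\bigcup_i H_i$, use edge-disjointness (Lemma~\ref{thm:MDS_Noncrossing}) together with the vertex savings from shared vertices to force the density above $g(H_1)=a_f(G)$, and reach a contradiction. The paper compresses your component-by-component argument into a single inequality for $g(H)$ using the general density $g(H)=m(H)/(n(H)-c(H))$, whereas you unpack the disconnected case explicitly; the underlying mechanism is identical.
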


\begin{proof}
Assume to the contrary that $\lvert \{v: v\in V(H_i)\cap V(H_j),~i\not=j \}\rvert \geq r$.
Let $H=\cup_{i=1}^{r} H_i$.
Lemma \ref{thm:MDS_Noncrossing} implies that
\begin{equation}
\begin{split}
g(H)
&\geq \frac{\sum_{i=1}^{r} m(H_i)}{\sum_{i=1}^{r} n(H_i)-\lvert \{v: v\in V(H_i)\cap V(H_j),~ i\not=j \} \rvert -1}\\
&> \frac{\sum_{i=1}^{r} m(H_i)}{\sum_{i=1}^{r} [n(H_i)-1]}=g(H_1),
\end{split}
\end{equation}
which contradicts the maximum density of $H_1$.
\end{proof}

To illustrate the ``cycle''-free property, we introduce an auxiliary graph $\mathcal{H}(G)$.
Every vertex $v_H$ in $\mathcal{H}(G)$ is associated with a minimal densest subgraph $H$ of $G$.
Every edge in $\mathcal{H}(G)$ joins two vertices $v_{H_1}$ and $v_{H_2}$ in $\mathcal{H}(G)$ if $H_1$ and $H_2$ share a common vertex.
Lemma \ref{thm:MDS_CommonVertex} implies that if any three minimal densest subgraphs of $G$ share no common vertex, then $\mathcal{H}(G)$ is acyclic.
The ``cycle''-free property will be used repeatedly in our arguments.

To define the prime partition, we have to determine all minimal densest subgraphs.
Gabow \cite{Gabo98} provided an $O(nm\log \frac{n^2}{m})$ algorithm for computing the fractional arboricity of a graph with $n$ vertices and $m$ edges.
By employing the algorithm of Gabow, the enumeration of all minimal densest subgraphs can be done in polynomial time.

\begin{lemma}
\label{thm:MDSEnumeration}
All minimal densest subgraphs of $G$ can be enumerated in $O(n^3 m \log \frac{n^2}{m})$.
\end{lemma}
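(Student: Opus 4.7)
The plan is to invoke Gabow's $O(nm\log(n^{2}/m))$ fractional arboricity algorithm $O(n^{2})$ times, matching the claimed $O(n^{3}m\log(n^{2}/m))$ budget.

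First I would bound the number of MDS by $n-1$. Distinct MDS are edge-disjoint by Lemma \ref{thm:MDS_Noncrossing}, and each MDS $M$ has $m(M)=d(n(M)-1)\ge d$ edges (since $n(M)\ge 2$ and $d=a_f(G)$); combined with $d\ge m/(n-1)$, obtained by taking $H=G$ in \eqref{eq:SimpleFractionalArboricity}, this gives that the number of MDS is at most $m/d\le n-1$.

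The algorithmic engine is a greedy vertex-peeling subroutine that, given a seed edge $e$ lying in some MDS $M$, recovers $M$ in $O(n)$ Gabow calls. Initialize $W=V$, and for each $w\in V$ not incident to $e$ invoke Gabow on $G[W\setminus\{w\}]$ (with $e$ forced, e.g., by contracting $e$ and rescaling the target density) to decide whether this induced subgraph still admits a densest subgraph of density $d$ through $e$. Remove $w$ from $W$ if yes, keep it otherwise. By Lemma \ref{thm:MDS_Noncrossing}, $M$ is the unique MDS through $e$ and is contained in every densest subgraph through $e$, so the peeling terminates at $W=V(M)$; moreover $G[V(M)]=M$ because any additional edge between vertices of $V(M)$ would produce a subgraph of density greater than $d$, contradicting $d=a_f(G)$. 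The overall algorithm then computes $d$, repeatedly picks a seed edge in an undiscovered MDS, runs the peeling subroutine, marks the recovered MDS, and iterates. A seed edge is obtained by running Gabow to find a densest subgraph and selecting an edge from one of its 2-connected blocks of density $d$ (guaranteed by Lemma \ref{thm:MinimalDensestSubgraphCutVertex}). With at most $n-1$ outer rounds and $O(n)$ Gabow calls per round, the total is $O(n^{2})$ calls.

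The main obstacle is seed selection: a 2-connected block of density $d$ inside a densest subgraph need not itself be an MDS, because it can properly contain a smaller MDS together with extra edges that lie in no MDS at all. Accordingly, after peeling one must verify minimality by testing (via Gabow on each vertex-deleted subgraph) that no proper sub-block of the returned subgraph is densest, and recurse into a smaller densest sub-subgraph if verification fails. Lemmas \ref{thm:MDS_Noncrossing} and \ref{thm:MinimalDensestSubgraphCutVertex} guarantee that this recursion halts in constant depth, so the verification-plus-recursion overhead stays within the overall $O(n^{2})$ Gabow call budget and yields the claimed running time.
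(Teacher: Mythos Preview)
Your seed-based peeling has two real gaps.

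First, the step ``force $e$ by contracting $e$ and rescaling the target density'' does not work as stated. Contracting $e$ sends a subgraph $H$ of density $d$ containing $e$ to a minor of density $\bigl(d(n(H)-1)-1\bigr)/\bigl(n(H)-2\bigr)$, which depends on $n(H)$; there is no single target density to hand to Gabow. Worse, a subgraph containing both endpoints of $e$ but not $e$ itself becomes \emph{strictly denser} after contraction. So a single Gabow call on the contracted graph does not decide whether $G[W\setminus\{w\}]$ contains a density-$d$ subgraph through $e$, and without that test your peeling cannot be guaranteed to stop at $V(M)$ rather than at the vertex set of some other minimal densest subgraph.

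Second, the claim that verification-plus-recursion ``halts in constant depth'' is unjustified. When the seed $e$ happens not to lie in any minimal densest subgraph, your peeling (even if correctly implemented) returns the minimum densest subgraph $H_e^\ast$ through $e$, and the recursion then descends into a proper densest subgraph of $H_e^\ast$. The depth of this descent is governed by the number of levels in the densest-subgraph lattice (equivalently, the levels of the prime partition in Section~\ref{sec:PrimePartition}), which can be $\Theta(n)$, not $O(1)$. Lemmas~\ref{thm:MinimalDensestSubgraphCutVertex} and~\ref{thm:MDS_Noncrossing} say nothing about this depth. With $\Theta(n)$ recursion levels of $O(n)$ Gabow calls each, per minimal densest subgraph, you overshoot the stated bound.

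Both difficulties vanish if you drop the seed. The paper's proof simply starts from $H=G$ and greedily deletes any vertex $v$ with $a_f(H-v)=a_f(G)$; when no such vertex remains, $H$ is automatically a minimal densest subgraph (any proper densest subgraph inside $H$ would witness a deletable vertex). This takes $O(n)$ Gabow calls with no forcing and no recursion. One then removes the edges of $H$ from $G$ and repeats. Your edge-disjointness count showing there are at most $n-1$ minimal densest subgraphs is correct (and slightly cleaner than the paper's cycle-free argument), so the outer loop runs $O(n)$ times, for $O(n^2)$ Gabow calls overall.
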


\begin{proof}
We first show that a minimal densest subgraph of $G$ can be found in $O(n^2 m \log \frac{n^2}{m})$.
Initially, compute the fractional arboricity of $G$ and let $H=G$.
Compute the fractional arboricity of $H-v$ where $v\in V(H)$.
If $a_f(H-v)=a_f(G)$, then a densest subgraph of $G$ can be found in $H-v$.
Update $H$ with $H-v$ and repeat the process for $H$ until $a_f(H-v)<a_f(G)$ for any $v\in V(H)$.
Then $H$ is a minimal densest subgraph of $G$.
It takes $O(n)$ iterations before achieving a minimal densest subgraph of $G$.
Each iteration,
which involves computing the fractional arboricity of a subgraph of $G$,
can be done in $O(n m \log \frac{n^2}{m})$.
Hence a minimal densest subgraph of $G$ can be found in $O(n^2 m \log \frac{n^2}{m})$.

Now we show that all minimal densest subgraphs of $G$ can be enumerated in $O(n^3 m \log \frac{n^2}{m})$.
Let $H_1,\ldots,H_k$ be minimal densest subgraphs that have been found in $G$.
Let $G_k=G - \cup_{i=1}^{k} E(H_i)$.
Compute the fractional arboricity of $G_k$.
If $a_f(G_k)=a_f(G)$, then a minimal densest subgraph $H_{k+1}$ of $G$ can be found in $G_k$ and let $G_{k+1}=G - \cup_{i=1}^{k+1} E(H_i)$.
Repeat this process for $G_{k+1}$ until $a_f(G_{k+1})<a_f(G)$.
Then no minimal densest subgraph of $G$ remains in $G_{k+1}$.
Lemma \ref{thm:MDS_CommonVertex} implies that any two minimal densest subgraphs share at most one common vertex and there is a ``cycle''-free property among minimal densest subgraphs.
It follows that $\cup_{i=1}^{k+1}H_i$ has at least one more vertex than $\cup_{i=1}^{k}H_i$.
Therefore, there are $O(n)$ minimal densest subgraphs of $G$.
A minimal densest subgraph of $G$ can be found in $O(n^2m \log \frac{n^2}{m})$.
Therefore, all minimal densest subgraphs of $G$ can be enumerated in $O(n^3 m \log \frac{n^2}{m})$.
\end{proof}

\subsection{Defining prime sets by levels}
\label{sec:PrimeSets}

Now we define prime sets in the prime partition.
In short, every prime set is the edge set of a minimal densest minor.
Since edge contractions are involved, we introduce prime sets by levels.
A \emph{prime set of level zero} in $G$ is the edge set of a minimal densest subgraph.
By Lemma \ref{thm:MDS_Noncrossing}, prime sets of level zero are well defined.
Moreover, Lemma \ref{thm:MDSEnumeration} implies that all prime sets of level zero can be enumerated efficiently.
To define prime sets of higher levels, we study properties of densest subgraphs under edge contraction.

\begin{lemma}[Density preserving contraction]
\label{thm:ArboricityPreservingContraction}
Let $H$ be a proper densest subgraph of $G$.
Then $g(G\slash H)\leq g(G)$ and $a_f(G\slash H)\leq a_f(G)$.
Moreover, both equalities hold if $g(G)=a_f(G)$.
\end{lemma}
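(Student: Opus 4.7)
My plan is to reduce both inequalities to a single arithmetic observation: $g(G)$ is the \emph{mediant} of $g(G/H)$ and $g(H)$. Since $H$ is connected and $G/H$ inherits connectedness from $G$, and since contracting $H$ identifies its vertices to a single image while deleting its edges (together with the loops thereby created), one computes $n(G/H) = n(G) - n(H) + 1$ and $m(G/H) = m(G) - m(H)$. Therefore, as long as $H$ is not spanning,
\[
g(G) \;=\; \frac{m(G)}{n(G)-1} \;=\; \frac{[m(G)-m(H)] + m(H)}{[n(G)-n(H)] + [n(H)-1]},
\]
which is exactly the mediant of $g(G/H) = \frac{m(G)-m(H)}{n(G)-n(H)}$ and $g(H) = \frac{m(H)}{n(H)-1}$. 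The spanning case $V(H) = V(G)$ is degenerate: $G/H$ is a single vertex with density $0$, and both inequalities are trivial.

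The first inequality $g(G/H) \leq g(G)$ then follows immediately from the mediant inequality applied to the relation $g(G) \leq g(H) = a_f(G)$, which holds because $H$ is a densest subgraph. For the second inequality $a_f(G/H) \leq a_f(G)$, I would take an arbitrary connected subgraph $K'$ of $G/H$ and split into two cases. If $K'$ does not contain the image vertex of $H$, it lifts directly to a subgraph of $G$, so $g(K') \leq a_f(G)$. If $K'$ does contain the image, then $K := K' \cup H$ is a connected subgraph of $G$ and the same mediant identity gives $g(K)$ as the mediant of $g(K')$ and $g(H)$; since $g(K) \leq a_f(G) = g(H)$, the mediant inequality forces $g(K') \leq g(H) = a_f(G)$. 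Taking the maximum over $K'$ delivers the bound.

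For the equality statement, assume $g(G) = a_f(G)$. Then $g(G) = g(H)$, and since the mediant of two equal fractions is their common value, $g(G/H) = g(G) = a_f(G)$. Combined with the chain $g(G/H) \leq a_f(G/H) \leq a_f(G)$ just established, both equalities are forced.

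The only genuinely delicate point is the bookkeeping for contraction (getting $n(G/H) = n(G) - n(H) + 1$ correct after loop removal, and correctly identifying $K = K' \cup H$ with $n(K) = n(K') + n(H) - 1$ and $m(K) = m(K') + m(H)$); once that is clean, the whole argument runs on the single elementary fact that the mediant of two positive fractions lies between them, with equality throughout precisely when the two fractions agree.
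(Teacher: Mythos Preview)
Your proposal is correct and follows essentially the same approach as the paper. The paper's inequality chain (its displayed equation for $g(\hat G)\le g(G)\le g(H)$) is exactly the mediant inequality you name, and its treatment of $a_f$ likewise lifts a subgraph of $G/H$ containing the image vertex to a subgraph of $G$ containing $H$ and reapplies the same inequality; your framing in terms of mediants just makes the mechanism more explicit. One minor remark: the spanning case $V(H)=V(G)$ you single out cannot actually occur, since a proper connected subgraph on all of $V(G)$ would have strictly smaller density than $G$ itself, contradicting that $H$ is densest.
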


\begin{proof}
Let $\hat{G}=G\slash H$ and $v_H$ be the image of $H$ in $\hat{G}$.
We first prove that $g(\hat{G})\leq g(G)$ and the equality holds if $g(G)=f(G)$.
Notice that $m(\hat{G}-v_H)=m(G-H)$, $m(\hat{G}-v_H,v_H)=m(G-H,H)$ and $n(G-H)=n(\hat{G}-v_H)$.
Hence $g(G)\leq g(H)$ implies that
\begin{equation}
\label{eq:DensestSubgraphContraction}
\underbrace{\frac{m(\hat{G}-v_H)+m(\hat{G}-v_H,v_H)}{n(\hat{G}-v_H)}}_{=g(\hat{G})}\leq 
\underbrace{\frac{[m(G-H)+m(G-H,H)]+m(H)}{n(G-H)+[n(H)-1]}}_{=g(G)}\leq
\underbrace{\frac{m(H)}{n(H)-1}}_{=g(H)}.
\end{equation}
Notice that $g(\hat{G})=\frac{m(G)-m(H)}{n(G)-n(H)}$.
Therefore, $g(\hat{G})=g(G)$ if $g(G)=g(H)$.

Now we prove that $a_f(\hat{G})\leq a_f(G)$.
It suffices to show that $g(\hat{G}')\leq a_f(G)$ for any induced subgraph $\hat{G}'$ of $\hat{G}$. 
When $v_H\not\in V(\hat{G}')$, it is trivial that $g(\hat{G}')\leq a_f(G)$.
Now assume that $v_H\in V(\hat{G}')$.
Then there is an induced subgraph $G'$ of $G$ such that $H\subseteq G'$ and $\hat{G}'=G'\slash H$.
Hence \eqref{eq:DensestSubgraphContraction} implies that $g(\hat{G}')\leq g(G')\leq a_f(G)$.
It follows that $a_f(\hat{G})\leq a_f(G)$.
We have seen that $g(G)=g(H)$ implies $g(\hat{G})=g(G)$.
Therefore, $a_f(\hat{G}) = a_f(G)$ if $g(G)=g(H)$.
\end{proof}

Lemma \ref{thm:ArboricityPreservingContraction} implies that contracting a densest subgraph does not change the fractional arboricity if this subgraph is a proper subgraph of another densest subgraph.
By Lemma \ref{thm:MDS_Noncrossing}, minimal densest subgraphs possess an uncrossing property.
Hence minimal densest subgraphs can be contracted simultaneously.
After contracting all minimal densest subgraphs, if the fractional arboricity remains unchanged, densest subgraphs of the resulting graph are densest minors of the original graph; moreover, minimal densest subgraphs of the resulting graph are used to define prime sets of level one.
The procedure can be repeated to define prime sets of higher levels until the fractional arboricity of the resulting graph decreases.
In the following, we formally define prime sets of higher levels.

Let $\hat{G}^{(0)}=G$ and $\hat{G}^{(k+1)}$ be the graph obtain from $\hat{G}^{(k)}$ by contracting all edges in minimal densest subgraphs of $\hat{G}^{(k)}$, where $k\geq 0$.
If $a_f(\hat{G}^{(k+1)})=a_f(G)$, then a \emph{prime set of level $k+1$} is the edge set of a minimal densest subgraph in $\hat{G}^{(k+1)}$.
Otherwise, there is no more prime set and the edge set of $\hat{G}^{(k+1)}$ is the \emph{non-prime set}.
Therefore, every prime set is essentially the edge set of a minimal densest minor, and the non-prime set is the set of edges that are not in any minimal densest minor.
For simplicity, we introduce some notations for the prime partition of $G$.
For a prime set $P$ of level $k$, we use $n(P)$ to denote the number of vertices in its defining minimal densest subgraph $\hat{G}^{(k)}[P]$.
We use $\mathcal{P}_k$ to denote the collection of all prime sets of level $k$,
use $\mathcal{P}=\cup_{k} \mathcal{P}_k$ to denote the collection of all prime sets,
use $E_0$ to denote the non-prime set,
and use $\mathcal{E}=\mathcal{P}\cup \{E_0\}$ to denote the prime partition.
Figure \ref{fig:PrimePartition} provides an example of the prime partition.
The fractional arboricity of $G$ is $2$. 
There are $4$ prime sets of level zero, $3$ prime sets of level one, and $2$ prime sets of level two.
The non-prime set is empty.

\begin{figure}[hbt!]
  \centering
  \includegraphics[width=.8\textwidth]{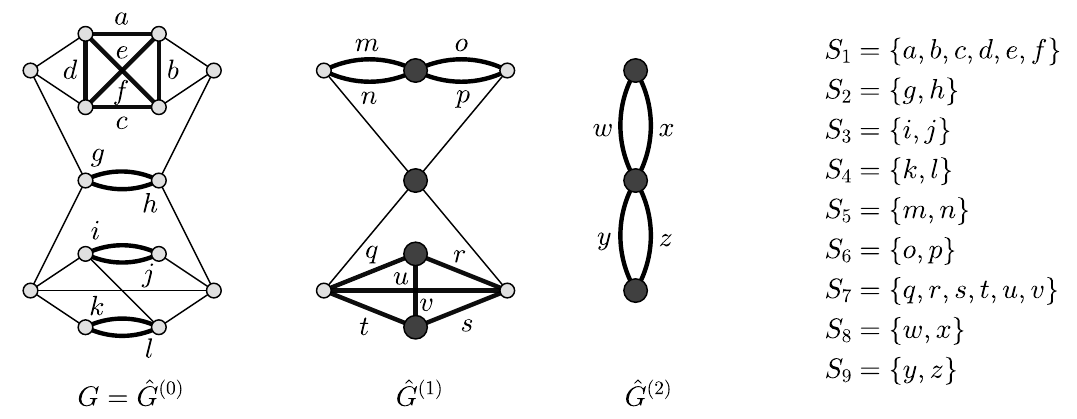}
  \caption{An example for the prime partition.}
  \label{fig:PrimePartition}
\end{figure}

Since the enumeration of minimal densest subgraphs can be done in polynomial time, the prime partition can be computed efficiently.

\begin{theorem}
\label{thm:PrimePartitionComplexity}
The prime partition of $G$ has $O(n)$ prime sets and can be computed in $O(n^4 m \log\frac{n^2}{m})$.
\end{theorem}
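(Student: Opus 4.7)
The plan is to prove the theorem in two parts. For the $O(n)$ bound on the number of prime sets, I will establish a per-level vertex-reduction inequality and then telescope across levels. For the running time, I will multiply the per-level cost supplied by Lemma~\ref{thm:MDSEnumeration} by the number of levels.

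For the counting bound, let $p_k$ denote the number of prime sets at level $k$ and $H_1,\dots,H_{p_k}$ the minimal densest subgraphs of $\hat{G}^{(k)}$ that define them. By the noncrossing property (Lemma~\ref{thm:MDS_Noncrossing}) these are pairwise edge-disjoint, so simultaneous contraction is well-defined and each connected component of $\bigcup_i H_i$ collapses to a single vertex. The main claim I will prove is that any such component $C$ containing $t$ of the $H_i$'s satisfies $|V(C)|\ge t+1$. Granting this, the vertex count drops by at least $t$ per component, and summing gives $n(\hat{G}^{(k)})-n(\hat{G}^{(k+1)})\ge p_k$; telescoping then yields $\sum_k p_k\le n-1$, so the total number of prime sets is $O(n)$. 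Since each non-terminal level contributes at least one prime set, there are also $O(n)$ levels.

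The main obstacle is proving $|V(C)|\ge t+1$ when a single vertex may be shared by three or more MDSs, in which case the auxiliary graph $\mathcal{H}(\hat{G}^{(k)})$ is not automatically a forest. The approach I will take is to model the sharing structure as a hypergraph on the $t$ MDSs whose hyperedges are the distinct shared vertices of $C$. A Berge cycle of length $\ell$ in this hypergraph would exhibit $\ell$ distinct shared vertices among $\ell$ of the $H_i$, directly contradicting Lemma~\ref{thm:MDS_CommonVertex}. Hence the hypergraph is a hypertree, giving $\sum_v(\mathrm{mult}(v)-1)=t-1$ for a connected $C$, and therefore $|V(C)|=\sum_i n(H_i)-(t-1)\ge 2t-(t-1)=t+1$, using that every MDS has at least two vertices (it must contain an edge).

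For the running-time bound, the per-level work consists of: enumerating the minimal densest subgraphs of $\hat{G}^{(k)}$ in $O(n^3 m\log(n^2/m))$ via Lemma~\ref{thm:MDSEnumeration}, contracting their edges in $O(m)$, and one call to Gabow's algorithm in $O(nm\log(n^2/m))$ to test whether $a_f$ is preserved and thereby decide whether the process continues. The enumeration step dominates, so the per-level cost is $O(n^3 m\log(n^2/m))$; multiplying by the $O(n)$ levels established above yields the claimed total $O(n^4 m\log(n^2/m))$.
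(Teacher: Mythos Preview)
Your proposal is correct and follows essentially the same route as the paper: both derive the per-level inequality $n(\hat{G}^{(k)})-n(\hat{G}^{(k+1)})\ge |\mathcal{P}_k|$ from Lemmas~\ref{thm:MDS_Noncrossing} and~\ref{thm:MDS_CommonVertex}, telescope to get $O(n)$ prime sets (and hence $O(n)$ levels), and then multiply by the $O(n^3 m\log(n^2/m))$ per-level enumeration cost of Lemma~\ref{thm:MDSEnumeration}. The paper simply asserts the vertex-drop inequality from the two lemmas, whereas you spell it out via the Berge-acyclic hypergraph on the MDSs; this is a careful formalization of the same idea rather than a different argument.
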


\begin{proof}
Lemmas \ref{thm:MDS_Noncrossing} and \ref{thm:MDS_CommonVertex} imply that  contractions on different minimal densest subgraphs can be performed simultaneously.
Notice that $n(\hat{G}^{(k+1)})\leq n(\hat{G}^{(k)})-\lvert \mathcal{P}_{k}\rvert$.
Consequently, there are $O(n)$ prime sets in $\mathcal{P}$.
The definition of prime sets naturally yields an efficient algorithm for computing the prime partition of $G$.
Since there are $O(n)$ prime sets, it takes $O(n)$ iterations to compute the prime partition of $G$, and each iteration computes all prime sets of the same level.
Computing all prime sets of the same level is equivalent to enumerating all minimal densest subgraphs, which can be done in $O(n^3 m \log\frac{n^2}{m})$.
Hence the prime partition of $G$ can be computed in $O(n^4 m \log\frac{n^2}{m})$.
\end{proof}

\subsection{Decomposing densest subgraphs with prime sets}
\label{sec:DensestSubgraphDecomposition}

To show that any densest subgraph admits a decomposition of prime sets,
we first generalize the uncrossing property of minimal densest subgraphs to prime sets.

\begin{lemma}[Generalized noncrossing property]
\label{thm:NonCrossingPrimeSet}
For any prime set $P$ and any densest subgraph $H$ of $G$, either $P\subseteq E(H)$ or $P\cap E(H)=\emptyset$.
\end{lemma}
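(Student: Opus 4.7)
I would prove the lemma by induction on the level $k$ of the prime set $P$. The base case $k=0$ is exactly Lemma~\ref{thm:MDS_Noncrossing}, since a level-zero prime set is by definition the edge set of a minimal densest subgraph of $G$. For the inductive step, fix a prime set $P$ of level $k\geq 1$ with defining minimal densest subgraph $M$ of $\hat{G}^{(k)}$ (so $P=E(M)$), and an arbitrary densest subgraph $H$ of $G$. The strategy is to lift the question into $\hat{G}^{(k)}$: if the image $\tilde{H}$ of $H$ in $\hat{G}^{(k)}$ is itself a densest subgraph of $\hat{G}^{(k)}$, then Lemma~\ref{thm:MDS_Noncrossing} applied inside $\hat{G}^{(k)}$ to $M$ and $\tilde{H}$ gives $E(M)\subseteq E(\tilde{H})$ or $E(M)\cap E(\tilde{H})=\emptyset$. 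Because prime sets of different levels are edge-disjoint by construction and $E(\tilde{H})$ is precisely $E(H)$ minus those lower-level prime sets contained in $E(H)$, this dichotomy translates back to $P\subseteq E(H)$ or $P\cap E(H)=\emptyset$.

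The core of the argument is therefore to show that $\tilde{H}$ is a densest subgraph of $\hat{G}^{(k)}$. I would run an auxiliary induction on $j$, maintaining the invariant that the image $H^{(j)}$ of $H$ in $\hat{G}^{(j)}$ has density $a_f(G)$ and is a densest subgraph of $\hat{G}^{(j)}$. The existence of level-$k$ prime sets forces $a_f(\hat{G}^{(j)})=a_f(G)$ for all $j\leq k$, since fractional arboricity is monotone non-increasing along the contraction sequence by Lemma~\ref{thm:ArboricityPreservingContraction} and its endpoint value is $a_f(G)$. For the inductive step, apply Lemma~\ref{thm:MDS_Noncrossing} in $\hat{G}^{(j)}$ to each MDS $Q$ of $\hat{G}^{(j)}$ against $H^{(j)}$: either $E(Q)\subseteq E(H^{(j)})$, in which case $Q$ contracts inside $H^{(j)}$ and a direct density calculation (identical in flavor to the one at the end of the proof of Lemma~\ref{thm:ArboricityPreservingContraction}) preserves $g(H^{(j)})=a_f(G)$; or $E(Q)\cap E(H^{(j)})=\emptyset$, in which case $|V(Q)\cap V(H^{(j)})|\leq 1$, since otherwise the image of $H^{(j)}$ in $\hat{G}^{(j)}\slash Q$ would have density at least $m(H^{(j)})/(n(H^{(j)})-2)>a_f(G)$, violating that density cannot exceed fractional arboricity.

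The main technical subtlety lies in the second case above and in aggregating simultaneous contractions within a single level. Several MDSs of $\hat{G}^{(j)}$ that are edge-disjoint from $H^{(j)}$ could a priori each touch $V(H^{(j)})$ in a single vertex, and one must rule out that contracting them simultaneously merges two or more vertices of $V(H^{(j)})$; this is again forced by the fractional-arboricity bound $a_f(\hat{G}^{(j+1)})=a_f(G)$, because any such unintended merge would push the image density above $a_f(G)$. Combining this observation with Lemma~\ref{thm:MDS_CommonVertex}, which constrains MDS-overlaps in a cycle-free manner, the vertex-count algebra works out cleanly and yields $g(H^{(j+1)})=a_f(G)$, completing the sub-induction. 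With $\tilde{H}=H^{(k)}$ established as a densest subgraph of $\hat{G}^{(k)}$, the application of Lemma~\ref{thm:MDS_Noncrossing} in $\hat{G}^{(k)}$ described above finishes the proof.
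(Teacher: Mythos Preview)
Your proposal is correct and follows essentially the same approach as the paper: induction on the level of $P$, with the key step being to show that the image of $H$ in $\hat{G}^{(k)}$ is again a densest subgraph so that Lemma~\ref{thm:MDS_Noncrossing} applies there. The paper's version is slightly more streamlined in that it invokes the main induction hypothesis directly (every lower-level prime set is already known to lie fully inside or be disjoint from $E(H)$) together with Lemma~\ref{thm:ArboricityPreservingContraction}, rather than running your auxiliary $j$-induction; your explicit treatment of the external-contraction vertex-merging subtlety is a point the paper leaves implicit.
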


\begin{proof}
We apply induction on the level of prime set $P$.
When $P\in \mathcal{P}_0$, $G[P]$ is a minimal densest subgraph of $G$.
Lemma \ref{thm:MDS_Noncrossing} implies that either $P\subseteq E(H)$ or $P\cap E(H)=\emptyset$.

Now assume that $P\in \mathcal{P}_l$ where $l\geq 1$
and assume that for any prime set $Q$ of level less than $l$ either $Q\subseteq E(H)$ or $Q\cap E(H)=\emptyset$.
Let $\hat{H}^{(0)}=H$ and $\hat{H}^{(k+1)}$ be the graph obtained from $\hat{H}^{(k)}$ by contracting all edges in prime sets of level $k$, where $k\geq 0$.
Assume that $E(\hat{H}^{(l)})\not=\emptyset$, since otherwise we have $P\cap E(H)=\emptyset$.
By the induction hypothesis and Lemma \ref{thm:ArboricityPreservingContraction}, 
we have $g(\hat{H}^{(l)})=g(H)=a_f(G)=a_f(\hat{G}^{(l)})$.
Hence $\hat{H}^{(l)}$ is a densest subgraph of $\hat{G}^{(l)}$.
Since $\hat{G}^{(l)}[P]$ is a minimal densest subgraph of $\hat{G}^{(l)}$,
Lemma \ref{thm:MDS_Noncrossing} implies that either $P\subseteq E(\hat{H}^{(l)})$ or $P\cap E(\hat{H}^{(l)})=\emptyset$.
Therefore, either $P\subseteq E(H)$ or $P\cap E(H)=\emptyset$.
\end{proof}

It follows from Lemma \ref{thm:NonCrossingPrimeSet} that every densest subgraph admits a decomposition of prime sets.

\begin{lemma}[Prime set decomposition]
\label{thm:PrimePartition}
For any densest subgraph $H$ of $G$, there are prime sets $P_1,\ldots,P_r$ such that  $E(H)=\cup_{i=1}^{r} P_{i}$, where $r\geq 1$.
Moreover, $n(H)=\sum_{i=1}^r [n (P_i)-1]+1$.
\end{lemma}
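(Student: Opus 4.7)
The plan is to use Lemma~\ref{thm:NonCrossingPrimeSet} to identify the relevant prime sets, then mirror the defining contraction procedure of the prime partition inside $H$ to argue that $E(H)$ is exhausted by prime sets, and finally extract the vertex identity from connectivity and a telescoping count.

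First I would apply Lemma~\ref{thm:NonCrossingPrimeSet} and let $P_1,\ldots,P_r$ be the prime sets contained in $E(H)$. These are pairwise disjoint: distinct prime sets of the same level are disjoint by Lemma~\ref{thm:MDS_Noncrossing} applied inside $\hat{G}^{(k)}$, while prime sets of different levels use edges surviving in different $\hat{G}^{(k)}$'s. What really needs work is the reverse inclusion $E(H)\subseteq\bigcup_i P_i$, equivalently that no edge of $H$ lies in the non-prime set $E_0$.

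The heart of the argument is to run the prime-partition procedure in parallel on $H$. Define $\hat{H}^{(0)}=H$ and, inductively, let $\hat{H}^{(k+1)}$ be obtained from $\hat{H}^{(k)}$ by contracting every prime set of level $k$ that sits inside $E(H)$. I claim, by induction on $k$, that whenever $E(\hat{H}^{(k)})\neq\emptyset$ the graph $\hat{H}^{(k)}$ is a connected densest subgraph of $\hat{G}^{(k)}$. The base case is immediate. For the inductive step, each prime set of level $k$ contained in $\hat{H}^{(k)}$ is a minimal densest subgraph of $\hat{G}^{(k)}$ and hence is also a densest subgraph of the densest subgraph $\hat{H}^{(k)}$. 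Contracting these prime sets one at a time and appealing to Lemma~\ref{thm:ArboricityPreservingContraction} at each step (using the ``cycle''-free property of Lemma~\ref{thm:MDS_CommonVertex} to verify that after one contraction the next prime set still induces a connected subgraph with unchanged edge and vertex counts, hence is still a densest subgraph) yields $g(\hat{H}^{(k+1)})=g(\hat{H}^{(k)})=a_f(G)$. Combined with $\hat{H}^{(k+1)}\subseteq \hat{G}^{(k+1)}$ and $a_f(\hat{G}^{(k+1)})\leq a_f(G)$ from Lemma~\ref{thm:ArboricityPreservingContraction}, this forces $a_f(\hat{G}^{(k+1)})=a_f(G)$ and $\hat{H}^{(k+1)}$ to be a densest subgraph of $\hat{G}^{(k+1)}$. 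Now at the first level $k^\ast$ where the prime-partition procedure halts we have $a_f(\hat{G}^{(k^\ast)})<a_f(G)$, so $\hat{H}^{(k^\ast)}$ cannot contain any edge; thus every edge of $H$ has been absorbed into some $P_i$.

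Finally, for the vertex identity, note that $H$ is connected and each prime set induces a connected subgraph, so each contraction preserves connectivity; by the previous step $\hat{H}^{(k^\ast)}$ has no edges and is connected, hence is a single vertex. Contracting $P_i$ reduces the vertex count of the running graph by exactly $n(P_i)-1$, so telescoping across all levels gives $n(H)-1=\sum_{i=1}^{r}[n(P_i)-1]$, as required. The step I expect to be the main obstacle is the sequential-contraction argument inside a single level: Lemma~\ref{thm:ArboricityPreservingContraction} is stated for a single proper densest subgraph, and one has to verify that after contracting one prime set the remaining level-$k$ prime sets are still densest subgraphs of the updated ambient graph. The ``cycle''-free property of Lemma~\ref{thm:MDS_CommonVertex} is precisely what makes this verification routine, since it ensures the contractions merge at most isolated shared vertices rather than whole subgraphs.
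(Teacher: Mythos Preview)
Your proposal is correct and follows essentially the same route as the paper: use Lemma~\ref{thm:NonCrossingPrimeSet} together with Lemma~\ref{thm:ArboricityPreservingContraction} to show that the edges of $H$ are exhausted by prime sets, and then telescope a sequence of contractions (ordered by level) to obtain the vertex identity, invoking Lemmas~\ref{thm:MDS_Noncrossing} and~\ref{thm:MDS_CommonVertex} to justify that each contraction drops exactly $n(P_i)-1$ vertices. The paper's argument is terser---it simply cites Lemmas~\ref{thm:ArboricityPreservingContraction} and~\ref{thm:NonCrossingPrimeSet} for the first part and contracts the $P_i$ one at a time rather than level-by-level---but the substance, including the very obstacle you flagged and its resolution via the cycle-free property, is the same.
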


\begin{proof}
Lemmas \ref{thm:ArboricityPreservingContraction} and \ref{thm:NonCrossingPrimeSet} imply that there are prime sets $P_1,\ldots,P_r$ such that $E(H)=\cup_{i=1}^{r} P_i$.
Assume that $P_1,\dots,P_r$ are arranged in a non-decreasing order of levels.
Let $\hat{H}_0=H$ and $\hat{H}_{k}=\hat{H}_{k-1} \slash P_{k}$ for $k=1,\ldots,r$.
By Lemmas \ref{thm:MDS_Noncrossing} and \ref{thm:MDS_CommonVertex},
we have $n(\hat{H}_{k})=n(\hat{H}_{k-1})-n (P_{k})+1$.
Besides, $n(\hat{H}_r)=1$.
Therefore, $n(H)=\sum_{i=1}^{r} [n(\hat{H}_{i-1})-n(\hat{H}_{i})]+n(\hat{H}_r)=\sum_{i=1}^{r} [n (P_i)-1]+1$.
\end{proof}

Since the non-prime set consists of edges that are not in any densest subgraph, 
we have the following corollary.

\begin{lemma}
\label{thm:PrimeComponent}
Let $E_0$ be the non-prime set of $G$.
Then every component of $G-E_0$ is a densest subgraph of $G$.
\end{lemma}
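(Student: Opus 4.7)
The plan is to pair the prime-set decomposition (Lemma~\ref{thm:PrimePartition}) with a density-uncrossing step, and thereby reconstitute each non-trivial component of $G-E_0$ as a single densest subgraph of $G$. The first observation is that $E_0$ is, in edge-theoretic terms, exactly the set of edges in no densest subgraph of $G$: by Lemma~\ref{thm:PrimePartition} every densest subgraph is a union of prime sets, so every non-prime edge lies in some densest subgraph and every edge of $E_0$ lies in none. Consequently each densest subgraph, being connected and composed of non-prime edges, sits inside a single component of $G-E_0$, and the task becomes: given a component $C$ of $G-E_0$ with at least one edge, exhibit a single densest subgraph whose vertex and edge sets agree with those of $C$.

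The heart of the argument is an uncrossing claim: if $H_1,H_2$ are densest subgraphs of $G$ sharing at least one vertex, then $H_1\cup H_2$ is again a densest subgraph. I would derive this from the modular identities $m(H_1\cup H_2)+m(H_1\cap H_2)=m(H_1)+m(H_2)$ and $n(H_1\cup H_2)+n(H_1\cap H_2)=n(H_1)+n(H_2)$, together with the bound $m(H_1\cap H_2)\le a_f(G)\bigl[n(H_1\cap H_2)-c(H_1\cap H_2)\bigr]$, which follows by applying $g\le a_f(G)$ to each component of $H_1\cap H_2$. Since $H_1\cup H_2$ is connected and $c(H_1\cap H_2)\ge 1$, these combine to give $g(H_1\cup H_2)\ge a_f(G)$; the reverse bound is automatic.

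With this in hand, I would fix, for every edge $e\in E(C)$, a densest subgraph $D_e\subseteq C$ containing $e$. Connectivity of $C$ lets any two such $D_e$'s be chained into a finite sequence whose consecutive terms share a vertex, so iterating the uncrossing claim shows $D^*:=\bigcup_{e\in E(C)} D_e$ is itself a densest subgraph; by construction $E(D^*)=E(C)$ and $V(D^*)=V(C)$, giving $C=D^*$. The main obstacle I expect is the bookkeeping in the uncrossing inequality, since $H_1\cap H_2$ may split into several components --- two densest subgraphs can share more than one vertex, as already hinted by the ``cycle''-free graph $\mathcal{H}(G)$ in Lemma~\ref{thm:MDS_CommonVertex}. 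Once that inequality is set up correctly, the rest is a routine finite-induction chaining argument inside $C$.
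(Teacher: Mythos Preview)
Your argument is correct and, in fact, more detailed than what the paper supplies: the paper states Lemma~\ref{thm:PrimeComponent} as an immediate corollary of the sentence ``the non-prime set consists of edges that are not in any densest subgraph'' and gives no further proof. The uncrossing step you spell out --- that two densest subgraphs sharing a vertex have a densest union --- is exactly the lattice property the paper alludes to elsewhere (the ``densest subgraph lattice'' mentioned in the abstract and in the discussion around Figure~\ref{fig:PrimePartitionIllustration}) but never proves explicitly. So your route and the paper's intended route coincide; you are simply writing out what the paper leaves to the reader.

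One small logical slip: you invoke Lemma~\ref{thm:PrimePartition} to conclude that \emph{every} edge outside $E_0$ lies in some densest subgraph, but Lemma~\ref{thm:PrimePartition} only gives the converse direction (edges of a densest subgraph lie in prime sets). The direction you need --- that each prime set is contained in some densest subgraph of $G$ --- follows instead from the chain construction in Subsection~\ref{sec:AncestorRelation} (or, equivalently, by uncontracting the minimal densest minor defining the prime set and applying Lemma~\ref{thm:ArboricityPreservingContraction}); the paper in any case asserts the full characterisation of $E_0$ outright in the sentence preceding the lemma, so you may simply cite that. With this adjustment your proof goes through, and the modular/uncrossing computation you outline is correct once one notes that densest subgraphs are automatically induced, so the edge count of the intersection is exactly $m(G[V(H_1)\cap V(H_2)])$.
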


\begin{figure}[hbt!]
  \centering
  \includegraphics[width=0.9\textwidth]{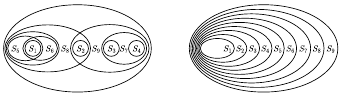}
  \caption{Illustrations for the prime partition of $G$ in Figure \ref{fig:PrimePartition}.}
  \label{fig:PrimePartitionIllustration}
\end{figure}

The left Venn diagram in Figure \ref{fig:PrimePartitionIllustration} illustrates the relation of all $11$ densest subgraphs of $G$ in Figure \ref{fig:PrimePartition}.
It shows that the intersection of any two densest subgraphs is either empty or a densest subgraph again.
Hence, all densest subgraphs of $G$, together with $\emptyset$, form a lattice under inclusion.
It also shows that every densest subgraph can be decomposed into prime sets. 

\subsection{The ancestor relation}
\label{sec:AncestorRelation}

Lemmas \ref{thm:NonCrossingPrimeSet} and \ref{thm:PrimePartition} suggest that there exists a laminar family of subgraphs with the maximum density that defines the prime sets.
We may determine a chain of subgraphs that defines the prime sets as follows.
Start with $G_0=G-E_0$.
For $k\geq 1$, let $S_k$ be the edge set of a minimal densest subgraph in $G_{k-1}$, $G_{k}$ be the graph obtained from $G_{k-1}$ by contracting edges in $S_k$, and $H_k = G[\cup_{i=1}^{k} S_i]$.
Then $H_l=G-E_0$ for some integer $l$ and $H_k = G[\cup_{i=1}^{k} S_i]$ is a subgraph with the maximum density for $k=1,\ldots,l$. 
Consequently, $\{H_1,\ldots,H_l\}$ with $H_1\subsetneq \ldots \subsetneq H_l$ is a chain of subgraphs with the maximum density that defines the prime sets.
Furthermore, every prime set is precisely the incremental edge set of consecutive subgraphs in the chain.
The right Venn diagram in Figure \ref{fig:PrimePartitionIllustration} provides a chain of subgraphs with the maximum density that defines the prime sets of $G$ in Figure \ref{fig:PrimePartition}.
It shows that all prime sets of $G$ are precisely incremental edges sets of subgraphs in the chain.

Generally, there is more than one chain of subgraphs with the maximum density that defines the prime sets.
However, some prime sets are always preceded by other prime sets in any chain of subgraphs defining the prime sets, as some minimal densest minors occur only after the contraction of other minimal densest minors.
Therefore, we introduce the notion of ancestor to represent the invariant precedence relation in the prime sets.
A prime set $Q$ is called an \emph{ancestor} of a prime set $P$ if the minimal densest minor defining $P$ occurs only after the contraction of $Q$.
Alternatively, $Q$ is an ancestor of $P$ if $Q$ always precedes $P$ in any chain of subgraphs with the maximum density that defines the prime sets.
Clearly, the ancestor relation is \emph{transitive}.
If $Q$ is an ancestor of $P$ but not an ancestor of any other ancestors of $P$, then $Q$ is called a \emph{parent} of $P$.

To prove the ancestor relation is well defined, it suffices to show that the parent relation is well defined since the ancestor relation is transitive.
Let $P$ be a prime set of level $k\geq 1$.
Let $G'$ denote the graph obtained from $G$ by contracting all non-parent ancestors of $P$ by levels, i.e.,
first contract all non-parent ancestors of level zero and then repeatedly contract all non-parent ancestors of higher levels.
Let $G''$ denote the graph obtained from $G'$ by contracting all parents of $P$.
Let $e_1, e_2$ be edges from $P$ that become incident at some vertex in $G''$.
Let $Q_1,\ldots,Q_r $ be a minimal collection of parents of $P$ whose contraction concatenates $e_1$ and $e_2$.
Clearly, $G'[\cup_{i=1}^{r} Q_i]$ is connected.
Denote by $v_Q$ the image of $\cup_{i=1}^{r} Q_i$ in $G''$.
We show that $Q_1,\ldots,Q_r$ make the \emph{unique} collection of parents of $P$ that concatenates $e_1$ and $e_2$ at $v_Q$ in $G''$.
Assume to the contrary that there exists another collection of parents of $P$, say $Q'_1,\ldots,Q'_s$, such that $G'[\cup_{i=1}^{s} Q'_i]$ is connected and $v_Q$ is the image of $\cup_{i=1}^{s} Q'_i$ in $G''$.
Then $G'$ has a ``cycle'' consisting of minimal densest subgraphs induced by prime sets from $Q_1,\ldots,Q_r,Q'_1,\ldots,Q'_s$,
which contradicts Lemma \ref{thm:MDS_CommonVertex}.
Hence the parent relation is well defined.

Notice that all ancestors of a prime set constitute the minimal collection of prime sets that have to be contracted before arriving at its corresponding minimal densest minor.
Thus if a densest subgraph contains a prime set, it also contains all ancestors of the prime set.

\begin{lemma}
\label{thm:AncestorInclusion}
Let $H$ be a densest subgraph of $G$.
Let $P$ be a prime set of $G$ and $Q$ be an ancestor of $P$.
Then $P\subseteq E(H)$ implies $Q\subseteq E(H)$.
\end{lemma}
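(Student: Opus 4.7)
The plan is to deduce the claim from the chain characterisation of ancestors by building a chain of subgraphs with the maximum density that defines the prime sets of $G$ and in which $H$ appears as one of the intermediate members. Once such a chain is in hand, the conclusion is immediate: $P$ must occur as one of the increments within the prefix of the chain that builds up $H$, and since $Q$ is an ancestor of $P$ it precedes $P$ in this chain by definition, so $Q$ is also an increment within that prefix, forcing $Q \subseteq E(H)$.

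To construct such a chain, I would first invoke Lemma \ref{thm:PrimePartition} to write $E(H) = P_1 \cup \cdots \cup P_r$ as a union of prime sets, with $P$ among them, and note that $a_f(H) = g(H) = a_f(G)$ since $H$ is a densest subgraph of $G$; in particular the non-prime set of $H$ viewed as its own graph is empty. Then I would run the chain construction from Subsection \ref{sec:AncestorRelation} internally on $H$: starting with $G_0^H = H$, iteratively pick a minimal densest subgraph $S_k$ of $G_{k-1}^H$ and contract to obtain $G_k^H$, for exactly $r$ steps until $H$ is exhausted. The critical verification is that each such $S_k$ is also a minimal densest subgraph of the ambient $(G - E_0)/(\bigcup_{i<k} S_i)$, so that $S_k$ qualifies as a legitimate pick in a chain for $G$; this holds because $S_k$ has density $a_f(G)$ in both settings, and any proper densest subgraph witnessing non-minimality in the ambient contraction would be a proper subgraph of $S_k$ and hence contained in $G_{k-1}^H$, contradicting the minimality of $S_k$ there.

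With these $r$ picks in hand, I would extend the chain beyond $H$ by continuing the procedure in $(G - E_0)/E(H)$, whose fractional arboricity is still $a_f(G)$ by repeated application of Lemma \ref{thm:ArboricityPreservingContraction}, until the chain terminates at $G - E_0$. The resulting chain $H_1 \subsetneq \cdots \subsetneq H_r = H \subsetneq \cdots \subsetneq H_l = G - E_0$ defines all prime sets of $G$ with its first $r$ increments being exactly the prime sets composing $E(H)$, and invoking the ancestor definition on this specific chain then yields $Q \subseteq E(H)$. I expect the main obstacle to lie in the minimality-preservation step, namely confirming that a minimal densest subgraph of a contraction of $H$ remains minimal in the corresponding contraction of $G - E_0$; fortunately this reduces to the simple containment observation noted above together with the fact that densities are preserved under the relevant contractions.
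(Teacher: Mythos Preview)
Your proposal is correct and follows essentially the same reasoning the paper uses. The paper does not give a formal proof of this lemma; it simply notes in the paragraph preceding the statement that the ancestors of $P$ are exactly the prime sets that must be contracted before the minimal densest minor defining $P$ appears, so any densest subgraph containing $P$ must already contain them. Your chain construction through $H$ is the natural way to make that one-line observation precise via the alternative definition of ancestor, and your check that a minimal densest subgraph of a contraction of $H$ remains minimal in the corresponding contraction of $G-E_0$ is the only point requiring care, which you handle correctly.
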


Moreover, the ancestor relation can be determined efficiently.

\begin{lemma}
\label{thm:PrimeSetRelation}
Given the prime partition of $G$,
the ancestor relation can be determined in $O(n^2 m)$.
\end{lemma}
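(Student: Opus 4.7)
My plan is to characterize the parent relation in terms of the persistence of image vertices across the contraction sequence underlying the prime partition, use this to read off parents directly from the graphs $\hat{G}^{(0)},\ldots,\hat{G}^{(\ell)}$, and then obtain the ancestor relation by transitive closure. The key structural claim I would prove first is the following: for a prime set $P$ at level $k\geq 1$, a prime set $Q$ at level $j<k$ is a parent of $P$ if and only if the image vertex $v_Q$ (created when $\hat{G}^{(j)}[Q]$ is contracted) still exists as a vertex of $\hat{G}^{(k)}$ and belongs to $V(\hat{G}^{(k)}[P])$. The ``if'' direction is nearly immediate: $v_Q\in V(\hat{G}^{(k)}[P])$ forces $Q$ to be an ancestor, and any supposed intermediate ancestor $R$ at some level $l$ with $Q$ as an ancestor would have $v_Q\in V(\hat{G}^{(l)}[R])$, so $v_Q$ would be absorbed when $R$ is contracted, contradicting its survival to level $k$. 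The ``only if'' direction is the contrapositive: if $Q$ is an ancestor of $P$ whose image $v_Q$ has been absorbed at some intermediate level $l<k$ into the image of a prime set $R$, then an iterated argument shows that $R$ (or one of its later absorbers) is itself an ancestor of $P$ with $Q$ as an ancestor, so $Q$ is not a parent.

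Given this characterization, the algorithm is direct. I would rebuild the sequence $\hat{G}^{(0)},\ldots,\hat{G}^{(\ell)}$ from the prime partition and, during each contraction, label every newly created image vertex with the prime set that defined it, propagating labels forward whenever a vertex persists uncontracted. For each prime set $P$ at level $k$, iterating over the vertices of $\hat{G}^{(k)}[P]$ and collecting their labels yields exactly the parent list by the claim above. A breadth-first search from each prime set in the parent DAG then produces the full ancestor relation.

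For complexity, there are $\ell=O(n)$ levels since each contraction removes at least one vertex, and rebuilding $\hat{G}^{(k+1)}$ from $\hat{G}^{(k)}$ together with vertex labelling costs $O(m)$, totalling $O(nm)$ for the whole sequence. Collecting parents across all prime sets costs $\sum_{P\in\mathcal{P}} n(P) = O(n^2)$ by Lemma~\ref{thm:PrimePartition} applied level by level, using that at each level the defining minimal densest subgraphs have at most one common vertex pairwise. Transitive closure on an $O(n)$-node DAG via BFS from each node costs $O(n^3)$. Adding up, $O(nm+n^3)=O(n^2m)$ for connected graphs with $m\geq n-1$.

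The main obstacle is the characterization in the first paragraph. While the intuition --- that parents of $P$ are precisely the prime sets whose images survive to form vertices of $\hat{G}^{(k)}[P]$ --- is natural, making it fully rigorous requires the noncrossing and ``cycle''-free properties (Lemmas~\ref{thm:MDS_Noncrossing} and~\ref{thm:MDS_CommonVertex}) applied across multiple levels to rule out ambiguous ancestry patterns, essentially a level-by-level extension of the uniqueness-of-parents argument already sketched in the paragraph preceding this lemma. Once this structural fact is established, the algorithm and its complexity analysis fall out immediately.
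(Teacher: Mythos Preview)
Your central structural claim is false, and this breaks the approach. You assert that for $P$ at level $k$ and $Q$ at level $j<k$, the condition ``$v_Q$ survives to $\hat{G}^{(k)}$ and lies in $V(\hat{G}^{(k)}[P])$'' is equivalent to $Q$ being a parent of $P$; in particular you claim it forces $Q$ to be an ancestor. But membership of the image vertex in $V(\hat{G}^{(k)}[P])$ does not imply that contracting $Q$ was necessary for $P$ to become a minimal densest subgraph: it can happen that every edge of $P$ incident to $v_Q$ already met a single common vertex of $\hat{G}^{(j)}[Q]$ before the contraction, so that omitting the contraction of $Q$ leaves $n(\hat{G}^{(k)}_{-Q}[P])$ unchanged. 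Concretely, take $G$ on $\{a,b,c,d,e,f,g\}$ with level-$0$ triangles $Q=\{ab,bc,ca\}$ and $R=\{de,ef,fd\}$ and three further edges $P=\{ad,eg,ag\}$. Then $a_f(G)=3/2$, the level-$0$ prime sets are $Q$ and $R$, and in $\hat{G}^{(1)}$ the set $P$ induces the triangle on $\{v_Q,v_R,g\}$, so $P$ is the unique level-$1$ prime set. Here $v_Q\in V(\hat{G}^{(1)}[P])$, yet both $P$-edges touching $v_Q$ came from the single vertex $a$; one checks $n(\hat{G}^{(1)}_{-Q}[P])=3=n(\hat{G}^{(1)}[P])$, so by the paper's own criterion $Q$ is \emph{not} an ancestor of $P$ (while $R$ is). Your labelling scheme would wrongly declare $Q$ a parent.

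The paper avoids this pitfall by testing ancestry directly rather than inferring it from vertex membership: for each candidate pair $(P,Q)$ it deletes $Q$, replays the level-by-level contractions using the already-known prime sets, and compares $n(\hat{G}^{(k)}_{-Q}[P])$ with $n(\hat{G}^{(k)}[P])$. Each such replay costs $O(m)$, there are $O(n)$ candidates $Q$ per prime set $P$, and $O(n)$ prime sets, giving $O(n^2 m)$. If you want to rescue your approach you would need to strengthen the vertex criterion to detect whether at least two $P$-edges meeting at $v_Q$ trace back to distinct vertices of $\hat{G}^{(j)}[Q]$; but once you track that information you are essentially reproducing the paper's pairwise test.
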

\begin{proof}
Let $P$ be a prime set of level $k+1$, where $k\geq 0$.
We show that all ancestors of $P$ can be determined in $O(nm)$.
To determine all ancestors of $P$, it suffices to check every prime set $Q$ of level less than $k+1$.
Let $\hat{G}^{(l+1)}_{-Q}$ denote the graph obtained from $\hat{G}^{(l)}_{-Q}$ by contracting all edges in prime sets of level $l$, where $\hat{G}^{(0)}_{-Q}=G-Q$.
Then $Q$ is an ancestor of $P$ if and only if $n(\hat{G}^{(k+1)}_{-Q}[P])\not=n(\hat{G}^{(k+1)}[P])$.
Since there are $O(n)$ prime sets, all ancestors of $P$ can be determined in $O(nm)$.
Therefore, all ancestors for $O(n)$ prime sets can be determined in $O(n^2 m)$.
\end{proof}

We conclude this section with a partially ordered set defined from the prime sets and the ancestor relation.
Indeed, if we view every prime set as an ancestor of itself, then the ancestor relation naturally yields a partial order on the prime sets.
Write $P\prec Q$ for any two prime sets $P$ and $Q$ if $Q$ is an ancestor of $P$.
Consequently, a partial order $\prec$ is defined on the prime sets from the ancestor relation.

\section{Computing the nucleolus}
\label{sec:Nucleolus}

In this section, we develop an efficient algorithm for computing the nucleolus of arboricity games when the core is not empty.
In Subsection \ref{sec:Reformulation}, we employ the prime partition of the underlying graph to reformulate linear programs involved in Maschler's scheme.
In Subsection \ref{sec:Equivalence}, we prove the correctness of our formulation for Maschler's scheme.
In Subsection \ref{sec:CombinatorialAlgorithm}, we show that Maschler's scheme always terminates on the second round and the nucleolus can be computed in polynomial time.
Throughout this section, in addition to assuming that graph $G=(V,E)$ is connected, we further assume that arboricity game $\Gamma_G=(N,\gamma)$ has a nonempty core.

\subsection{Reformulating Maschler's scheme}
\label{sec:Reformulation}

To compute the nucleolus of $\Gamma_G$, the first round of Maschler's scheme is to solve linear program $LP_1$ \eqref{eq:Nucleolus_LP1_0}-\eqref{eq:Nucleolus_LP1_3} defined from the standard characterization for the core.
By referring to the alternative characterization for the core in Lemma \ref{thm:Core},
we introduce linear program $LP'_1$ \eqref{eq:Nucleolus_LP1_New_0}-\eqref{eq:Nucleolus_LP1_New_3}.
For any constant $\epsilon$, let $P'_1(\epsilon)$ denote the set of vectors $\boldsymbol{x}\in \mathbb{R}^E$ such that $(\boldsymbol{x},\epsilon)$ satisfies \eqref{eq:Nucleolus_LP1_New_1}-\eqref{eq:Nucleolus_LP1_New_3}.
We show that $LP_1$ and $LP'_1$ are equivalent.
\begin{alignat}{3}
\max\quad & \epsilon &{}& \label{eq:Nucleolus_LP1_New_0}\\
\lplabel[lp1_v2]{$(LP'_1)$\quad}\mbox{s.t.}\quad
 &x(E) = \gamma (E), &\quad &  \label{eq:Nucleolus_LP1_New_1}\\
 &x(T) + \epsilon\leq 1, &\quad &\forall~ T\in \mathcal{T}, \label{eq:Nucleolus_LP1_New_2}\\
 &x_e \geq 0, &\quad &\forall~ e\in E. \label{eq:Nucleolus_LP1_New_3}
\end{alignat}

\begin{lemma}
\label{thm:LeastCore_Alternative}
Let $\epsilon_1$ and $\epsilon'_1$ be the optimal value of $LP_1$ and $LP'_1$ respectively.
Then $\epsilon_1=\epsilon'_1$ and $P_1(\epsilon_1)=P'_1(\epsilon'_1)$.
\end{lemma}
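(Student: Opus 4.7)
The plan is to mimic the proof of Lemma \ref{thm:Core}, but carry the extra parameter $\epsilon$ through the forest-cover argument and exploit the standing assumption that $\mathcal{C}(\Gamma_G)\neq \emptyset$, which forces $\epsilon_1\geq 0$.

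First I would establish the easy direction $\epsilon'_1\geq \epsilon_1$ and $P_1(\epsilon_1)\subseteq P'_1(\epsilon_1)$. Every spanning tree $T$ satisfies $\gamma(T)=a(G[T])=1$, so the constraint \eqref{eq:Nucleolus_LP1_2} at $S=T$ is precisely \eqref{eq:Nucleolus_LP1_New_2}. Hence any feasible pair $(\boldsymbol{x},\epsilon)$ for $LP_1$ is feasible for $LP'_1$, giving $\epsilon'_1\geq \epsilon_1$. Since the core is nonempty, Theorem \ref{thm:CoreNonempty} gives $\epsilon_1\geq 0$, so also $\epsilon'_1\geq 0$.

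Next I would prove the harder direction: if $(\boldsymbol{x},\epsilon)$ is feasible for $LP'_1$ with $\epsilon\geq 0$, then it is feasible for $LP_1$. Take any $S\in 2^N\setminus\{\emptyset,N\}$ and let $k=\gamma(S)=a(G[S])$. Decompose $G[S]=F_1\cup\cdots\cup F_k$ into forests and extend each $F_i$ to a spanning tree $T_i$ of $G$. From \eqref{eq:Nucleolus_LP1_New_2} we get $x(T_i)\leq 1-\epsilon$, hence
\begin{equation*}
x(S)=\sum_{i=1}^{k}x(F_i)\leq \sum_{i=1}^{k}x(T_i)\leq k(1-\epsilon)=\gamma(S)-k\epsilon,
\end{equation*}
so $x(S)+\epsilon\leq \gamma(S)-(k-1)\epsilon\leq \gamma(S)$, using $\epsilon\geq 0$ and $k\geq 1$. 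Thus \eqref{eq:Nucleolus_LP1_2} holds. Applying this at the $LP'_1$ optimum $(\boldsymbol{x},\epsilon'_1)$, which satisfies $\epsilon'_1\geq 0$, shows $\boldsymbol{x}$ is $LP_1$-feasible with value $\epsilon'_1$, hence $\epsilon_1\geq \epsilon'_1$. Combined with the first step, $\epsilon_1=\epsilon'_1$.

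Finally, the same two-way feasibility transfer applied at $\epsilon=\epsilon_1=\epsilon'_1\geq 0$ gives $P_1(\epsilon_1)=P'_1(\epsilon'_1)$. The only real subtlety is the asymmetry $k\epsilon$ versus $\epsilon$ appearing when one sums the $k$ spanning-tree inequalities; this is precisely absorbed by the sign condition $\epsilon\geq 0$, which is exactly what the core nonemptiness assumption buys us. Without that assumption the two LPs would genuinely differ, so I would emphasize where this hypothesis is used.
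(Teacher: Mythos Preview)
Your argument is correct and follows essentially the same approach as the paper's proof: both use that $LP'_1$ is a relaxation of $LP_1$ for one inequality, and for the other extend a minimum forest cover of $G[S]$ to spanning trees, sum the spanning-tree constraints, and absorb the extra $(k-1)\epsilon$ using $\epsilon\ge 0$ from the core-nonemptiness assumption. One small remark: the fact $\epsilon_1\ge 0$ follows directly from $\mathcal{C}(\Gamma_G)=P_1(0)\neq\emptyset$ rather than from Theorem~\ref{thm:CoreNonempty}, but this does not affect the argument.
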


\begin{proof}
We first show that $\epsilon_1=\epsilon'_1$.
It is easy to see that $\epsilon_1\leq \epsilon'_1$, since $LP'_1$ is a relaxation of $LP_1$.
Let $S\subseteq E$ and $\mathcal{C}_S$ be a minimum forest cover in $G[S]$.
For any $\boldsymbol{x}\in P'_1(\epsilon'_1)$, Lemma \ref{thm:Core} implies that
\begin{equation}
\gamma (S)-x(S) = \sum_{T_S\in \mathcal{C}_S} 1 -\sum_{e\in S} x_e = \sum_{T_S\in \mathcal{C}_S} (1-\sum_{e\in T_S} x_e) \geq \sum_{T_S\in \mathcal{C}_S} \epsilon'_1\geq \epsilon'_1.
\end{equation}
We remark that the last inequality follows from the assumption that $\mathcal{C}(\Gamma_G)\not=\emptyset$ which implies $\epsilon_1\geq 0$.
By the optimality of $\epsilon_1$, we have $\epsilon_1\geq \epsilon'_1$.
Thus $\epsilon_1=\epsilon'_1$ follows.

Next we show that $P_1(\epsilon_1)=P'_1(\epsilon'_1)$.
Clearly, $P_1(\epsilon_1)\subseteq P'_1 (\epsilon'_1)$, since $\epsilon_1=\epsilon'_1$ and  $LP'_1$ is a relaxation of $LP_1$.
Since for any $\boldsymbol{x}\in P'_1 (\epsilon'_1)$, $\boldsymbol{x}$ also satisfies the constraints of $LP_1$ and gives the optimum.
Then $\boldsymbol{x}\in P_1 (\epsilon_1)$, which implies that $P'_1 (\epsilon'_1)\subseteq P_1 (\epsilon_1)$.
Thus, $P_1(\epsilon_1)=P'_1(\epsilon'_1)$.
\end{proof}

Before proceeding to the second round of Maschler's scheme, we have to determine the optimal value $\epsilon'_1$ of $LP'_1$.
Clearly, $\epsilon'_1 \geq 0$ as $\mathcal{C}(\Gamma_G)\not=\emptyset$.
Assume that $\gamma(E)=k$ and that $G$ can be covered by $k$ disjoint forests $F_1,\ldots,F_k$.
Let $T_i$ be a spanning tree containing $F_i$ and $\boldsymbol{x}\in \mathcal{C}(\Gamma_G)$.
Clearly, $x(F_i)\leq x(T_i)\leq \gamma(T_i)=1$.
It follows that
$x(E)=\sum_{i=1}^{k} x(F_i)\leq \sum_{i=1}^{k} x(T_i)\leq k=\gamma (E)$.
Then $x(E)=\gamma(E)$ implies $x(F_i)=x(T_i)=1$.
Hence $\epsilon'_1=0$, implying that the core $P'_1(0)$ and the least core $P'_1 (\epsilon'_1)$ coincide.
Consequently, there are spanning trees $T\in \mathcal{T}$ such that $x(T)=1$ for any $\boldsymbol{x}\in \mathcal{C}(\Gamma_G)$, i.e., $T$ is fixed by $P'_1 (\epsilon'_1)$.
Denote by $\mathcal{T}_0$ the set of spanning trees that are fixed by $P'_1 (\epsilon'_1)$.
Let $E_0$ denote the set of edges that are not in any densest subgraph of $G$.
Corollary \ref{thm:NonPrimeSet} implies that $x_e=\epsilon'_1=0$ for any $e\in E_0$.
By Lemma \ref{thm:LeastCore_Alternative}, the second round of Maschler's scheme can be formulated as $LP'_2$ from $LP'_1$.
\begin{alignat}{3}
\max\quad  &{\epsilon}  &{}& \label{eq:Nucleolus_LP2v2_0}\\
\mbox{s.t.}\quad
&x(T)+\epsilon \leq 1, &\qquad& \forall ~ T \in \mathcal{T}\backslash \mathcal{T}_0, \label{eq:Nucleolus_LP2v2_1}\\
\lplabel[lp2_v2]{$(LP'_2)$\quad\quad}
&x(T) = 1, &\qquad& \forall ~T \in \mathcal{T}_0, \label{eq:Nucleolus_LP2v2_2}\\
&x_e \geq \epsilon, &\qquad& \forall ~ e\in E\backslash E_0, \label{eq:Nucleolus_LP2v2_3}\\ 
&x_e = 0, &\qquad& \forall ~ e\in E_0. \label{eq:Nucleolus_LP2v2_4}
\end{alignat}

However, $LP'_2$ still has an exponential number of constraints. 
We derive an equivalent formulation of $LP'_2$ that has only polynomial size by resorting to the prime partition of $G$.
Notice that $E_0$ in \eqref{eq:Nucleolus_LP2v2_4} is precisely the non-prime set of $G$.
Let $\mathcal{P}=\cup_{k} \mathcal{P}_k$ denote the collection of all prime sets of $G$, where $\mathcal{P}_k$ is the collection of all prime sets of level $k$.
Let $\mathcal{E}=\mathcal{P}\cup \{E_0\}$ denote the the prime partition of $G$.
Corollary \ref{thm:NonPrimeSet} states that all edges in the non-prime set have the same value in a core allocation.
It turns out that this property also holds for edges from the same prime set.

\begin{lemma}
\label{thm:PrimeSet_PartialOrder}
Let $\boldsymbol{x}$ be a core allocation of $\Gamma_G$ and $P$ be a prime set of $G$.
Then $x_e=x_f$ for any $e,f\in P$. 
\end{lemma}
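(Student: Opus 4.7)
The plan is to deduce this from the explicit description of the core already obtained in Corollary \ref{thm:CoreCoverHull} together with the generalized noncrossing property in Lemma \ref{thm:NonCrossingPrimeSet}. The key observation is that, for any densest subgraph $H$ of $G$ and any prime set $P$, Lemma \ref{thm:NonCrossingPrimeSet} forces the indicator $\mathbf{1}[e\in E(H)]$ to be constant as $e$ ranges over $P$: it equals $1$ when $P\subseteq E(H)$ and $0$ when $P\cap E(H)=\emptyset$. Consequently, the extreme points of the core polytope assign identical coordinates to edges lying in the same prime set.

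Concretely, first I would appeal to Corollary \ref{thm:CoreCoverHull} to write an arbitrary core allocation as
\begin{equation*}
\boldsymbol{x}=\sum_{H\in\mathcal{D}} \alpha_H\,\frac{\boldsymbol{\lambda}^{H}}{n(H)-1},
\end{equation*}
where $\mathcal{D}$ is the collection of densest subgraphs of $G$, $\alpha_H\geq 0$, and $\sum_H \alpha_H=1$. Reading off the $e$-th coordinate,
\begin{equation*}
x_e=\sum_{H\in\mathcal{D}:\,e\in E(H)} \frac{\alpha_H}{n(H)-1}.
\end{equation*}
Now I would pick $e,f\in P$ and apply Lemma \ref{thm:NonCrossingPrimeSet} to every $H\in\mathcal{D}$: since either $P\subseteq E(H)$ or $P\cap E(H)=\emptyset$, membership of $e$ in $E(H)$ coincides with membership of $f$ in $E(H)$. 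Therefore the two index sets $\{H\in\mathcal{D}:e\in E(H)\}$ and $\{H\in\mathcal{D}:f\in E(H)\}$ agree, and the summation expressions for $x_e$ and $x_f$ are identical term by term, giving $x_e=x_f$.

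I do not anticipate a genuine obstacle here; the statement is essentially the coordinate-wise translation of the noncrossing property through the extreme-point representation of the core. The only delicate point worth being careful about is that Lemma \ref{thm:NonCrossingPrimeSet} applies uniformly to prime sets of every level (including $k\geq 1$ obtained after contractions), so no separate argument for prime sets of higher levels is needed. One could also phrase the proof without invoking the convex-hull description at all, by directly using that the two tight constraints $x(T)\leq 1$ and the equality $x(E)=\gamma(E)$ force symmetry across a prime set, but going through Corollary \ref{thm:CoreCoverHull} yields the cleanest and shortest argument.
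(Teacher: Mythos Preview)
Your proof is correct and is essentially identical to the paper's own argument: both invoke Corollary~\ref{thm:CoreCoverHull} to write a core allocation as a convex combination of the vectors $\frac{\boldsymbol{\lambda}^H}{n(H)-1}$ and then apply Lemma~\ref{thm:NonCrossingPrimeSet} to conclude that each extreme point assigns the same value to all edges of a prime set. The only cosmetic difference is that the paper checks $x^H_e=x^H_f$ for each extreme point first and then passes to convex combinations, whereas you write out the coordinate formula and compare index sets directly.
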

\begin{proof}
Let $\boldsymbol{x}^H$ be the vector associated with a densest subgraph $H$ of $G$.
By Lemma \ref{thm:NonCrossingPrimeSet}, either $P\subseteq E(H)$ or $P\cap E(H)=\emptyset$.
Thus for any $e,f\in P$, $x^H_e=x^H_f=\frac{1}{n(H)-1}$ if $P\subseteq E(H)$ and $x^H _e=x^H_f=0$ otherwise.
Since any vector $\boldsymbol{x}\in \mathcal{C}(\Gamma_G)$ is a convex combination of vectors associated with a densest subgraph of $G$, we have $x_e=x_f$ for any $e,f\in P$.
\end{proof}

Corollary \ref{thm:NonPrimeSet} and Lemma \ref{thm:PrimeSet_PartialOrder} state that 
all edges in the same set of the prime partition have the same value in a core allocation.
Hence every core allocation $\boldsymbol{x}\in \mathbb{R}^{E}$ of $\Gamma_G$ defines a vector $\boldsymbol{y}\in \mathbb{R}^{\mathcal{E}}$ associated with the prime partition of $G$.
Moreover, $LP'_2$ can be reformulated with $\boldsymbol{y}$.
Let $(\mathcal{P},\prec)$ denote the partially ordered set defined on $\mathcal{P}$ from the ancestor relation.
Let $\mathcal{P}_{\min}$ denote the set of minimal prime sets in $(\mathcal{P},\prec)$.
Denote by $LP''_2$ the linear program \eqref{eq:Nucleolus_LP2v3_0}-\eqref{eq:Nucleolus_LP2v3_4} defined on $\boldsymbol{y}$.
In Subsection \ref{sec:Equivalence}, we show that $LP'_2$ and $LP''_2$ are equivalent, i.e., $\boldsymbol{x}$ is feasible to $LP'_2$ if and only if $\boldsymbol{y}$ is feasible to $LP''_2$.
In Subsection \ref{sec:CombinatorialAlgorithm}, we propose a combinatorial algorithm for $LP''_2$ and show that $LP''_2$ has a unique optimal solution which yields the nucleolus of $\Gamma_G$.
\begin{alignat}{3}
\hspace{4.5em}\max\quad  &{\epsilon}  &{}&\label{eq:Nucleolus_LP2v3_0}\\
\mbox{s.t.}\quad
& y_P  +\epsilon\leq y_Q, &\quad& \forall ~ P\prec Q, \label{eq:Nucleolus_LP2v3_1}\\
\lplabel[lp]{$(LP''_2)$\quad~~\,}
&\textstyle\sum_{P\in \mathcal{P}} [n (P)-1] y_P = 1, &\qquad& \label{eq:Nucleolus_LP2v3_2}\\
& y_P \geq \epsilon, &\quad& \forall ~P\in \mathcal{P}_{\min},\label{eq:Nucleolus_LP2v3_3}\\
& y_{E_0}=0. \label{eq:Nucleolus_LP2v3_4}
\end{alignat}

\subsection{Equivalence of $LP'_2$ and $LP''_2$}
\label{sec:Equivalence}

Let $\boldsymbol{x}\in \mathcal{C}(\Gamma_G)$ and $\boldsymbol{y}\in \mathbb{R}^{\mathcal{E}}$ be a pair of associated vectors.
We show that $\boldsymbol{x}$ is feasible to $LP'_2$ if and only if $\boldsymbol{y}$ is feasible to $LP''_2$.
By Corollary \ref{thm:NonPrimeSet}, it is trivial that \eqref{eq:Nucleolus_LP2v2_4} and \eqref{eq:Nucleolus_LP2v3_4} are equivalent.
The equivalence of \eqref{eq:Nucleolus_LP2v2_3} and \eqref{eq:Nucleolus_LP2v3_3} follows from Lemma \ref{thm:PrimeSet_PartialOrder} and the observation below.

\begin{lemma}
Let $\boldsymbol{x}$ be a core allocation of $\Gamma_G$ and $P,Q$ be two prime sets of $G$ such that $Q$ is an ancestor of $P$.
Then $x_e \leq x_f$ for any $e\in P$ and any $f\in Q$. 
\end{lemma}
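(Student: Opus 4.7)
The plan is to reduce the inequality $x_e \leq x_f$ to a pointwise comparison on the extreme points of the core, then pull it back to $\boldsymbol{x}$ by convexity. By Corollary \ref{thm:CoreCoverHull}, every core allocation can be written as a convex combination $\boldsymbol{x}=\sum_{H} \mu_H \frac{\boldsymbol{\lambda}^H}{n(H)-1}$ where the sum runs over densest subgraphs $H$ of $G$ and $\mu_H\geq 0$ with $\sum_H \mu_H = 1$. It therefore suffices to show that for every densest subgraph $H$ of $G$ and every pair $e\in P$, $f\in Q$, one has $\frac{\lambda^H_e}{n(H)-1}\leq \frac{\lambda^H_f}{n(H)-1}$, and then to sum.

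To establish this pointwise inequality, I would apply Lemma \ref{thm:NonCrossingPrimeSet} to each of $P$ and $Q$ separately: relative to a densest subgraph $H$, each prime set is either fully contained in $E(H)$ or disjoint from $E(H)$. The crucial input is then Lemma \ref{thm:AncestorInclusion}, which says that since $Q$ is an ancestor of $P$, the inclusion $P\subseteq E(H)$ forces $Q\subseteq E(H)$. This rules out the only problematic case and leaves three situations to analyse:
\begin{itemize}
\item $P\subseteq E(H)$, in which case also $Q\subseteq E(H)$, so $\lambda^H_e=\lambda^H_f=1$;
\item $P\cap E(H)=\emptyset$ and $Q\subseteq E(H)$, so $\lambda^H_e=0\leq 1=\lambda^H_f$;
\item $P\cap E(H)=\emptyset$ and $Q\cap E(H)=\emptyset$, so $\lambda^H_e=\lambda^H_f=0$.
\end{itemize}
In every case $\lambda^H_e\leq \lambda^H_f$, and the fourth combination $P\subseteq E(H)$ with $Q\cap E(H)=\emptyset$ is excluded by Lemma \ref{thm:AncestorInclusion}.

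Finally I would combine the two ingredients: multiplying each inequality by $\mu_H/(n(H)-1)\geq 0$ and summing over $H$ gives
\begin{equation*}
x_e=\sum_{H}\mu_H\frac{\lambda^H_e}{n(H)-1}\leq \sum_{H}\mu_H\frac{\lambda^H_f}{n(H)-1}=x_f,
\end{equation*}
which is the claim. I do not foresee a real obstacle here: once Corollary \ref{thm:CoreCoverHull} reduces the problem to extreme-point vectors, the argument is a short case check, and the ancestor-inclusion lemma precisely removes the one case that would otherwise break monotonicity. The only point that needs care is making sure the case analysis is exhaustive and that the excluded case is genuinely the one ruled out by Lemma \ref{thm:AncestorInclusion}.
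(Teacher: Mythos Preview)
Your proposal is correct and follows essentially the same approach as the paper: reduce to the extreme points of the core via Corollary \ref{thm:CoreCoverHull}, use Lemma \ref{thm:AncestorInclusion} to rule out the one bad case in the pointwise comparison, and then pass back to $\boldsymbol{x}$ by convexity. The paper's argument is slightly terser (it collapses your three non-excluded cases into two), but the logic is identical.
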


\begin{proof}
Let $\boldsymbol{x}^H$ be the vector associated with a densest subgraph $H$ of $G$.
Lemma \ref{thm:AncestorInclusion} implies that $Q\subseteq E(H)$ if $P\subseteq E(H)$.
If follows that for any $e\in P$ and any $f\in Q$, $x^H_e=x^H_f=\frac{1}{n(H)-1}$ if $P\subseteq E(H)$ and $x^H_e=0\leq x^H_f$ otherwise.
Since any vector $\boldsymbol{x}\in \mathcal{C}(\Gamma_G)$ is a convex combination of vectors associated with densest subgraphs of $G$, we have $x_e \leq x_f$ for any $e\in P$ and any $f\in Q$.
\end{proof}

Next, we show the equivalence of \eqref{eq:Nucleolus_LP2v2_2} and \eqref{eq:Nucleolus_LP2v3_2}.
Notice that \eqref{eq:Nucleolus_LP2v2_2} provides a characterization for trees in $\mathcal{T}_0$ with $\boldsymbol{x}\in \mathbb{R}^{E}$.
Thus \eqref{eq:Nucleolus_LP2v3_2} serves the same purpose.
To associate trees in $\mathcal{T}_0$ with $\boldsymbol{y}\in \mathbb{R}^{\mathcal{E}}$,
we introduce the following lemma.

\begin{lemma}
\label{thm:TightSpanningTree}
A spanning tree $T$ belongs to $\mathcal{T}_0$ if and only if for any prime set $P$ we have
\begin{equation}
\label{eq:TightSpanningTree}
\lvert T\cap P\rvert=n(P)-1.
\end{equation}
\end{lemma}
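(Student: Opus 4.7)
The plan is to translate membership in $\mathcal{T}_0$ into a condition indexed by densest subgraphs, using the extreme-point description of the core, and then pass between densest subgraphs and prime sets via the prime-set decomposition. The first step is to observe that, since $\boldsymbol{x} \mapsto x(T)$ is linear and $\mathcal{C}(\Gamma_G)$ is a polytope, a spanning tree $T$ lies in $\mathcal{T}_0$ precisely when $x(T) = 1$ holds on every extreme point of $\mathcal{C}(\Gamma_G)$. By Corollary \ref{thm:CoreCoverHull} these extreme points are the vectors $\boldsymbol{\lambda}^{H}/(n(H)-1)$ as $H$ ranges over densest subgraphs of $G$, and on such a vertex $x(T) = |T \cap E(H)|/(n(H) - 1)$. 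Hence $T \in \mathcal{T}_0$ if and only if $|T \cap E(H)| = n(H) - 1$ for every densest subgraph $H$ of $G$; call this condition $(\star)$.

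For the ``if'' direction of \eqref{eq:TightSpanningTree}, suppose $|T \cap P| = n(P) - 1$ for every prime set $P$. Pick any densest subgraph $H$; by Lemma \ref{thm:PrimePartition} its edge set decomposes as $E(H) = P_{1} \cup \dots \cup P_{r}$ with $n(H) - 1 = \sum_{i=1}^{r}[n(P_{i}) - 1]$. Summing the prime-set equalities yields $|T \cap E(H)| = n(H) - 1$, so $(\star)$ holds and $T \in \mathcal{T}_0$.

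For the ``only if'' direction, fix a prime set $P$ and construct a chain of densest subgraphs $H_{1} \subsetneq H_{2} \subsetneq \dots \subsetneq H_{l} = G - E_{0}$ of the form described in Subsection \ref{sec:AncestorRelation}, with incremental edge sets $S_{j} = E(H_{j}) \setminus E(H_{j-1})$, such that $P = S_{k}$ for some index $k$. Producing such a chain is the main obstacle and is where the partial order $\prec$ enters: starting from $G_{0} = G - E_{0}$, successively contract the strict ancestors of $P$ in a level-respecting order so that $P$ becomes exactly the edge set of a minimal densest subgraph of the currently contracted graph; then choose $P$ as the next increment $S_{k}$ and complete the chain by absorbing the remaining minimal densest subgraphs in any admissible order. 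This is feasible because the very definition of an ancestor guarantees that, once all ancestors of $P$ have been contracted, $P$ itself is a minimal densest subgraph of the current contracted graph. For such a chain, the case $k = 1$ gives $H_{1} = G[P]$ with $n(H_{1}) = n(P)$ directly, while for $k \geq 2$ applying Lemma \ref{thm:PrimePartition} to both $H_{k}$ and $H_{k-1}$ yields $n(H_{k}) - n(H_{k-1}) = n(S_{k}) - 1 = n(P) - 1$. Combined with $(\star)$, telescoping gives $|T \cap P| = |T \cap E(H_{k})| - |T \cap E(H_{k-1})| = [n(H_{k}) - 1] - [n(H_{k-1}) - 1] = n(P) - 1$, as required.
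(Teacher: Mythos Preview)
Your reformulation $(\star)$ and the ``if'' direction are correct and match the paper. The gap is in the ``only if'' direction: you apply both $(\star)$ and Lemma~\ref{thm:PrimePartition} to $H_{k-1}=G[\bigcup_{j<k}S_j]$, but both statements are proved only for \emph{connected} densest subgraphs, and $H_{k-1}$ need not be connected. Concretely, take $P$ of level one whose two level-zero ancestors $Q_1,Q_2$ are vertex-disjoint triangles in $G$ (their images are distinct vertices of $\hat{G}^{(1)}[P]$); then $H_{k-1}=G[Q_1\cup Q_2]$ has two components, and since $T$ is acyclic one always has $\lvert T\cap E(H_{k-1})\rvert\le n(H_{k-1})-2$, so the equality $\lvert T\cap E(H_{k-1})\rvert=n(H_{k-1})-1$ you use is false. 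The chain description in Subsection~\ref{sec:AncestorRelation} only asserts that each $H_j$ has maximum density, not that it is connected, so you cannot borrow connectedness from there either.

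The paper sidesteps this by inducting on the level of $P$: for the minimal connected densest subgraph $H$ containing $P$ one has $E(H)=P\cup\bigcup_i Q_i$ with each $Q_i$ a strict ancestor; the induction hypothesis gives $\lvert T\cap Q_i\rvert=n(Q_i)-1$ for every $i$ individually, and then $(\star)$ for $H$ plus Lemma~\ref{thm:PrimePartition} yield $\lvert T\cap P\rvert=n(P)-1$. Your telescoping can be salvaged by decomposing $H_{k-1}$ into its connected components and applying $(\star)$ to each (which gives $\lvert T\cap E(H_{k-1})\rvert=n(H_{k-1})-c(H_{k-1})$ and, after the matching correction from Lemma~\ref{thm:PrimePartition} on each component, the desired $n(P)-1$), but carrying that out amounts to re-running the paper's induction.
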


\begin{proof}
$(\Leftarrow)$
Assume that $T$ is a spanning tree satisfying \eqref{eq:TightSpanningTree} for any prime set $P$.
Let $H$ be a densest subgraph of $G$.
The vector $\boldsymbol{x}^H$ associated with $H$ is defined by $x^H_e=\frac{1}{n(H)-1}$ if $e\in E(H)$ and $x^H_e=0$ otherwise.
By Lemma \ref{thm:PrimePartition}, we have $E(H)=\cup_{i=1}^{r} P_i$ and $n(H)=\sum_{i=1}^{r} [n (P_i)-1]+1$, where $P_i\in \mathcal{P}$ for $i=1,\ldots,r$.
Thus
\begin{equation*}
x^H (T)=\frac{\sum_{i=1}^{r} \lvert T\cap P_i\rvert}{n(H)-1}=\frac{\sum_{i=1}^{r} [n (P_i)-1]}{n(H)-1}=\frac{n(H)-1}{n(H)-1}=1.
\end{equation*}
Since every vector in $\mathcal{C}(\Gamma_G)$ is a convex combination of vectors associated with densest subgraphs of $G$, we have $x(T)=1$ for any $\boldsymbol{x} \in \mathcal{C}(\Gamma_G)$, implying that $T\in \mathcal{T}_0$.

$(\Rightarrow)$
Assume that $T\in \mathcal{T}_0$, i.e., $T$ is a spanning tree such that $x(T)=1$ for any $\boldsymbol{x}\in \mathcal{C}(\Gamma_G)$.
Among all densest subgraphs of $G$ containing $P$, let $H$ be a minimal one.
We apply induction on the level of prime set $P\in \mathcal{P}$.

First assume that $P \in \mathcal{P}_0$.
Hence $P=E(H)$, implying that $H$ is a densest subgraph of $G$.
Then the vector $\boldsymbol{x}^H\in \mathcal{C}(\Gamma_G)$ associated with $H$ is defined by $x^H_e=\frac{1}{n(H)-1}$ for $e\in P$ and $x^H_e=0$ otherwise.
Since $x^H(T)=\frac{1}{n(H)-1} \lvert T\cap P\rvert = 1$, it follows that $\lvert T\cap P\rvert=n(H)-1=n (P)-1$.
Hence \eqref{eq:TightSpanningTree} holds for $P$.

Now assume that $P\in \mathcal{P}_k$, where $k\geq 1$.
Lemma \ref{thm:PrimePartition} implies that there exist prime sets $Q_1,\ldots,Q_r$ such that $E(H)=P\cup (\cup_{i=1}^{r} Q_i )$.
The minimality of $H$ implies that $P\prec Q_i$ for $i=1,\ldots,r$.
By induction hypothesis, we have $\lvert T\cap (\cup_{i=1}^{r} Q_i)\rvert=\sum_{i=1}^{r} [n (Q_i)-1]$.
Then the vector $\boldsymbol{x}^H\in \mathcal{C}(\Gamma)$ associated with $H$ is defined by $x^H_e=\frac{1}{n(H)-1}$ if $e\in P\cup (\cup_{i=1}^{r} Q_i)$ and $x^H_e=0$ otherwise.
Since
\begin{equation*}
    x^H (T) = \frac{\lvert T\cap [P\cup(\cup_{i=1}^{r} Q_i)]\rvert}{n(H)-1} = \frac{\lvert T\cap P\rvert+\sum_{i=1}^{r} [n (Q_i)-1]}{n(H)-1} =1,
\end{equation*}
Lemma \ref{thm:PrimePartition} implies that $\lvert T\cap P\rvert=n(H)-1-\sum_{i=1}^{r} [n (Q_i)-1]=n (P)-1$.
Hence \eqref{eq:TightSpanningTree} holds for $P\in \mathcal{P}_k$ where $k\geq 1$.
\end{proof}

Now we are ready to prove the equivalence of \eqref{eq:Nucleolus_LP2v2_2} and \eqref{eq:Nucleolus_LP2v3_2}.

\begin{lemma}
Let $\boldsymbol{x}\in \mathcal{C}(\Gamma_G)$ be a vector and $\boldsymbol{y}\in \mathbb{R}^{\mathcal{E}}$ be the vector defined from $\boldsymbol{x}$.
Then $\boldsymbol{x}$ satisfies \eqref{eq:Nucleolus_LP2v2_2} if and only if $\boldsymbol{y}$ satisfies \eqref{eq:Nucleolus_LP2v3_2}.
\end{lemma}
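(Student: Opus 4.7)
The plan is to use Lemma \ref{thm:TightSpanningTree}, together with the definition of $\boldsymbol{y}$ from $\boldsymbol{x}$, to rewrite $x(T)$ for $T \in \mathcal{T}_0$ as an expression depending only on the data of the prime partition. Recall that, by Corollary \ref{thm:NonPrimeSet} and Lemma \ref{thm:PrimeSet_PartialOrder}, edges lying in the same member of $\mathcal{E}$ receive a common $\boldsymbol{x}$-value; thus $\boldsymbol{y}$ is well defined by $y_P = x_e$ for any $e \in P \in \mathcal{P}$ and $y_{E_0} = 0$. For any spanning tree $T$, decomposing $T$ along the prime partition yields
\begin{equation*}
x(T) \;=\; \sum_{P \in \mathcal{P}} \lvert T \cap P\rvert \, y_P \;+\; \lvert T \cap E_0\rvert \, y_{E_0} \;=\; \sum_{P \in \mathcal{P}} \lvert T \cap P\rvert \, y_P,
\end{equation*}
so controlling $x(T)$ reduces to knowing the cardinalities $\lvert T \cap P\rvert$.

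For the forward direction I would invoke Lemma \ref{thm:TightSpanningTree}: every $T \in \mathcal{T}_0$ satisfies $\lvert T \cap P\rvert = n(P)-1$ for every prime set $P$. Moreover, $\mathcal{T}_0$ is nonempty, as established in the discussion preceding $LP'_2$ (the assumption $\mathcal{C}(\Gamma_G)\neq \emptyset$ forces $\epsilon'_1 = 0$ and yields spanning trees tight at every core allocation). Picking any such $T$ and substituting into the display above gives $1 = x(T) = \sum_{P \in \mathcal{P}} [n(P)-1] y_P$, which is \eqref{eq:Nucleolus_LP2v3_2}.

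For the reverse direction, I would assume \eqref{eq:Nucleolus_LP2v3_2} and verify $x(T)=1$ for an arbitrary $T \in \mathcal{T}_0$. Again by Lemma \ref{thm:TightSpanningTree}, $\lvert T \cap P\rvert = n(P)-1$ for every $P \in \mathcal{P}$, so the same substitution yields $x(T) = \sum_{P \in \mathcal{P}} [n(P)-1] y_P = 1$. Both directions thus collapse to one identity, and the equivalence of \eqref{eq:Nucleolus_LP2v2_2} and \eqref{eq:Nucleolus_LP2v3_2} follows.

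The real work has already been done in Lemma \ref{thm:TightSpanningTree}, so I do not anticipate a genuine obstacle here; the only point that requires care is to use the right ingredients to justify that $\boldsymbol{y}$ is well defined on $\mathcal{E}$ (Corollary \ref{thm:NonPrimeSet} together with Lemma \ref{thm:PrimeSet_PartialOrder}) and that $\mathcal{T}_0$ is nonempty in the forward implication so that we actually have a tree on which to evaluate. Once those are in hand, the proof is a one-line accounting argument using the prime-set decomposition of $T$.
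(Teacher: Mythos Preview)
Your proposal is correct and follows essentially the same route as the paper: decompose $x(T)$ along the prime partition, invoke Lemma~\ref{thm:TightSpanningTree} to replace $\lvert T\cap P\rvert$ by $n(P)-1$ for $T\in\mathcal{T}_0$, and read off the identity $x(T)=\sum_{P\in\mathcal{P}}[n(P)-1]\,y_P$. Your added remarks that $\boldsymbol{y}$ is well defined (via Corollary~\ref{thm:NonPrimeSet} and Lemma~\ref{thm:PrimeSet_PartialOrder}) and that $\mathcal{T}_0\neq\emptyset$ for the forward implication are welcome points of care that the paper leaves implicit.
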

\begin{proof}
Notice that every spanning tree $T\in \mathcal{T}$ admits a decomposition from the prime partition.
Lemma \ref{thm:TightSpanningTree} implies that
\begin{equation}
x(T)=\sum_{e\in T}x_e=\lvert T\cap E_0\rvert y_{E_0} + \sum_{P\in \mathcal{P}} \lvert T\cap P\rvert y_P=\sum_{P\in \mathcal{P}} [n (P)-1] y_P=1.
\end{equation}
Thus $\boldsymbol{x}$ satisfies \eqref{eq:Nucleolus_LP2v2_2} if and only if $\boldsymbol{y}$ satisfies \eqref{eq:Nucleolus_LP2v3_2}.
\end{proof}

Finally, we come to the equivalence of $LP'_2$ and $LP''_2$.
We first show that any vector $\boldsymbol{y}\in \mathbb{R}^{\mathcal{E}}$ defined from a feasible solution $\boldsymbol{x}\in \mathbb{R}^{E}$ of $LP'_2$ satisfies \eqref{eq:Nucleolus_LP2v3_1} and hence is a feasible solution of $LP''_2$.
Our proof is based on the idea that for two any prime sets $P$ and $Q$ with $P\prec Q$, 
a specific spanning tree outside $\mathcal{T}_0$ can be constructed from any spanning tree in $\mathcal{T}_0$ by repeatedly performing edge exchanges along a pathway consisting of ancestors of $P$ and ending with $Q$.
To this end, we need the following lemma.

\begin{lemma}
\label{thm:EdgeExchange}
Let $P,Q\in \mathcal{P}$ be a pair of prime sets with $P\prec Q$ in $(\mathcal{P},\prec)$.
For any spanning tree $T\in \mathcal{T}_0$, there exists a spanning tree $T'\in \mathcal{T}\backslash \mathcal{T}_0$ such that
\vspace{-.5em}
\begin{align}
\lvert T'\cap P\rvert&=\lvert T\cap P\rvert +1, \label{eq:EdgeExchange_1}\\
\lvert T'\cap Q\rvert&=\lvert T\cap Q\rvert -1, \label{eq:EdgeExchange_2}\\
\lvert T'\cap R\rvert&=\lvert T\cap R\rvert, \quad\qquad \forall ~R\in \mathcal{P}\backslash \{P,Q\}.  \label{eq:EdgeExchange_3}
\end{align}
\end{lemma}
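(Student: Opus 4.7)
I will construct $T'$ via a single edge exchange $T' = T - f + e$, with $e \in P \setminus T$ and $f \in Q \cap T$ chosen so that $f$ lies on the fundamental cycle $C_e$ of $e$ with respect to $T$. Such an exchange automatically produces a spanning tree differing from $T$ in exactly one edge of $P$ and one edge of $Q$, which gives \eqref{eq:EdgeExchange_1}--\eqref{eq:EdgeExchange_3} immediately, and the equality $|T' \cap Q| = n(Q) - 2$ together with Lemma \ref{thm:TightSpanningTree} yields $T' \notin \mathcal{T}_0$. Hence the entire proof reduces to exhibiting such a pair $(e, f)$.

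To locate $e$, let $k$ and $l$ be the levels of $P$ and $Q$ respectively. Because $P \prec Q$, the whole of $V(\hat{G}^{(l)}[Q])$ is merged into a single vertex $v^{\ast} \in V(\hat{G}^{(k)}[P])$ by the contractions carried out between levels $l$ and $k$. Lemma \ref{thm:MinimalDensestSubgraphCutVertex} tells us $\hat{G}^{(k)}[P]$ has no cut vertex, so $v^{\ast}$ lies on some cycle $C$ of $\hat{G}^{(k)}[P]$. The hypothesis $P \prec Q$ forces $a_f(G) > 1$ (otherwise $G$ would be a tree and no ancestor relation could exist), whence $m(P) > n(P) - 1$, so $P \setminus T \neq \emptyset$ and $T \cap P$ is a spanning tree of $\hat{G}^{(k)}[P]$. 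Writing $C$ as a symmetric difference of fundamental cycles with respect to $T \cap P$: if none of these fundamental cycles contained $v^{\ast}$, no edges at $v^{\ast}$ would appear in any of them, and therefore none in $C$, contradicting $v^{\ast} \in V(C)$. Hence some fundamental cycle passes through $v^{\ast}$, and I take $e \in P \setminus T$ to be its defining non-tree edge.

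To locate $f$, let $H$ be the minimum densest subgraph of $G$ containing $P$. Lemma \ref{thm:AncestorInclusion} gives $Q \subseteq E(H)$, and the prime set decomposition (Lemma \ref{thm:PrimePartition}) combined with Lemma \ref{thm:TightSpanningTree} shows $T \cap E(H)$ is a spanning tree of $H$, so $C_e \subseteq E(H)$. Projecting $C_e$ down through the contraction hierarchy to level $l$, the image must enter the pre-image of $v^{\ast}$ since $C_e$ passes through $v^{\ast}$ at level $k$; once inside $V(\hat{G}^{(l)}[Q])$, the traversal uses the spanning tree $T \cap Q$ of $\hat{G}^{(l)}[Q]$, and I take $f$ to be any $Q$-edge encountered along this portion of the cycle.

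The hard part will be the distant-ancestor case, where $Q$ is not a parent of $P$ and $v^{\ast}$ absorbs several intermediate prime sets between levels $l$ and $k$. In that situation the projection of $C_e$ at level $l$ might remain in the absorbed region of an intermediate parent without actually reaching $V(\hat{G}^{(l)}[Q])$. I expect to handle this by iterating the no-cut-vertex argument of Lemma \ref{thm:MinimalDensestSubgraphCutVertex} along a parent chain $P \prec Q_1 \prec \cdots \prec Q_s = Q$, refining the choice of fundamental cycle at each intermediate level so that the image of $Q$ stays on the lifted cycle and ultimately forcing a $Q$-edge at level $l$.
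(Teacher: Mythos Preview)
Your single-exchange plan diverges from the paper's argument and, as written, has a real gap in the distant-ancestor case you yourself flag.

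The paper does \emph{not} produce $T'$ by one swap. It builds a chain of prime sets $S_0=P,S_1,\dots,S_r=Q$ at consecutive levels $l,l-1,\dots,l-r$ and performs $r$ successive exchanges $T_{k+1}=T_k+e'_k-e_{k+1}$ with $e'_k,e_k\in S_k$. Each intermediate $S_k$ gains one edge and then loses one, so the net effect on $|T'\cap S_k|$ is zero; only $S_0=P$ ends with $+1$ and $S_r=Q$ with $-1$. The point is that at every step the relevant edge to be removed is found \emph{one level down}, using the cut-vertex-free property of the minimal densest subgraph $\hat{G}_{k+1}[S_{k+1}]$ together with the fact that $\hat G_{k+1}[T_k]$ is still a spanning tree of $\hat G_{k+1}$.

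Your approach tries to collapse this chain into one fundamental cycle $C_e$. The difficulty is that the choice of $e\in P\setminus T$ only controls $C_e$ at level $k$: you can force the level-$k$ projection of $C_e$ to pass through the vertex $v^\ast$ absorbing $Q$, but you cannot thereby force the $T$-path inside the preimage $H^\ast$ of $v^\ast$ to touch $Q$. Concretely, the two $P$-edges of your cycle adjacent to $v^\ast$ land, in $\hat G^{(l)}$, on vertices of some intermediate ancestor $Q'\supseteq v_Q$; if those landing points are both distinct from $v_Q$, the unique $(T\cap Q')$-path between them inside $\hat G^{(l)}[Q']$ can avoid $v_Q$ altogether, and then $C_e\cap Q=\emptyset$. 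Nothing in your ``refining the choice of fundamental cycle at each intermediate level'' explains how a single $e$ simultaneously pins down the passage through $v_Q$ at level $l$, the image of the next ancestor at level $l+1$, and so on up to level $k$---the fundamental cycle in $G$ is completely determined once $e$ is fixed. This is exactly the obstruction the paper's level-by-level exchange sidesteps: each exchange only needs to reach the next $S_{k+1}$, where the cycle created is guaranteed to enter $\hat G_{k+1}[S_{k+1}]$ by construction.

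So either abandon the single-swap plan and carry out the chain of exchanges (which is what the paper does), or supply a genuine argument that some $e\in P\setminus T$ has $C_e\cap Q\neq\emptyset$; the sketch you give does not do this.
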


\begin{proof}
Assume that $P\in \mathcal{P}_{l}$ and $Q\in \mathcal{P}_{l-r}$, where $l\geq r\geq 1$.
We claim that there exist
\begin{itemize}
\item[\textendash] a sequence $S_0,\ldots,S_r$ of ancestors of $P$ such that $S_0=P$, $S_r=Q$ and $S_k\in \mathcal{P}_{l-k}$ for $k=0,\ldots,r$;
\item[\textendash] a sequence $T_0,\ldots,T_r$ of spanning trees obtained from $T$ such that $T_0=T$ and
\begin{equation}
\label{eq:SpanningTreeConstruction}
T_{k+1}=T_{k}+e'_{k}-e_{k+1},
\end{equation}
where $e_k, e'_k\in S_k$ for $k=0,\ldots,r$.
\end{itemize}
It follows that
\begin{equation}
T_{k+1}=T_0+e'_0-\sum_{i=1}^{k} (e_i-e'_i)-e_{k+1}.
\end{equation}
Notice that $\lvert T_{k+1} \cap S_0 \rvert=\lvert T_0\cap S_0 \rvert+1$,
$\lvert T_{k+1} \cap S_{k+1} \rvert=\lvert T_0\cap S_{k+1} \rvert-1$,
and $\lvert T_{k+1} \cap S \rvert=\lvert T_0\cap S \rvert$ for any $S\in \mathcal{P}\backslash \{S_0,S_{k+1}\}$.
Lemma \ref{thm:TightSpanningTree} implies that $T_r$ is a spanning tree satisfying \eqref{eq:EdgeExchange_1}-\eqref{eq:EdgeExchange_3} in $\mathcal{T}\backslash \mathcal{T}_0$.

The sequence $S_0,\ldots,S_r$ sets a pathway for edge exchange operations in \eqref{eq:SpanningTreeConstruction}, which can be identified as follows.
Start with $S_r=Q$ and work backwards.
Suppose $S_{k+1}\in \mathcal{P}_{l-k-1}$ has been identified.
If $S_{k+1}$ is a parent of an ancestor $R\in \mathcal{P}_{l-k}$ of $P$, then let $S_{k}=R$.
Otherwise, there exist two ancestors $R_1,R_2\in \mathcal{P}_{l-k}$ of $P$ such that
$\hat{G}^{(l-k-1)}[R_1]$ and $\hat{G}^{(l-k-1)}[R_2]$ share no common vertex but $\hat{G}^{(l-k)}[R_1]$ and $\hat{G}^{(l-k)}[R_2]$ share a common vertex $s_{k+1}$ which is the image of $S_{k+1}$ in $\hat{G}^{(l-k)}$.
Then let $S_k$ be any one of $R_1$ and $R_2$, say $S_k=R_1$.
Repeat this process until $S_0=P$.
Denote by $\mathcal{S}$ the set of $S_0,\ldots,S_r$.

It remains to show how to perform edge exchange operations in \eqref{eq:SpanningTreeConstruction}.
Let $\hat{G}_k$ be the graph obtained from $G$ by contracting all edges in prime sets of level less than $S_k$ and all edges in other prime sets of the same level with $S_k$.
Let $s_{k+1}$ denote the image of $S_{k+1}$ in $\hat{G}_{k}$.
Clearly, $s_{k+1}$ is a vertex in $\hat{G}_{k} [S_k]$.
Let $T_k\in \mathcal{T}$ be a spanning tree constructed from $T_{k-1}$ by \eqref{eq:SpanningTreeConstruction}.
It follows that $T_k\cap S_{i}=T\cap S_{i}$ for $i=k+1,\ldots,r$ and $T_k\cap S=T\cap S$ for $S\in \mathcal{P}\backslash \mathcal{S}$.
Let $\hat{G}_{k+1} [T_k]$ denote the edge-induced subgraph of $\hat{G}_{k+1}$ on the common edges of $\hat{G}_{k+1}$ and $T_k$.
Lemma \ref{thm:TightSpanningTree} implies that $\hat{G}_{k+1} [T_k]$ is a spanning tree of $\hat{G}_{k+1}$.
In particular, $\hat{G}_{k+1} [T_k\cap S_{k+1}]$ is a spanning tree of $\hat{G}_{k+1} [S_{k+1}]$.
To construct $T_{k+1}$ from $T_{k}$, we distinguish two cases based on whether $S_{k+1}$ is a parent of $S_{k}$.

First assume that $S_{k+1}$ is a parent of $S_{k}$.
Then there exist edges from $S_{k}$ incident to two distinct vertices $u_1,u_2\in V(\hat{G}_{k+1}[S_{k+1}])$ in $\hat{G}_{k+1}$.
To construct $T_{k+1}$ from $T_{k}$ by \eqref{eq:SpanningTreeConstruction},
we concentrate on $\hat{G}_{k+1}$ and further distinguish two cases on edges in $T_k\cap S_k$.

\begin{figure}[!htb]
    \centering
    \includegraphics[width=1\textwidth]{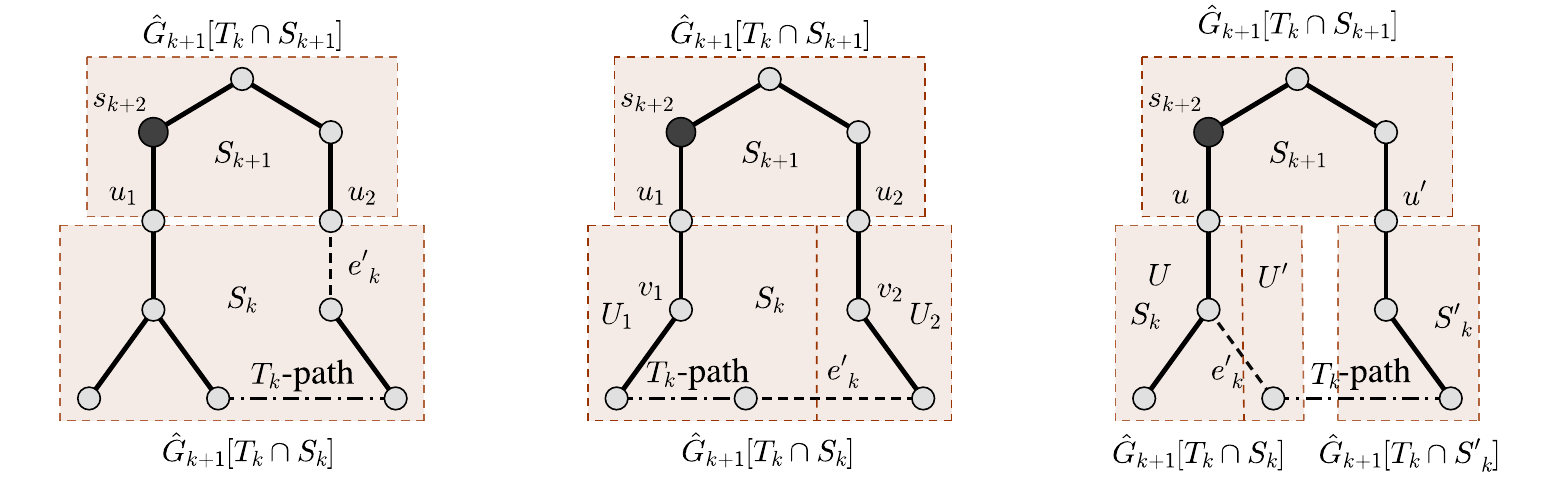}
    \caption{The dashed line denotes the edge $e'_k$ added to $T_k$. The dash-dotted line denotes the path in $T_k$ avoiding edges in $S_k$.}
    \label{fig:EdgeExchange}
\end{figure}

\begin{itemize}
\item Edges from $T_{k}\cap S_{k}$ are only incident to one vertex, say $u_1$, of $\hat{G}_{k+1} [S_{k+1}]$ (cf., left graph in Figure \ref{fig:EdgeExchange}).
Then there exists an edge $e'_k\in S_k$ incident to $u_2$.
Since $\hat{G}_{k+1} [T_k]$ is a spanning tree of $\hat{G}_{k+1}$, adding $e'_k$ to $T_k$ creates a cycle involving edges in $\hat{G}_{k+1} [T_k\cap S_{k+1}]$.
Remove an edge $e_{k+1}$ from $\hat{G}_{k+1} [T_k\cap S_{k+1}]$ to break the cycle and denote the new tree by $T_{k+1}$.
Thus we have $T_{k+1}=T_k+e'_k-e_{k+1}$.
\item Edges from $T_k\cap S_k$ are incident to more than one vertices in $\hat{G}_{k+1}[S_{k+1}]$ (cf., middle graph in Figure \ref{fig:EdgeExchange}).
For $i=1,2$,
let $f_i \in T_k\cap S_{k}$ be an edge incident to $u_i\in V(\hat{G}_{k+1}[S_{k+1}])$,
and $v_i$ be the other endpoint of $f_i$.
Since $\hat{G}_{k+1}[T_k \cap S_{k+1}]$ is a tree, $v_1$ and $v_2$ are distinct vertices in $\hat{G}_{k+1} [T_k]$. 
For $i=1,2$, let $U_i$ be the set of vertices in $\hat{G}_{k+1}[S_{k}]$ that are connected to $u_i$ with edges in $T_{k}$.
Clearly, $v_1\in U_1$ and $v_2\in U_2$.
Now consider $\hat{G}_{k}$.
Since $\hat{G}_{k+1} [T_{k}]$ is a spanning tree of $\hat{G}_{k+1}$, $\hat{G}_{k}[T_k \cap S_k]$ is a spanning tree of $\hat{G}_{k} [S_k]$.
Notice that $\hat{G}_{k} [T_{k}]$ is a spanning tree of $\hat{G}_{k}[\cup_{i=1}^{k} S_i]$.
It follows that $U_1\cup U_2\cup\{s_{k+1}\}=V(\hat{G}_{k}[S_{k}])$.
Notice that $\hat{G}_{k}[S_{k}]$ is a minimal densest subgraph of $\hat{G}_{k}$.
By Lemma \ref{thm:MinimalDensestSubgraphCutVertex}, there is a crossing edge $e'_k\in S_k$ between $U_1$ and $U_2$ in $\hat{G}_{k} [S_k]$.
Since $\hat{G}_{k+1} [T_k]$ is a spanning tree of $\hat{G}_{k+1}$, adding $e'_{k}$ to $T_{k}$ creates a cycle involving edges in $\hat{G}_{k+1}[T_k\cap S_{k+1}]$.
Remove an edge $e_{k+1}$ from $\hat{G}_{k+1} [T_k\cap S_{k+1}]$ to break the cycle and denote the new tree by $T_{k+1}$.
Thus we have $T_{k+1}=T_k+e'_k-e_{k+1}$.
\end{itemize}

Now assume that $S_{k+1}$ is not a parent of $S_{k}$ where $k\geq 1$ (cf., right graph in Figure \ref{fig:EdgeExchange}).
Then there exists another ancestor $S'_k\in \mathcal{P}_{l-k}$ of $S_0$ such that $\hat{G}^{(l-k-1)}[S_k]$ and $\hat{G}^{(l-k-1)}[S'_k]$ share no common vertex but $\hat{G}^{(l-k)}[S_k]$ and $\hat{G}^{(l-k)}[S'_k]$ share a common vertex $s_{k+1}$ which is the image of $S_{k+1}$.
Now consider $\hat{G}_{k+1}$.
Notice that $\hat{G}_{k+1} [S_{k+1}]$, $\hat{G}_{k+1}[S_k]$ and $\hat{G}_{k+1} [S'_k]$ are all minimal densest subgraphs in $\hat{G}_{k+1}$.
Moreover, $\hat{G}_{k+1} [S_{k+1}]$ shares a common vertex $u$ with $\hat{G}_{k+1}[S_k]$ and shares a common vertex $u'$ with $\hat{G}_{k+1}[S'_k]$ respectively.
Clearly, $u\not=u'$.
Since $\lvert T_k\cap S_k \lvert =\lvert T_0\cap S_k\rvert -1 = n(S_k)-2$, $\hat{G}_{k+1}[T_k \cap S_k]$ is not connected.
Notice that $\hat{G}_{k+1}[T_k]$ is a spanning tree of $\hat{G}_{k+1}$.
Let $U$ and $U'$ be the set of vertices in $\hat{G}_{k+1} [S_k]$ that are connected to $u$ and $u'$ respectively in $\hat{G}_{k+1} [T_k]$.
Hence $U$ and $U'$ form a nontrivial bipartition of $V(\hat{G}_{k+1}[S_k])$.
Then there is a crossing edge $e'_k\in S_k$ between $U$ and $U'$ in $V(\hat{G}_{k+1}[S_k])$.
Since $\hat{G}_{k+1} [T_k]$ is a spanning tree of $\hat{G}_{k+1}$, adding $e'_{k}$ to $T_{k}$ creates a cycle involving edges in $\hat{G}_{k+1}[T_k\cap S_{k+1}]$.
Remove an edge $e_{k+1}$ from $\hat{G}_{k+1} [T_k\cap S_{k+1}]$ to break the cycle and denote the new tree by $T_{k+1}$.
Thus we have $T_{k+1}=T_k+e'_k-e_{k+1}$.
\end{proof}

\begin{lemma}
\label{thm:EquivalenceSufficiency}
Let $\textbf{x}\in \mathcal{C}(\Gamma_G)$ be a vector satisfying \eqref{eq:Nucleolus_LP2v2_2}-\eqref{eq:Nucleolus_LP2v2_4} and $\boldsymbol{y}\in \mathbb{R}^{\mathcal{E}}$ be the vector defined from $\boldsymbol{x}$.
If $\boldsymbol{x}$ satisfies \eqref{eq:Nucleolus_LP2v2_1},
then $\boldsymbol{y}$ satisfies \eqref{eq:Nucleolus_LP2v3_1}.
\end{lemma}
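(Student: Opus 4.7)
The plan is direct: given any pair of prime sets with $P \prec Q$, I exhibit two spanning trees whose edge counts on the prime partition differ only on $P$ and $Q$, and then read off \eqref{eq:Nucleolus_LP2v3_1} from the constraints \eqref{eq:Nucleolus_LP2v2_1} and \eqref{eq:Nucleolus_LP2v2_2} of $LP'_2$.

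First I would fix an arbitrary pair $P \prec Q$ in $(\mathcal{P},\prec)$ and pick any spanning tree $T \in \mathcal{T}_0$ (such a $T$ exists because the value of $x(T)$ for $T \in \mathcal{T}_0$ is exactly the sum $\sum_{P\in\mathcal{P}}[n(P)-1]y_P$ which equals $1$ for every core allocation, and Lemma \ref{thm:TightSpanningTree} guarantees at least one witness by combining spanning trees of each minimal densest subgraph with a spanning tree of the contracted graph). Then I invoke Lemma \ref{thm:EdgeExchange} to obtain a spanning tree $T' \in \mathcal{T} \setminus \mathcal{T}_0$ with $|T' \cap P| = |T \cap P| + 1$, $|T' \cap Q| = |T \cap Q| - 1$, and $|T' \cap R| = |T \cap R|$ for every other part $R$ of the prime partition (including the non-prime set $E_0$, which is untouched).

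Next I translate from $\boldsymbol{x}$ to $\boldsymbol{y}$. Since all edges in the same part of the prime partition share a common value under $\boldsymbol{x}$ (by Corollary \ref{thm:NonPrimeSet} and Lemma \ref{thm:PrimeSet_PartialOrder}), we can expand
\begin{equation*}
x(T') - x(T) = \sum_{R \in \mathcal{E}} \bigl(|T' \cap R| - |T \cap R|\bigr)\, y_R = y_P - y_Q.
\end{equation*}
By constraint \eqref{eq:Nucleolus_LP2v2_2} applied to $T \in \mathcal{T}_0$ we have $x(T) = 1$, and by constraint \eqref{eq:Nucleolus_LP2v2_1} applied to $T' \in \mathcal{T} \setminus \mathcal{T}_0$ we have $x(T') \le 1 - \epsilon$. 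Subtracting gives $y_P - y_Q = x(T') - x(T) \le -\epsilon$, i.e.\ $y_P + \epsilon \le y_Q$, which is precisely \eqref{eq:Nucleolus_LP2v3_1}.

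There is essentially no hard step left in the present lemma, since all the combinatorial work has been isolated in Lemma \ref{thm:EdgeExchange}; the only thing to double-check is that $T'$ indeed lies in $\mathcal{T} \setminus \mathcal{T}_0$ so that the inequality \eqref{eq:Nucleolus_LP2v2_1} (and not the equality \eqref{eq:Nucleolus_LP2v2_2}) applies, but this is guaranteed by Lemma \ref{thm:EdgeExchange} together with the characterization of $\mathcal{T}_0$ in Lemma \ref{thm:TightSpanningTree}: since $|T' \cap P| = n(P)$ violates \eqref{eq:TightSpanningTree}, we have $T' \notin \mathcal{T}_0$.
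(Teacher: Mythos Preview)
Your argument is correct and follows essentially the same route as the paper: pick $T\in\mathcal{T}_0$, apply Lemma~\ref{thm:EdgeExchange} to obtain $T'\in\mathcal{T}\setminus\mathcal{T}_0$ differing only on $P$ and $Q$, compute $x(T')-x(T)=y_P-y_Q$, and combine \eqref{eq:Nucleolus_LP2v2_1} with \eqref{eq:Nucleolus_LP2v2_2} to conclude. Your added remarks on the existence of $T\in\mathcal{T}_0$ and on why $T'\notin\mathcal{T}_0$ are sound and make explicit what the paper leaves implicit.
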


\begin{proof}
Let $T\in \mathcal{T}_0$ be a spanning tree.
Lemma \ref{thm:EdgeExchange} implies that there exists a spanning tree $T'\in \mathcal{T}\backslash \mathcal{T}_0$ such that $\lvert T'\cap P\rvert=\lvert T\cap P\rvert +1$,
$\lvert T'\cap Q\rvert=\lvert T\cap Q\rvert -1$, and 
$\lvert T'\cap R\rvert=\lvert T\cap R\rvert$ for any $R\in \mathcal{P}\backslash \{P,Q\}$.
It follows that
\begin{align*}
x(T')
&=\lvert T'\cap P\rvert \cdot y_P +\lvert T'\cap Q\rvert \cdot y_Q + \sum_{R\in \mathcal{P}\backslash \{P,Q\}} \lvert T'\cap R\vert \cdot y_R\\
&=(\lvert T\cap P\rvert + 1) \cdot y_P + (\lvert T\cap Q\rvert - 1) \cdot y_Q + \sum_{R\in \mathcal{P}\backslash \{P,Q\}} \lvert T\cap R\vert \cdot y_R\\
&= x(T)+y_P-y_Q.
\end{align*}
Since $T\in \mathcal{T}_0$ and $T'\in \mathcal{T}\backslash \mathcal{T}_0$, we have $x(T)=1$ and $x(T')+\epsilon\leq 1$.
Hence $y_P + \epsilon \leq y_Q$ follows.
\end{proof}

Now we show that if $\boldsymbol{y}\in \mathbb{R}^{\mathcal{E}}$ is a feasible solution of $LP''_2$, then its associated vector $\boldsymbol{x}\in \mathcal{C}(\Gamma_G)$ satisfies \eqref{eq:Nucleolus_LP2v2_1} and hence is a feasible solution of $LP'_2$.
Our proof is based on the idea that a spanning tree in $\mathcal{T}_0$ can be constructed from any spanning tree outside $\mathcal{T}_0$ by repeatedly performing edge exchanges between a prime set and the non-prime set or between a prime set and another prime set of higher level.
To this end, we need the following lemma.

\begin{lemma}
\label{thm:MinimalViolatingPrimeSet}
Let $T$ be a spanning tree in $\mathcal{T}\backslash \mathcal{T}_0$.
Among prime sets that violate \eqref{eq:TightSpanningTree} with $T$,
let $P$ be a prime set of minimum level.
Then we have $\lvert T\cap P\rvert < n (P)-1$.
\end{lemma}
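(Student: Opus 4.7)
My plan is to show that $|T \cap P|$ is at most $n(P) - 1$ for any prime set $P$, regardless of violation status; combined with the fact that $P$ violates \eqref{eq:TightSpanningTree} (so $|T \cap P| \neq n(P)-1$), the strict inequality follows. The work lies in establishing the upper bound, which I will obtain by passing to the contracted graph $\hat{G}^{(k)}$ where $k$ is the level of $P$ and tracking what happens to $T$ under successive contractions.

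The main technical step is the following claim, proved by induction on $j$ for $0 \le j \le k$: the image $\hat{T}^{(j)}$ of $T$ in $\hat{G}^{(j)}$, obtained by contracting in $T$ all edges lying in prime sets of level less than $j$, is a spanning tree of $\hat{G}^{(j)}$. The base case $j=0$ is immediate since $\hat{T}^{(0)} = T$. For the inductive step, the minimality of $P$ guarantees that $|T \cap Q| = n(Q) - 1$ for every prime set $Q$ of level less than $k$; in particular, for every level-$j$ prime set $Q$, the edges $T \cap Q$ form a subgraph of $\hat{T}^{(j)}$ sitting inside $\hat{G}^{(j)}[Q]$ (which has $n(Q)$ vertices) with exactly $n(Q)-1$ edges, hence a spanning tree of $\hat{G}^{(j)}[Q]$. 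By Lemmas \ref{thm:MDS_Noncrossing} and \ref{thm:MDS_CommonVertex}, distinct level-$j$ minimal densest subgraphs of $\hat{G}^{(j)}$ are edge-disjoint and their intersection graph is acyclic, so the union $\bigcup_{Q \in \mathcal{P}_j} (T \cap Q)$ is a forest inside the tree $\hat{T}^{(j)}$. Contracting a forest in a tree produces a tree, and the resulting tree spans $\hat{G}^{(j+1)}$, completing the induction.

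With the claim in hand, I apply it at level $j = k$: since $\hat{T}^{(k)}$ is a spanning tree of $\hat{G}^{(k)}$ and the edges of $P$ are preserved by all the preceding contractions, $T \cap P$ is a subset of the tree $\hat{T}^{(k)}$ that lives inside $\hat{G}^{(k)}[P]$, a subgraph on $n(P)$ vertices. Any subset of a tree's edges is a forest, and a forest on $n(P)$ vertices has at most $n(P)-1$ edges, giving $|T \cap P| \le n(P)-1$. Because $P$ violates \eqref{eq:TightSpanningTree}, equality is excluded, yielding $|T \cap P| < n(P)-1$.

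The main obstacle is the careful bookkeeping for the inductive step — specifically, justifying that contracting the edge sets $T \cap Q$ simultaneously across all level-$j$ prime sets yields a tree. The worry is that two distinct minimal densest subgraphs at level $j$ can share a vertex, so their union of spanning trees could conceivably contain a cycle; the ``cycle''-free property of Lemma \ref{thm:MDS_CommonVertex} is exactly what rules this out, and citing it explicitly is the one place where the combinatorics of the prime partition genuinely enters the argument.
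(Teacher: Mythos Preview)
Your proof is correct and rests on the same idea as the paper's: both exploit that every prime set $Q$ of level below that of $P$ satisfies $|T\cap Q|=n(Q)-1$, so $T$ can be tracked through the contractions down to $\hat G^{(k)}$. The paper argues by contradiction---assuming $|T\cap P|>n(P)-1$ and expanding the resulting cycle in $\hat G^{(k)}[T]$ backward through the levels to a cycle in $T$---whereas you run the induction forward, showing directly that the image $\hat T^{(j)}$ remains a spanning tree of $\hat G^{(j)}$ for each $j\le k$; this is a slightly cleaner packaging of the same argument. One small remark: the union $\bigcup_{Q\in\mathcal P_j}(T\cap Q)$ is automatically a forest because it is contained in the tree $\hat T^{(j)}$, so the appeal to Lemma~\ref{thm:MDS_CommonVertex} at that step is not actually needed.
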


\begin{proof}
\begin{figure}[hbt!]
    \centering
    \includegraphics[width=0.825\textwidth]{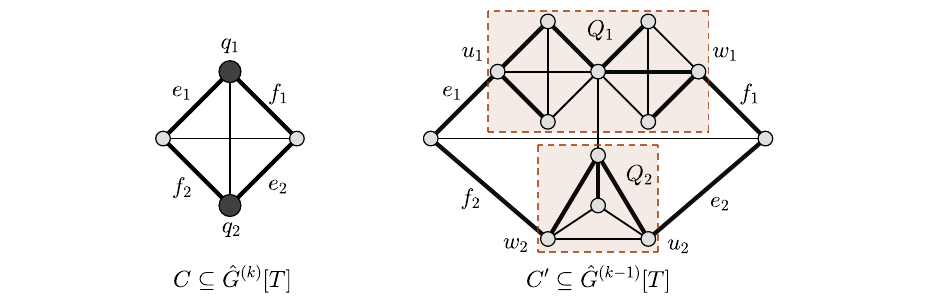}
    \caption{A cycle $C'$ in $\hat{G}^{(k-1)}[T]$ is constructed from a cycle $C$ in $\hat{G}^{(k)}[T]$.}
    \label{fig:CycleExpansion}
    \vspace{-.5em}
\end{figure}

Let $T$ be a spanning tree in $\mathcal{T}\backslash \mathcal{T}_0$ and $k$ be the minimum level of prime sets that violate  \eqref{eq:TightSpanningTree} with $T$.
For any prime set $P\in \mathcal{P}_{k}$ that violates \eqref{eq:TightSpanningTree} with $T$, we show that $\lvert T\cap P\rvert > n (P)-1$ is absurd.
Assume to the contrary that $\lvert T\cap P\rvert > n (P)-1$.
It follows that $\hat{G}^{(k)} [T\cap P]$ contains a cycle consisting of edges from $T$.
We apply induction on $k$ to show that a cycle in $\hat{G}^{(k)} [T]$ implies a cycle in $T$, which is absurd.

It is trivial for $k=0$, since $\hat{G}^{(0)} [T]=T$.
Now assume that $k\geq 1$ and that all prime sets of level less than $k$ satisfies \eqref{eq:TightSpanningTree} with $T$.
Hence $\hat{G}^{(k-1)} [T]$ is a tree.
Let $C$ be a cycle in $\hat{G}^{(k)} [T]$ (cf. left graph in Figure \ref{fig:CycleExpansion}).
It follows that $C$ contains images of prime sets of level $k-1$.
Let $q_1,\ldots,q_s$ be the images of prime sets of level $k-1$ in $C$ which appear in a clockwise order along $C$.
Denote by $e_i$ and $f_i$ the two edges incident to $q_i$ in $C$ and denote by $Q_i$ the union of prime sets of level $k-1$ with image $q_i$, for $i=1,\ldots,s$.
Hence $e_i$ and $f_i$ are incident to two distinct vertices $u_i$ and $w_i$ in $\hat{G}^{(k-1)} [Q_i]$ respectively.
By assumption, $\lvert T\cap Q\rvert = n (Q)-1$ for any $Q\in \mathcal{P}_{k-1}$.
It follows that $\hat{G}^{(k-1)} [T\cap Q_i]$ is a tree.
Let $p_i$ denote the unique $u_i$-$w_i$ path in $\hat{G}^{(k-1)} [T\cap Q_i]$.
Now inserting the $u_i$-$w_i$ path $p_i$ between $e_i$ and $f_i$ in $C$ for $i=1,\ldots,s$ creates a cycle $C'$ in $\hat{G}^{(k-1)} [T]$  (cf. right graph in Figure \ref{fig:CycleExpansion}).
However, this contradicts the acyclicity of $\hat{G}^{(k-1)} [T]$.
\end{proof}

\begin{lemma}
\label{thm:EquivalenceNecessity}
Let $\textbf{x}\in \mathcal{C}(\Gamma_G)$ be a vector satisfying \eqref{eq:Nucleolus_LP2v2_2}-\eqref{eq:Nucleolus_LP2v2_4} and $\boldsymbol{y}\in \mathbb{R}^{\mathcal{E}}$ be the vector defined from $\boldsymbol{x}$.
If $\boldsymbol{y}$ satisfies \eqref{eq:Nucleolus_LP2v3_1},
then $\boldsymbol{x}$ satisfies \eqref{eq:Nucleolus_LP2v2_1}.
\end{lemma}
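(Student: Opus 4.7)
My plan is to exhibit, for any $T \in \mathcal{T} \setminus \mathcal{T}_0$, a single edge exchange $T \mapsto T' = T - f + e$ such that $T'$ is still a spanning tree of $G$ and $x(T') - x(T) \geq \epsilon$. Since $\boldsymbol{x} \in \mathcal{C}(\Gamma_G)$ forces $x(T') \leq \gamma(T') = 1$, this immediately gives $x(T) + \epsilon \leq x(T') \leq 1$, which is exactly \eqref{eq:Nucleolus_LP2v2_1}. A crucial preliminary observation is that $y_P \geq \epsilon$ holds for every prime set $P$: chaining \eqref{eq:Nucleolus_LP2v3_1} along any descending chain in $(\mathcal{P},\prec)$ from $P$ down to a minimal element, followed by \eqref{eq:Nucleolus_LP2v3_3} at the bottom, telescopes to $y_P \geq \epsilon$. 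Together with $y_{E_0} = 0$ from \eqref{eq:Nucleolus_LP2v3_4}, this means that if $e$ belongs to a prime set $P$ and $f$ belongs either to $E_0$ or to a strict descendant of $P$ in $(\mathcal{P},\prec)$, then the exchange contribution $x(T') - x(T)$ is at least $\epsilon$. So the task reduces to producing such an $(e,f)$.

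To construct this exchange, I would apply Lemma \ref{thm:MinimalViolatingPrimeSet} to obtain a minimum-level violating prime set $P$ of level $k$ with $|T \cap P| < n(P) - 1$. Since $\hat{G}^{(k)}[P]$ is connected by Lemma \ref{thm:MinimalDensestSubgraphCutVertex}, there exists an edge $e \in P \setminus T$ whose endpoints lie in two distinct components of the subforest $\hat{G}^{(k)}[T \cap P]$. The minimality of $k$ ensures that every prime set of level less than $k$ is tight under $T$, so after contracting their edges the restriction of $T$ becomes a spanning tree of $\hat{G}^{(k)}$; hence the fundamental cycle $\hat{C}$ created by adding $e$ to this spanning tree is well-defined, and by our choice of $e$ it must leave $V(\hat{G}^{(k)}[P])$. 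Consequently, $\hat{C}$ contains at least one edge $f$ outside $P$, lying either in $E_0$ or in some level-$\geq k$ prime set $Q \neq P$, and for any such $f$ the set $T - f + e$ remains a spanning tree of $G$ (since $f$ lies on the unique cycle of $T + e$).

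The hard part is to guarantee that $(e,f)$ can be chosen with $f \in E_0$ or $f$ in a descendant of $P$, rather than $f$ lying in a prime set $Q$ incomparable to $P$ in $(\mathcal{P},\prec)$. I expect to handle this by tracing $\hat{C}$ level-by-level through the further contractions $\hat{G}^{(k+1)}, \hat{G}^{(k+2)}, \ldots$, mirroring in reverse the pathway construction from Lemma \ref{thm:EdgeExchange}: the ``cycle''-free property of Lemma \ref{thm:MDS_CommonVertex} prevents $\hat{C}$ from closing entirely within incomparable level-$\geq k$ minors, so eventually $\hat{C}$ is forced to pass through a descendant of $P$ or an $E_0$-edge, at which point the required pair $(e,f)$ can be extracted and the proof completed.
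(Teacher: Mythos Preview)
Your proposal is correct and follows essentially the same approach as the paper: locate the minimum-level violating prime set $P$ via Lemma~\ref{thm:MinimalViolatingPrimeSet}, pick $e\in P\setminus T$ joining two components of $\hat{G}^{(k)}[T\cap P]$, and use the cycle-free property of Lemma~\ref{thm:MDS_CommonVertex} level-by-level to force the fundamental cycle through $E_0$ or through a prime set $Q$ with $Q\prec P$, then exchange. Your one-step conclusion via $x(T')\leq 1$ from the core constraint is in fact a mild simplification of the paper's argument, which instead iterates the exchange until reaching a tree in $\mathcal{T}_0$ before bounding $x(T)$.
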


\begin{proof}
Let $T\in \mathcal{T}\backslash \mathcal{T}_0$.
Among all prime sets that violates \eqref{eq:TightSpanningTree} with $T$, let $P\in \mathcal{P}_{k}$ be a prime set of minimum level, where $k\geq 0$.
Since every prime set of level less than $k$ satisfies \eqref{eq:TightSpanningTree} with $T$, $\hat{G}^{(k)}[T]$ is a spanning tree of $\hat{G}^{(k)}$.

Lemma \ref{thm:MinimalViolatingPrimeSet} implies that $\lvert T\cap P\rvert<n (P)-1$.
It follows that $\hat{G}^{(k)}[T\cap P]$ is not connected and there exists an edge $e'$ in $P\backslash T$ that joins two components of $\hat{G}^{(k)} [T\cap P]$.
Moreover, $e'$ joins two non-adjacent vertices of $\hat{G}^{(k)} [T]$.
Hence adding $e'$ to $\hat{G}^{(k)}[T]$ creates a cycle $C$.
As we shall see, $C$ involves edges either from the non-prime set $E_0$ or from a prime set of level greater than $k$.
Now we show that a new spanning tree $T'\in \mathcal{T}$ can be constructed from $T$ such that $x(T)\leq x(T')-\epsilon$ with an edge exchange operation.
We distinguish two cases.

\begin{itemize}
\item $C\cap E_0\not=\emptyset$.
Remove an edge $e$ from $C\cap E_0$ to break the cycle $C$ and denote the new tree by $T'$.
Hence $T'=T-e+e'$ where $e\in C\cap E_0$ and $e'\in P\backslash T$.
Since $x_e=0$ and $x_{e'}\geq \epsilon$, it follows that $x(T)=x(T')+x_e-x_{e'}\leq x(T')-\epsilon$.

\item $C\cap E_0=\emptyset$.
It follows that $C$ is a cycle in a component of $\hat{G}^{(k)}-E_0$.
Lemmas \ref{thm:ArboricityPreservingContraction} and \ref{thm:PrimeComponent} imply that every component of $\hat{G}^{(k)}-E_0$ is a densest subgraph of $\hat{G}^{(k)}$.
By Lemma \ref{thm:MDS_CommonVertex},
$C$ involves edges from prime sets of level higher than $k$, since otherwise there are minimal densest subgraphs in $\hat{G}^{(k)}$ which are pairwise connected along the cycle $C$ and the number of common vertices violates Lemma \ref{thm:MDS_CommonVertex}.
Let $Q\in \mathcal{P}_{l}$ where $l>k$ be a prime set of the largest level that intersects $C$.
We claim that $\hat{G}^{(l)}[Q\cap C]$ is a cycle.
To see this, consider $\hat{G}^{(l)}[C]$ which is the edge-induced subgraph of $\hat{G}^{(l)}$ on the common edges of $\hat{G}^{(l)}$ and $C\subseteq \hat{G}^{(k)}$.
If there exists a cycle $C'$ in $\hat{G}^{(l)}[C]$ involving more than one prime sets of level $l$, then their defining minimal densest subgraphs are pairwise connected along the cycle $C'$ and the number of common vertices violates Lemma \ref{thm:MDS_CommonVertex}.
Hence the claim follows.
It follows that $Q\prec P$.
To see this, we apply induction on $l-k$.
If $l=k+1$, then two edges in $C\cap Q$ become incident (or share one more common vertex) at the image $v_P$ of $P$ in $\hat{G}^{(l)}$.
Thus $P$ is a parent of $Q$ and $Q\prec P$ follows.
Now assume that $l>k+1$. 
Let $C'$ be the cycle in $\hat{G}^{(k+1)}$ consisting of edges from $C$ and involving edges in $Q$.
For any prime set $R\in \mathcal{P}_{k+1}$ that intersects $C'$, $R\prec P$ implies $Q\prec P$ inductively.
Hence assume that $P$ is not a parent of any prime set of level $k+1$ that intersects $C'$.
Let $v_P$ be the image of $P$ in $\hat{G}^{(k+1)}$.
Then $v_P$ is a vertex in $C'$ which concatenates two minimal densest subgraphs of $\hat{G}^{(k+1)}$ involving edges of $C'$.
There exists
a prime set $R\in \mathcal{P}_{r}$ where $k+1<r\leq l$ such that two edges in $R\cap C'$ become incident (or share one more common vertex) in $\hat{G}^{(r)}$,
and a cycle $C''$ in $\hat{G}^{(r)}$ consisting of edges from $C'$ and involving edges from $Q$ and $R$.
Further assume that the prime set $R$ introduced above is of minimum level.
Hence $P$ is a parent of $R$.
If $Q=R$, then $Q\prec P$ follows directly.
Otherwise, $Q\prec R$ follows inductively.
Thus in either case, we have $Q\prec P$.
Remove an edge $e$ in $C\cap Q$ to break the cycle $C$ and denote the new tree by $T'$.
Then $T'=T-e+e'$ where $e\in C\cap Q$ and $e'\in P\backslash T$.
Since $Q\prec P$, we have $y_Q+\epsilon\leq y_P$ which implies $x_{e}+\epsilon\leq x_{e'}$.
It follows that $x(T)=x(T')+x_{e}-x_{e'} \leq x(T')-\epsilon$.
\end{itemize}

Hence a new spanning tree $T'$ can be constructed from $T$ such that $x(T)\leq x(T')-\epsilon$ with an edge exchange operation.
Now we consider $T'$.
If $T'\not\in \mathcal{T}_0$, then among all prime sets that violates \eqref{eq:TightSpanningTree} associated with $T'$, let $P'$ be one of minimum level.
By Lemma \ref{thm:MinimalViolatingPrimeSet}, $\lvert T'\cap P'\rvert<n (P')-1$ follows again.
Denote $T$ by $T_0$ and $T'$ by $T_1$.
Then repeating the process that constructs $T_1$ from $T_0$ yields a sequence $T_1,\ldots, T_k\in \mathcal{T}$ of spanning trees until the last tree $T_k$ appears in $\mathcal{T}_0$.
And we have $x(T_i)\leq x(T_{i+1})-\epsilon$ for $i=1,\ldots,k-1$.
This sequence ends properly because each time an edge exchange operation is performed between a prime set and the non-prime set or between a prime set and another prime set of higher levels.
This sequence ends with a spanning tree in $\mathcal{T}_0$ because each time an edge is added to the prime set of minimum level that violates \eqref{eq:TightSpanningTree}.
Finally, $T_k\in \mathcal{T}_0$ implies that
\begin{equation}
x(T)=x(T_0)\leq x(T_k)-k\epsilon=1-k\epsilon\leq 1-\epsilon,
\end{equation}
where the last inequality follows from the fact that $\epsilon>0$
\end{proof}

\subsection{A combinatorial algorithm for $LP''_2$}
\label{sec:CombinatorialAlgorithm}

The following lemma reveals how to solve $LP''_2$.

\begin{lemma}
\label{thm:TightConstraint}
Let $(\boldsymbol{y}^*,\epsilon^*)$ be an optimal solution of $LP''_2$.
Then for each prime set $P\in \mathcal{P}$, either \eqref{eq:Nucleolus_LP2v3_1} or \eqref{eq:Nucleolus_LP2v3_3} is tight for $(\boldsymbol{y}^*,\epsilon^*)$.
\end{lemma}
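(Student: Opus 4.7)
The plan is to invoke LP duality and complementary slackness on $LP''_2$. Introduce dual multipliers $\mu_{P,Q}\ge 0$ for each inequality in \eqref{eq:Nucleolus_LP2v3_1}, $\nu_P\ge 0$ for each inequality in \eqref{eq:Nucleolus_LP2v3_3}, and $\lambda\in\mathbb{R}$ for the equality \eqref{eq:Nucleolus_LP2v3_2}, with the convention $\nu_P=0$ for $P\notin\mathcal{P}_{\min}$. Since the primal variables $y_P$ ($P\in\mathcal{P}$) and $\epsilon$ are free, the dual reads
\begin{align*}
\min\ &\lambda\\
\text{s.t.}\ &\sum_{Q:\,P\prec Q}\mu_{P,Q}\,-\,\sum_{R:\,R\prec P}\mu_{R,P}\,+\,[n(P)-1]\lambda\,-\,\nu_P\,=\,0,\quad\forall P\in\mathcal{P},\\
&\sum_{P\prec Q}\mu_{P,Q}\,+\,\sum_{P\in\mathcal{P}_{\min}}\nu_P\,=\,1,\quad \mu\ge 0,\ \nu\ge 0.
\end{align*}

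The crucial step is to show $\lambda>0$ for every feasible dual solution. Summing the per-$P$ equation over all $P\in\mathcal{P}$, each $\mu_{X,Y}$ contributes $+1$ at $P=X$ and $-1$ at $P=Y$, so the $\mu$-sums cancel, leaving $\lambda\sum_{P\in\mathcal{P}}[n(P)-1]=\sum_{P\in\mathcal{P}_{\min}}\nu_P\ge 0$, whence $\lambda\ge 0$. If $\lambda=0$, then every $\nu_P=0$, and the per-$P$ equation reduces to flow conservation $\sum_{Q}\mu_{P,Q}=\sum_{R}\mu_{R,P}$ on the DAG whose arcs are $P\to Q$ whenever $P\prec Q$. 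Processing vertices in topological order (the sources are precisely the elements of $\mathcal{P}_{\min}$, which have no in-arcs and hence must have all out-flow $0$; this propagates through the DAG), and using $\mu\ge 0$, one concludes $\mu\equiv 0$, contradicting $\sum\mu+\sum\nu=1$. Therefore $\lambda>0$.

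Fix any $P\in\mathcal{P}$. Since $n(P)\ge 2$ we have $[n(P)-1]\lambda>0$, and rearranging the dual equation for $y_P$ yields
\[
\sum_{R:\,R\prec P}\mu_{R,P}\,+\,\nu_P\;=\;\sum_{Q:\,P\prec Q}\mu_{P,Q}\,+\,[n(P)-1]\lambda\;>\;0.
\]
Hence some $\mu_{R,P}>0$ with $R\prec P$, or else $P\in\mathcal{P}_{\min}$ with $\nu_P>0$. Complementary slackness turns each positive multiplier into a tight primal constraint: the former is a tight inequality of \eqref{eq:Nucleolus_LP2v3_1} with $P$ on its right-hand side, while the latter is the tight instance of \eqref{eq:Nucleolus_LP2v3_3} for $P$. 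This is precisely the conclusion of the lemma.

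The principal obstacle is establishing $\lambda>0$ without relying on the numeric value of $\epsilon^*$: strong duality only identifies $\lambda^*$ with $\epsilon^*$, which one cannot a priori exclude from being zero, and a perturbation argument in the primal would require carefully constructing a feasible improving direction from the slack at some $P$ violating the conclusion. Routing the argument through the dual and the acyclicity of $(\mathcal{P},\prec)$ sidesteps this difficulty and, as a byproduct, forces $\epsilon^*>0$, so the second round of Maschler's scheme always strictly improves upon the first.
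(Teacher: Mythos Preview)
Your proof is correct, but it takes a genuinely different route from the paper. The paper gives a two-line perturbation argument: if some $P_0$ had no tight constraint of type \eqref{eq:Nucleolus_LP2v3_1} (with $P_0$ on the right) and no tight \eqref{eq:Nucleolus_LP2v3_3}, decrease $y^*_{P_0}$ by a small $\delta>0$; all inequality constraints remain satisfied while the left side of \eqref{eq:Nucleolus_LP2v3_2} drops below $1$, so scaling $(\boldsymbol{y},\epsilon)$ by a factor $\theta>1$ restores \eqref{eq:Nucleolus_LP2v3_2} and strictly increases $\epsilon$, contradicting optimality. The homogeneity of \eqref{eq:Nucleolus_LP2v3_1}, \eqref{eq:Nucleolus_LP2v3_3}, \eqref{eq:Nucleolus_LP2v3_4} under scaling is what makes this work.

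Your duality argument is longer but also sound: the dual is set up correctly, the telescoping sum over $P$ gives $\lambda\ge 0$, and the acyclic flow-conservation argument cleanly rules out $\lambda=0$. Complementary slackness then delivers exactly the right-hand-side tightness the algorithm needs. One bonus of your approach is that it proves $\epsilon^*=\lambda^*>0$ directly from the structure of $(\mathcal{P},\prec)$, independently of the general fact that Maschler's scheme strictly improves between rounds; the paper's perturbation proof does not yield this (indeed, it would work verbatim even if $\epsilon^*=0$). On the other hand, the paper's argument is entirely elementary and avoids any appeal to duality or to the existence of a dual optimum.
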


\begin{proof}
Assume to the contrary that neither \eqref{eq:Nucleolus_LP2v3_1} nor \eqref{eq:Nucleolus_LP2v3_3} is tight for $P_0 \in \mathcal{P}$.
For a constant $\delta>0$ small enough, define $\boldsymbol{y}^{\star}$ by  $y^{\star}_{P_0}=y^*_{P_0}-\delta$ and $y^{\star}_{P}=y^*_{P}$ for any $P\in \mathcal{P}\backslash \{P_0\}$.
Then $(\boldsymbol{y}^{\star},\epsilon^*)$ satisfies \eqref{eq:Nucleolus_LP2v3_1}, \eqref{eq:Nucleolus_LP2v3_3} and \eqref{eq:Nucleolus_LP2v3_4}, but $\sum_{P\in \mathcal{P}} [n (P)-1] y^{\star}_P<1$.
Hence $(\boldsymbol{y}^{\star},\epsilon^*)$ can be scaled up with a constant $\theta>1$ such that $(\theta \boldsymbol{y}^{\star},\theta \epsilon^*)$ satisfies \eqref{eq:Nucleolus_LP2v3_1}-\eqref{eq:Nucleolus_LP2v3_4}.
However, this contradicts the optimality of $(\boldsymbol{y}^*,\epsilon^*)$.
\end{proof}

Based on the lemma above, we derive a combinatorial algorithm for solving $LP''_2$.

\begin{algorithm}[H]
\renewcommand{\thealgorithm}{}
\setstretch{1.15}
\caption{A combinatorial algorithm for $LP''_2$}
\label{alg}

\begin{algorithmic}[1]
\State $k=0$
    \While{$\mathcal{P}\not=\emptyset$}
    \State $k \gets k+1$
    \State $\mathcal{P}_{\min} \gets $ the set of minimal prime sets in $(\mathcal{P},\prec)$
    \State $y_P \gets k \epsilon$ for any $P\in \mathcal{P}_{\min}$
    \State $\mathcal{P} \gets \mathcal{P}\backslash \mathcal{P}_{\min}$
    \EndWhile
\State Since $y_P=k_P \epsilon$ where $k_P$ is an integer for any $P\in \mathcal{P}$, solving $\epsilon$ in \eqref{eq:Nucleolus_LP2v3_2} gives the unique optimal solution of $LP''_2$.
\end{algorithmic}
\end{algorithm}

The algorithm above implies that $LP''_2$ has a unique optimal solution, which yields the nucleolus of $\Gamma_G$.
Now we are ready to present our main result.

\begin{theorem}
\label{thm:Nucleolus}
Let $\Gamma_G=(N,\gamma)$ be an arboricity game with a nonempty core.
The nucleolus of $\Gamma_G$ can be computed in $O(n^4 m \log\frac{n^2}{m})$.
\end{theorem}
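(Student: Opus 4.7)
The plan is to execute Maschler's scheme with the reformulations developed in this section and to bound the time contributed by each round. Since the core is nonempty, Theorem \ref{thm:CoreNonempty} applies and, as noted after Lemma \ref{thm:LeastCore_Alternative}, the first-round optimum satisfies $\epsilon'_1 = 0$, so that $P'_1(\epsilon'_1) = \mathcal{C}(\Gamma_G)$. Hence no actual linear program need be solved in the first round; it suffices to identify the structural consequences of being in the core, namely the non-prime set and the partial order on prime sets.

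The second and, as I will argue, final round reduces to $LP''_2$. I would execute it in the following order. First, compute the prime partition $\mathcal{E} = \mathcal{P}\cup\{E_0\}$ of $G$ in $O(n^4 m \log\frac{n^2}{m})$ by Theorem \ref{thm:PrimePartitionComplexity}. Second, determine the ancestor relation and hence the poset $(\mathcal{P},\prec)$ in $O(n^2 m)$ by Lemma \ref{thm:PrimeSetRelation}. Third, write down $LP''_2$; it has $O(n)$ variables and $O(n^2)$ constraints, since $|\mathcal{P}| = O(n)$. Fourth, run the combinatorial algorithm displayed just before the theorem: peeling off minimal elements of $(\mathcal{P},\prec)$ level by level expresses each $y_P$ as $k_P\epsilon$ for an integer $k_P$, and the single normalization $\sum_{P\in\mathcal{P}}[n(P)-1]y_P = 1$ determines $\epsilon$ uniquely. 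All of this is dominated by the prime-partition step, so the total cost is $O(n^4 m\log\frac{n^2}{m})$.

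It remains to argue that the resulting $\boldsymbol{y}^*$ yields the nucleolus. By Lemmas \ref{thm:EquivalenceSufficiency} and \ref{thm:EquivalenceNecessity}, together with the equivalence of \eqref{eq:Nucleolus_LP2v2_2}/\eqref{eq:Nucleolus_LP2v3_2}, \eqref{eq:Nucleolus_LP2v2_3}/\eqref{eq:Nucleolus_LP2v3_3} and \eqref{eq:Nucleolus_LP2v2_4}/\eqref{eq:Nucleolus_LP2v3_4}, the optimum of $LP''_2$ coincides with that of $LP'_2$ under the correspondence $\boldsymbol{y}\leftrightarrow\boldsymbol{x}$ given by Corollary \ref{thm:NonPrimeSet} and Lemma \ref{thm:PrimeSet_PartialOrder}. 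Since the combinatorial algorithm produces a uniquely determined $\boldsymbol{y}^*$, the corresponding $\boldsymbol{x}^*\in \mathbb{R}^E$ (constant on each block of the prime partition) is the unique optimum of $LP'_2$, i.e.\ $P_2(\epsilon_2)$ is a singleton. Consequently Maschler's scheme terminates at the second round, and $\boldsymbol{x}^*$ is the nucleolus of $\Gamma_G$.

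The main obstacle, and the reason the argument is clean only after all of Sections \ref{sec:PrimePartition} and \ref{sec:Nucleolus} have been developed, is the passage from the exponential-size $LP'_2$ to the polynomial-size $LP''_2$: one must know that edges inside a prime set are forced to share a common value in every core allocation (Lemma \ref{thm:PrimeSet_PartialOrder}), that the tight spanning trees $\mathcal{T}_0$ are precisely the ones saturating every prime set (Lemma \ref{thm:TightSpanningTree}), and that the inequality $y_P+\epsilon\le y_Q$ for $P\prec Q$ faithfully encodes \eqref{eq:Nucleolus_LP2v2_1} via the edge-exchange construction of Lemma \ref{thm:EdgeExchange} and the cycle-expansion argument of Lemma \ref{thm:MinimalViolatingPrimeSet}. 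Once these are in hand, the theorem is essentially a bookkeeping statement: assemble the steps above and sum the complexities.
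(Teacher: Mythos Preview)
Your proposal is correct and follows essentially the same approach as the paper: compute the prime partition, determine the ancestor relation, run the combinatorial peeling algorithm for $LP''_2$, and observe that the prime-partition computation dominates the running time. Your write-up is in fact more explicit than the paper's own proof about why Maschler's scheme terminates at the second round (uniqueness of $\boldsymbol{y}^*$ forcing $P_2(\epsilon_2)$ to be a singleton), though you could cite Lemma~\ref{thm:TightConstraint} directly to justify that the algorithm's output is indeed the unique optimum of $LP''_2$ rather than merely one optimum.
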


\begin{proof}
The prime partition can be computed in $O(n^4 m \log\frac{n^2}{m})$.
The ancestor relation of prime sets can be determined in $O(n^2 m)$.
The algorithm above takes $O(n)$ iterations and each iteration requires $O(n^2)$ time to determine the minimal prime sets in the remaining partially ordered set.
Hence the algorithm above ends in $O(n^3)$.
Notice that the prime partition computation dominates the computing time of all other parts.
Thus the nucleolus can be computed in $O(n^4 m \log\frac{n^2}{m})$.
\end{proof}

\section{Concluding remarks}
\label{sec:Conclusion}

This paper provides an efficient algorithm for computing the nucleolus of arboricity games when the core is not empty.
Notice that a variant of the arboricity game arises when the cost of each coalition is defined by fractional arboricity instead of arboricity.
Despite a new cost function, our algorithm for computing the nucleolus remains valid since the variant always has a nonempty core.

This paper also offers a graph decomposition built on the densest subgraph lattice.
The prime partition decomposes the edge set of a graph into a non-prime set and a number of prime sets, where prime sets correspond to minimal densest minors.
Notice that the non-prime set can be further decomposed following the same procedure for defining prime sets.
Therefore, the prime partition indeed provides a hierarchical graph decomposition analogous to the core decomposition.

\section*{Acknowledgments}
We are grateful to all reviewers for their comments and suggests which greatly improved the presentation of this work.
The first author is supported in part by the National Natural Science Foundation of China (No.\,12001507) and the Natural Science Foundation of Shandong (No.\,ZR2020QA024).
The second author is supported in part by the National Natural Science Foundation of China (Nos.\,11871442 and 11971447).

\bibliographystyle{abbrv}
\bibliography{reference}

\end{document}